\crefname{theorem}{Theorem}{Theorems}
\crefname{proposition}{Proposition}{Propositions}
\crefname{lemma}{Lemma}{Lemmas}
\crefname{exmp}{Example}{Examples}
\crefname{corollary}{Corollary}{Corollaries}
\crefname{claim}{Claim}{Claims}
\crefname{remark}{Remark}{Remarks}
\crefname{section}{Section}{Sections}
\crefname{definition}{Definition}{Definitions}
\crefname{example}{Example}{Examples}
\crefname{table}{Table}{Tables}
\crefname{appendix}{Appendix}{Appendices}
\crefname{equation}{Equation}{Equations}
\crefname{algorithm}{Algorithm}{Algorithms}
\crefname{subsection}{Subsection}{Subsections}
\newcommand{\argmin}{\mathop{\rm argmin}}
\newcommand{\weight}{\gamma}
\newcommand{\change}{{\rm change}}
\newcommand{\true}{{\rm true}}
\newcommand{\false}{{\rm false}}
\newcommand{\bs}[1]{\boldsymbol{#1}}
\newcommand{\primal}{x}
\newcommand{\pray}{r}
\newcommand{\dual}{y}
\newcommand{\duall}{z}
\newcommand{\dray}{z}
\newcommand{\nontriv}{{\rm nontriv}}
\newcommand{\size}{{\rm size}}
\newcommand{\val}{{\rm VALUE}}
\newcommand{\ot}{\leftarrow}
\renewcommand{\mid}{\,:\,}
\def\iddots{\mathinner{\mkern1mu\raise\p@
		\hbox{.}\mkern2mu\raise4\p@\hbox{.}\mkern2mu
		\raise7\p@\vbox{\kern7\p@\hbox{.}}\mkern1mu}}
\newtheorem{theorem}{Theorem}[section]
\newtheorem{lemma}[theorem]{Lemma}
\newtheorem{proposition}[theorem]{Proposition}
\newtheorem{definition}[theorem]{Definition}
\newtheorem{example}[theorem]{Example}
\newtheorem{remark}[theorem]{Remark}
\newtheorem{claim}[theorem]{Claim}
\title{A Combinatorial Certifying Algorithm\\ for Linear Programming Problems\\ with Gainfree Leontief Substitution Systems}
\author{Kei Kimura and Kazuhisa Makino}
\begin{document}
	\maketitle
	
	\begin{abstract}
		Linear programming (LP) problems with gainfree Leontief substitution systems have been intensively studied in economics and operations research, and include the feasibility problem of a class of Horn systems, which arises in, e.g., polyhedral combinatorics and logic.
		This subclass of LP problems admits a strongly polynomial time algorithm, where 
		devising such an algorithm for general LP problems is one of the major theoretical open questions in mathematical optimization and computer science.
		Recently, much attention has been paid to devising certifying algorithms in software engineering, since those algorithms enable one to confirm the correctness of outputs of programs with simple computations.
		In this paper, we provide the first combinatorial (and strongly polynomial time) certifying algorithm for LP problems with gainfree Leontief substitution systems.
		As a by-product, we answer affirmatively an open question whether the feasibility problem of the class of Horn systems admits a combinatorial certifying algorithm.
	\end{abstract}
	
	\section{Introduction}
	\label{sec:introduction}
	Linear programming (LP) problems have been at the heart of mathematical optimization, and various algorithms have been proposed to solve LP problems such as the simplex method, the ellipsoid method, and the interior point method~\cite{Van20}. 
	Devising a strongly polynomial time algorithm for LP problems is one of the major theoretical open questions in mathematical optimization and computer science. Furthermore, 
	great efforts have been made to construct strongly polynomial time algorithms to solve LP problems with additional properties such as LP problems such that each constraint and the objective function has at most two nonzero coefficients~\cite{Meg83}, combinatorial LP problems~\cite{Tar86}, and LP problems formulating the maximum generalized flow problem~\cite{OlV20}, 
	as they arise in theory and practice.
	In this paper, we focus on LP problems for 
	Leontief substitution systems. 
	A matrix $A$ is called \emph{Leontief} if each column of $A$ has at most one positive element.\footnote{Leontief matrices defined in this paper are sometimes called \emph{pre-Leontief} matrices in the literature.}
	A linear system of the form 
	\begin{align}
		\begin{aligned}
			\label{eq:LP0}
			A\bs{\primal} &= \bs{b}\\
			\bs{\primal} &\ge \bs{0}.
		\end{aligned}    
	\end{align}
	is called a \emph{Leontief substitution system} if $A$ is Leontief and $\bs{b}$ is nonnegative.
	Leontief matrices and systems were first studied in 1950s within the context of input-output
	analysis in economics (for which Wassily Leontief was awarded the Nobel Prize in economics in 1973; see, Leontief~\cite{Leo51} and Dantzig~\cite{Dan55} for example), 
	and have attracted much attention in economics and operations research.
	There exists a line of research on algorithms for LP problems with Leontief substitution systems; an ${\rm O}(m^3n\log n)$ strongly polynomial algorithm for a special case where $A$ has no more than two nonzero elements in any column~\cite{AdC91}, an ${\rm O}(m^2n)$ strongly polynomial algorithm for a special case of gainfree Leontief substitution systems~\cite{JMR92}, and a simplex algorithm~\cite{CGS97}, where $m$ and $n$ respectively denote the number of equations and variables in \eqref{eq:LP0}.
	The gainfree property will be defined in \cref{sec:preliminaries}; it intuitively says that the corresponding network, which will also be defined later, has no \emph{gain} of flow.
	
	We also remark that Leontief substitution systems play an important role in polyhedral combinatorics and logic.
	For example, Horn systems are related to Leontief substitution systems.
	A matrix $A$ is called \emph{Horn} if each row of $A$ has at most one positive element, and a linear system $A\bs{\dual}\le \bs{c}$ with Horn matrix $A$ is called \emph{Horn}.
	Thus, Horn matrices are exactly transposed Leontief matrices, 
	and the feasibility for Horn systems  coincides with that of the dual of LP problems with Leontief substitution systems.
	The feasibility of Horn systems
	was inspired by the Horn Boolean satisfiability (SAT) problem,  a well-studied subclass of SAT in logic and computer science.
	Horn systems have been intensively studied in the literature~\cite{Glo64,CoA72,MaD02} because they have applications in diverse areas such as logic programs, econometrics, program verification, and lattice optimization.
	Subclasses of Horn systems called difference constraint (DC), unit Horn, and unit-positive Horn systems are also extensively investigated, 
	where a matrix $A$ is \emph{difference} if it is a $\{0,\pm 1\}$-matrix having one +1 and one -1 in each row~\cite{For56,Bel58,Moo59,Gol95}, 
	\emph{unit Horn} if it is a Horn $\{0,\pm 1\}$-matrix~\cite{SuW11,ChS13}, and \emph{unit-positive Horn} if it is an integral Horn matrix with the positive elements being one~\cite{UvG88,SuW11}\footnote{Here, matrix $A$ is unit-positive if and only if $A$ is integral and gainfree in \cite{JMR92}, since in \cite{JMR92} the positive element of $A$ is assumed to be one.}.
	We note that unit and unit-positive Horn systems are also called Horn constraint and extended Horn, respectively.
	By definition, difference matrices are unit Horn, and unit Horn matrices are unit-positive.
	All these matrices are transposed gainfree Leontief matrices, which will be discussed in the next section.
	The feasibility problem is combinatorially solvable in ${\rm O}(mn)$ for DC systems~\cite{For56,Bel58} and 
	${\rm O}(mn^2)$ for unit and unit-positive Horn systems~\cite{ChS13}, where $m$ and $n$ respectively denote the number of inequalities and variables in the system.
	We remark that the feasibility coincides with the integer feasibility for all such Horn systems, where the \emph{integer feasibility} is to ask the existence of an integer vector satisfying a given system. 
	However, this is not true for general Horn systems, for which the integer feasibility is known to be NP-complete~\cite{Lag85}.   
		
	In this paper, we study certifying  algorithms for LP problems with gainfree Leontief substitution systems.
	Recently, much attention has been paid to certifying algorithms in software engineering; see~\cite{MMN11} for a survey.
	Intuitively, an algorithm is called \emph{certifying} if it produces not only an answer but also a certificate with which we can easily confirm that the answer is correct.
	For the shortest $s$-$t$ path problem with positive edge length, 
	the potential of vertices (i.e., distances from $s$) is a certificate of a shortest $s$-$t$ path.
	Certifying algorithms have great advantages in practice because many commercial programs are reported to contain bugs~\cite{MMN11}.
	Certifying algorithms have been proposed for various problems in mathematical optimization and computer science~\cite{MNN99,DFK03,KMM06,KaN09,Sch13,CDH13,GeT15,CGS16,SuW17,MNS17,BGM22}.
	
	Let us briefly summarize certifying algorithms related to gainfree Leontief substitution systems.
	Standard LP solvers output a certificate of the optimality of an optimal solution; however, no combinatorial and strongly polynomial time algorithm for general LP problems is known and algorithms that work for special types of LP problems have been extensively studied.
	We first note that the well-known Bellman-Ford algorithm for the shortest path problem allowing negative edge length can be regarded as a certifying algorithm for the feasibility of DC systems.
	In fact, the algorithm computes a feasible solution which correspond to the potential of the associated graph $G$ if it is feasible, and a minimal infeasible subsystem that corresponds to a negative cycle in $G$ if it is infeasible.
	This result was extended to the 
	unit-two-variable-per-inequality (UTVPI) systems, where a system is called \emph{unit-two-variable-per-inequality} if each inequality is of the form $\pm x_i \pm x_j \le c$ for some integer $c$.
	Min\'{e}~\cite{Min06} proposed a certifying algorithm for the feasibility of UTVPI systems by transforming such systems to DC systems.
	Therefore, the feasibility of the systems admits combinatorial ${\rm O}(mn)$ certifying algorithms.
	We note that the feasibility coincides with the integer feasibility for DC systems while it is not the case for UTVPI systems.
	A combinatorial ${\rm O}(mn+n^2\log n)$-time certifying algorithm for the integer feasibility of UTVPI systems were proposed by 
	Lahiri and Musuvathi~\cite{LaM05}.
	Gupta~\cite{Gup14} reported that a certifying algorithm exists for the feasibility of unit Horn systems with nonpositivity constraints on variables,\footnote{The current form of the algorithm and the proofs of its validity in~\cite{Gup14} contains several flaws; see a detailed discussion in Section~\ref{sec:main-algorithms}.} and 
	mentioned that it is open whether the feasibility problem admits certifying algorithms when the systems are unit Horn (without nonpositivity constraints) and unit-positive Horn~\cite{Gup14}.
	For LP problems with gainfree Leontief substitution systems, Jeroslow et al. proposed a combinatorial ${\rm O}(m^2n)$-time certifying algorithm when it has an optimal solution~\cite{JMR92}. 
	However, it remains open whether such LP problems admit a combinatorial and strongly polynomial time certifying algorithm when it has no optimal solution.
	
	\paragraph{Our contribution}
	In this paper, we propose a combinatorial ${\rm O}(m^3n)$-time certifying algorithm for LP problems with gainfree Leontief substitution systems
	when the LP problem has no optimal solution, i.e., when 
	it is unbounded or infeasible.
	This together with the algorithm by Jeroslow et al. provides a combinatorial ${\rm O}(m^3n)$-time fully certifying algorithm for LP problems with gainfree substitution systems.
	As a corollary of our result, we resolve the open problem for the feasibility and the integer feasibility for unit-positive Horn systems.
	
	Certifying infeasibility draws much attention in, e.g., the field of logic and it is open how to make existing successive-approximation type combinatorial algorithms (e.g., \cite{Glo64,JMR92,ChS13}) certifying for a fundamental class of unit Horn systems.
	In successive-approximation type algorithms, the values of variables are iteratively updated according to the constraints.
	Indeed, for DC systems, it is sufficient to store the previous edge (or constraint) that causes the value update of a variable.
	However, in unit Horn systems, this is not enough: We have to store all the history of the value updates of the variables\footnote{This seems the essential error in~\cite{Gup14}.}.
	Our algorithm stores in which iteration the values of variables are updated and how the values can be derived by the given constraints.
	This enables it to compute a certificate of dual infeasibility.
	Our algorithm also introduces a symbol representing an ``arbitrary large'' number so that it can compute a certificate of primal infeasibility.
	
	Our algorithm is based on the hypergraph representation of Leontief substitution systems introduced by Jeroslow et al.~\cite{JMR92}, 
	and computes a certificate based on Farkas' lemma, called 
	a Farkas' certificate, which was also used by Gupta~\cite{Gup14} for unit Horn systems with nonpositivity constraints\footnote{Gupta dealt with a linear system of the form $A^Ty \ge c, y \ge 0$.
		Since $y$ is a feasible solution of $A^Ty \ge c, y \ge 0$ if and only if it is a feasible solution of $A^T(-y) \le -c, -y \le 0$, 
		Gupta's algorithm can be modified so that it deals with the feasibility of $A^Ty \le c, y \le 0$ with unit Horn $A^T$.}.
	Moreover, our algorithm for the dual feasibility can be seen as an extension of the Bellman-Ford algorithm for the feasibility of DC systems.
	In fact, if a DC system is given, then our algorithm finds a feasible solution if it is feasible, and a minimal infeasible subsystem that corresponds to a negative cycle in the associated graph if it is infeasible, which is the same as the Bellman-Ford algorithm.
		
	As a generalization of the integer feasibility of unit-positive Horn systems, we consider the one of the dual of gainfree Leontief substitution systems.
	We first point out that it is NP-complete. 
	We then propose an integer version of our algorithm for the feasibility and show that it requires exponential time in the worst case.
	We also consider the integer feasibility of the (primal) gainfree Leontief substitution systems and show that it is NP-complete. These results provide a threshold between general gainfree Leontief substitution systems and unit-positive Horn systems for integer feasibility of both primal and dual LP problems.
	
	Recall that it is known the feasibility of primal and dual LP problems with gainfree Leontief substitution systems can be solved combinatorially in polynomial time, which is certifying for feasibility.
	However, certifying infeasibility was open.
	One might think that we can obtain a combinatorial certifying algorithm for these feasibility problems by incorporating the idea of two-phase simplex method and using a non-certifying combinatorial algorithm for these problems.
	The idea of two-phase simplex method is to transform the feasibility into an LP problem which always has an optimal solution, and hence the existing algorithm only certifying for feasible problems seems to apply.
	Actually, we confirm that this idea works for the feasibility of primal LP problems since the transformed LP problem is again an LP problem with gainfree Leontief substitution systems.
	However, 
	the constraint system of the transformed LP problem for the feasibility of a dual LP problem is no more gainfree Leontief substitution system, and thus the existing combinatorial algorithm does not apply.
	See Subsection \ref{subsec:two-phase-does-not-work} for details.
	
	\paragraph{Related work}
	Certifying infeasibility (unsatisfiability) of given constraints has attracted much attention in the literature, especially in proof theory.
	For example, in SAT, resolution refutation provides a certificate of the infeasibility of an unsatisfiable CNF formula, and it has been extensively studied in logic and theoretical computer science.
	However, the length of resolution refutation is exponential in the input size in the worst case.
	For integer linear systems, cutting plane methods provide a certificate of the infeasibility for integer linear systems, which has also exponential size in the worst case.
	See \cite{CGS17} for an exponential time certifying algorithm for general mixed integer linear programming problems.
	Cutting plane refutation for unit Horn systems has also been investigated \cite{KWS19}.
	Finally, we remark that the primal-dual methods for optimization problems are certifying (also for feasibility).
	
	\paragraph{Outline}
	The rest of the paper is organized as follows.
	Section~\ref{sec:preliminaries} formally defines our problem and introduces the same hypergraph representation of Leontief substitution systems as in~\cite{JMR92}.
	Section~\ref{sec:main-algorithms} provides our main algorithm, i.e., a combinatorial certifying algorithm for LP problems with gainfree Leontief substitution systems.
	Section~\ref{sec:discussions} discusses the integer versions of the primal and the dual LP problems and the two-phase method.
	Section~\ref{sec:conclusion} concludes the paper.
	
	\section{Preliminaries}
	\label{sec:preliminaries}
	
	Let 
	$\mathbb{R}$, $\mathbb{R}_{+}$, and $\mathbb{R}_{++}$ denote the sets of 
	reals, nonnegative reals, and positive reals, respectively.
	For positive integers $m$ and $n$, 
	a matrix $A \in \mathbb{R}^{m \times n}$ is called \emph{Leontief} if each column contains at most one positive entry.
	In this paper, it is always assumed that the positive elements of $A$ are all ones unless otherwise stated, since it is sufficient for our purpose as stated below.
	Let $A \in \mathbb{R}^{m \times n}$ be an $m \times n$ matrix,
	and let $\bs{b} \in \mathbb{R}^m$ be a vector of dimension $m$.
	A set of linear inequalities 
	\begin{align*}
		A\bs{x} = \bs{b}\ \  \text{and}\ \ \bs{x} \in \mathbb{R}^n_{+}
	\end{align*}
	is called a \emph{Leontief substitution system} if 
	$A$ is Leontief and $\bs{b} \ge \bs{0}$.
	
	In this paper, we consider the following linear programming (LP) problem: 
	\begin{align}
		\label{eq:LP}
		\begin{array}{ll}
			\rm{minimize} & \bs{c}^T\bs{\primal}\\
			\rm{subject\ to} & A\bs{\primal} = \bs{b}\\
			& \bs{\primal} \in \mathbb{R}^n_+,
		\end{array}
	\end{align}
	where the constraint is a Leontief substitution system and $\bs{c} \in \mathbb{R}^n$.
	As stated above, 
	we assume throughout the paper that the positive elements of $A$ are all ones unless otherwise stated, 
	since otherwise it can be obtained by scaling the variables in the LP problem~\eqref{eq:LP} with a Leontief substitution system.
	
	We particularly focus on the subclass of LP with Leontief substitution systems satisfying the \emph{gainfree} property.
	To define gainfreeness, it is convenient to introduce a hypergraph representation~\cite{JMR92} of Leontief substitution systems.
	This representation is also used to state our algorithms.
	
	A hypergraph $\mathcal{H}$ is an ordered pair $\mathcal{H} = (V,\mathcal{E})$, where $V$ is a finite set called a vertex set and $\mathcal{E}$ is a set of hyperarcs.
	A hyperarc $E \in \mathcal{E}$ is an ordered pair $(H(E),T(E))$ of its head and tail sets, 
	where $H(E),T(E) \subseteq V$ and $H(E) \cap T(E) =\emptyset$.
	In our use, $|H(E)|$ is always at most one, i.e., $|H(E)| \le 1$.
	Hence, we denote $H(E)$ by $h(E)$, and when $|h(E)| = 1$, we identify $h(E)$ with the unique element in $h(E)$, e.g., if $v \in h(E)$, then we write $h(E)=v$.
	
	Now, we explain how to define an associated hypergraph $\mathcal{H} = (V,\mathcal{E})$ from a given LP problem with a Leontief substitution system \eqref{eq:LP}.
	For a positive integer $k$, let $[k] = \{1, \dots, k\}$.
	Let $V = \{ v_i \mid i \in [m] \}$, where $v_i$ corresponds to the $i$th row of $A$ in \eqref{eq:LP} for $i \in [m]$, 
	and let $\mathcal{E} = \{ E_j \mid j \in [n] \}$, where for each $j \in [n]$ a hyperarc $E_j$ is defined as $h(E_j) = v_i$ if $A_{ij} = 1$ for some $i \in [m]$ and $h(E_j) = \emptyset$ otherwise (i.e., $A_{ij} \le 0$ for all $i \in [m]$), and $T(E_j) = \{ v_i \in V \mid A_{ij} < 0 \}$.
	Note that for each $j \in [n]$ hyperarc $E_j$ corresponds to variable $x_j$ in \eqref{eq:LP}.
	We also associate a length function $\ell:\mathcal{E} \rightarrow \mathbb{R}$ to the hyperarc set $\mathcal{E}$, where $\ell(E_j) = c_j$ for each $E_j \in \mathcal{E}$.
	Moreover, we associate a positive value to each element of the tails of the hyperarcs in $\mathcal{E}$, namely, $\weight:\bigcup_{j \in [n]}(\{ E_j \} \times T(E_j)) \rightarrow \mathbb{R}_{++}$ defined as $\weight(E_j,v_i) = -A_{ij}\,(>0)$ for each $E_j \in \mathcal{E}$ and $v_i \in T(E_j)$.
	Note that the hypergraph is defined by matrix $A$ and vector $\bs{c}$ (and $\bs{b}$ is irrelevant).
	
	\begin{example}
		\label{ex:hypergraph}
		For the following input data, the associated hypergraph is drawn in \cref{fig:hypergraph2-1}.
		
		\begin{align}
			\label{eq:ex-hypergraph-input}
			A=
			\begin{pmatrix}
				-(1/2) & 0 & 1 & 1 & 0 \\
				1 & -(1/3) & 0 & 0 & 0 \\
				0 & 1 & -9 & 0 & 1 \\
				-(1/3) & -3 & -1 & 0 & 0 \\
			\end{pmatrix}\ and \ 
				c=\begin{pmatrix}
					-6\\5\\3\\-4\\2
				\end{pmatrix}.
			\end{align}
			\begin{figure}[t]
				\centering
				\includegraphics[scale=0.6, bb=0 0 200 150]{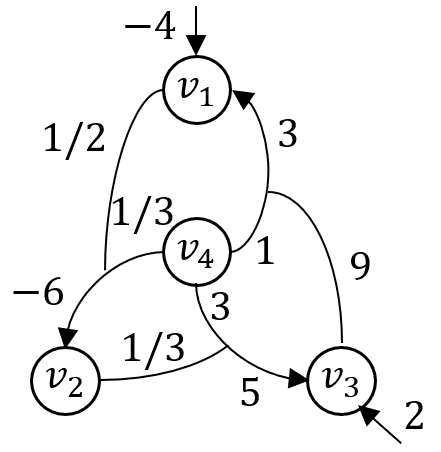}
				\caption{The hypergraph representation corresponding to the input~\eqref{eq:ex-hypergraph-input}}
				\label{fig:hypergraph2-1}
			\end{figure}
			\if0 
			\begin{align}
				\label{eq:ex-hypergraph}
				A^Ty \le c \equiv \left\{
				\begin{array}{rl}
					-(1/2)y_1 + y_2 - (1/3)y_4 &\le -6\\
					y_3 - (1/3)y_2 - 3y_4 &\le 5\\
					y_1 - 9y_3 - y_4 &\le 3\\
					y_1&\le -4\\
					y_3&\le 2.\\
				\end{array}
				\right.
			\end{align}
			\fi 
		\end{example}
		
		A \emph{directed path} in hypergraph $\mathcal{H}$ from vertex $v_1$ to $v_{k+1}$ is defined by a nonempty sequence $$v_1E_1v_2E_2v_3,\dots, E_k v_{k+1},$$ whose terms are alternatively vertices and hyperarcs, 
		with no intermediate vertex or hyperarc repeated, 
		such that $v_{i+1} = h(E_i)$ and $v_{i} \in T(E_i)$ for $i = 1, \dots, k$.
		A directed path from vertex $v_1$ to $v_{k+1}$ is a \emph{directed cycle} if $v_1 = v_{k+1}$.
		
		Now, we are ready to define gainfreeness.
		
		\begin{definition}[Gainfreeness]
			Let $v_1E_1v_2E_2v_3,\dots, E_k v_{k+1}$ be a directed cycle, where $v_1= v_{k+1}$. 
			The gain of this directed cycle is defined by
			\begin{align*}
				\frac{1}{\prod_{i=1}^k \gamma(E_i,v_i)}.
			\end{align*}
			We term a Leontief substitution system (and its defining matrix) \emph{gainfree} if the gain of every directed cycle in the associated hypergraph is at most one.
		\end{definition}
		
		From definition, 
		unit and unit-positive Horn matrices are transpose of gainfree Leontief matrices.
		
		\begin{example}
			In \cref{ex:hypergraph}, 
			the unique directed cycle of the hypergraph representation is
			$v_1E_1v_2E_2v_3E_3v_1$, where each $E_i$ corresponds to the $i$th inequality.
			The gain of this cycle is $1/(1/2 \cdot 1/3 \cdot 9) = 2/3 \le 1$.
			Hence, matrix $A$ in \eqref{eq:ex-hypergraph-input} is gainfree.
		\end{example}
		
		Now, we recall some notion from LP theory.
		A vector $\bs{x} \in \mathbb{R}^n_{+}$ is called a \emph{feasible solution} of \eqref{eq:LP} if it satisfies the inequalities in \eqref{eq:LP}.
		An LP problem is \emph{feasible} if it has a feasible solution, and \emph{infeasible} otherwise.
		A vector $\bs{x} \in \mathbb{R}^n_{+}$ is called an \emph{optimal solution} of \eqref{eq:LP} if it is feasible and $\bs{c}^T\bs{x} \le \bs{c}^T\bs{x}'$ for any feasible solution $\bs{x}'$.
		When an LP problem has an optimal solution $\bs{x}$, 
		the objective value $\bs{c}^T\bs{x}$ is called an \emph{optimal value}.
		An LP problem is either feasible or infeasible, and when it is feasible either it has an optimal solution or it is unbounded (i.e., its optimal value is not bounded below).
		Since we consider certifying algorithms, we have to produce a certificate in each case.
		To state what constitutes a certificate in each case, 
		we recall the \emph{dual} LP problem of \eqref{eq:LP}: 
		\begin{align}
			\label{eq:dual-LP}
			\begin{array}{ll}
				\rm{maximize} & \bs{\dual}^T\bs{b}\\
				\rm{subject\ to} & \bs{\dual}^TA \leq \bs{c}^T\\
				& \bs{\dual} \in \mathbb{R}^m.
			\end{array}
		\end{align}
		To contrast, the LP problem~\eqref{eq:LP} is called the \emph{primal} LP problem in what follows.

		The following duality theorem of LP is well-known.
		\begin{theorem}[E.g., \cite{Sch98}]
			\label{thm:LP-solution-pattern}
			For the LP problem \eqref{eq:LP} and its dual problem \eqref{eq:dual-LP}, exactly one of the following holds: 
			\begin{description}
				\item[(i)] 
				both \eqref{eq:LP} and \eqref{eq:dual-LP} have feasible solutions whose objective values are the same, 
				\item[(ii)] 
				\eqref{eq:LP} is infeasible, and \eqref{eq:dual-LP} feasible and unbounded, 
				\item[(iii)]  
				\eqref{eq:LP} is feasible and unbounded, and \eqref{eq:dual-LP} is infeasible;
				\item[(iv)] 
				both \eqref{eq:LP} and \eqref{eq:dual-LP} are infeasible.
			\end{description}
		\end{theorem}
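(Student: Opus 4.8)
The plan is to deduce the four-way dichotomy from two classical ingredients, \emph{weak duality} and \emph{Farkas' lemma}, reserving the one substantive point --- equality of the optimal values in case (i), i.e.\ \emph{strong duality} --- for the end. The skeleton is: classify each of \eqref{eq:LP} and \eqref{eq:dual-LP} as infeasible, feasible with an attained optimum, or feasible and unbounded, and then use weak duality and Farkas to show that of the nine combinations only the four listed can occur.

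First I would record weak duality: if $\bs{\primal}$ is feasible for \eqref{eq:LP} and $\bs{\dual}$ is feasible for \eqref{eq:dual-LP}, then
\[
\bs{\dual}^T\bs{b} = \bs{\dual}^T A\bs{\primal} \le \bs{c}^T\bs{\primal},
\]
the equality using $A\bs{\primal}=\bs{b}$ and the inequality using $\bs{\dual}^T A \le \bs{c}^T$ together with $\bs{\primal}\ge\bs{0}$. Hence every dual value bounds every primal value from below and vice versa, so if both problems are feasible then neither is unbounded and each attains an optimum. This already forces case (i) whenever both are feasible and eliminates all combinations in which the two are simultaneously feasible while one is unbounded.

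Next I would treat infeasibility with Farkas' lemma. If \eqref{eq:LP} is infeasible there is $\bs{\dual}_0$ with $\bs{\dual}_0^T A \le \bs{0}^T$ and $\bs{\dual}_0^T\bs{b} > 0$; then, given any dual-feasible $\bar{\bs{\dual}}$, the ray $\bar{\bs{\dual}} + t\,\bs{\dual}_0$ $(t\ge 0)$ remains dual feasible and drives $\bs{\dual}^T\bs{b}\to+\infty$, so a feasible dual must be unbounded, yielding case (ii). Symmetrically, if \eqref{eq:dual-LP} is infeasible the corresponding form of Farkas' lemma gives $\bs{\primal}_0\ge\bs{0}$ with $A\bs{\primal}_0=\bs{0}$ and $\bs{c}^T\bs{\primal}_0<0$, along which any primal-feasible point sends $\bs{c}^T\bs{\primal}\to-\infty$, yielding case (iii). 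The case in which both are infeasible is (iv). A short enumeration against the observations of the previous paragraph confirms that exactly cases (i)--(iv) survive and that they are pairwise disjoint (they are already distinguished by the feasibility/boundedness status of \eqref{eq:LP}, except that (ii) and (iv) are separated by whether \eqref{eq:dual-LP} is feasible).

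The main obstacle is the equality of objective values in case (i). I would obtain it from Farkas' lemma as well: writing $\delta$ for the (finite) primal optimum, the system $A\bs{\primal}=\bs{b}$, $\bs{\primal}\ge\bs{0}$, $\bs{c}^T\bs{\primal}\le\delta-\varepsilon$ is infeasible for every $\varepsilon>0$, and applying Farkas to this augmented system, normalizing, and passing to the limit extracts a dual-feasible $\bs{\dual}^\ast$ with $\bs{\dual}^{\ast T}\bs{b}\ge\delta$; with weak duality this gives $\bs{\dual}^{\ast T}\bs{b}=\delta$, so the dual optimum equals the primal optimum. (Equivalently, running the simplex method on the bounded feasible primal terminates at an optimal basis whose reduced costs furnish a dual-feasible solution of equal value.) Since the statement is quoted from Schrijver~\cite{Sch98}, in the paper one would simply cite it; the above is the route I would take to reprove it.
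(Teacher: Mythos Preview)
The paper does not prove this theorem at all; it is stated with the attribution ``E.g., \cite{Sch98}'' and used as a black box. You correctly recognize this in your final paragraph. Your outline is the standard route (weak duality plus Farkas' lemma), and the classification argument via the nine feasibility/boundedness combinations is fine.

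The one place where your sketch is genuinely loose is the strong-duality step in case~(i). Applying Farkas to the augmented system with right-hand side $\delta-\varepsilon$ gives, after normalizing, a dual-feasible $\bs{\dual}_\varepsilon$ with $\bs{\dual}_\varepsilon^T\bs{b}>\delta-\varepsilon$; this shows $\sup \bs{\dual}^T\bs{b}\ge\delta$, hence $=\delta$ by weak duality, but it does not by itself show that the supremum is \emph{attained}, since $\bs{\dual}_\varepsilon$ varies with $\varepsilon$ and no compactness has been invoked. The clean fix, which avoids limits entirely, is to use the affine form of Farkas directly: since $\bs{c}^T\bs{\primal}\ge\delta$ is a valid inequality for the primal feasible region $\{A\bs{\primal}=\bs{b},\ \bs{\primal}\ge\bs{0}\}$, there exists $\bs{\dual}^\ast$ with $(\bs{\dual}^\ast)^TA\le\bs{c}^T$ and $(\bs{\dual}^\ast)^T\bs{b}\ge\delta$ (this is Corollary~7.1h in Schrijver). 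That single application yields a dual-feasible point of value exactly $\delta$, no limiting needed. Your simplex alternative is also valid but imports more machinery than necessary.
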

		
		We regard a feasible solution as a certificate of feasibility of an LP problem.
		For infeasibility we use the following lemmas.
		
		\begin{lemma}[Farkas' lemma for the primal infeasibility (e.g., \cite{Sch98})]
			\label{lem:Farkas-primal}
			For positive integers $m$ and $n$, 
			let $A \in \mathbb{R}^{m \times n}$ be a matrix and $\bs{b} \in \mathbb{R}^m$ be a vector.
			The linear system 
			\begin{align*}
				\left\{
				\begin{array}{l}
					A\bs{x} = \bs{b},\\
					\bs{x} \in \mathbb{R}_+^n
				\end{array}
				\right.
			\end{align*}
			is infeasible if and only if 
			\begin{align}\label{eq:Farkas-certificate}
				\left\{
				\begin{array}{l}
					\bs{\dray}^TA \le \bs{0},\\
					\bs{\dray}^T\bs{b} > 0,\\
					\bs{\dray} \in \mathbb{R}^m
				\end{array}
				\right.
			\end{align}
			is feasible.
		\end{lemma}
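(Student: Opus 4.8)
The plan is to treat the two implications separately. The implication ``\eqref{eq:Farkas-certificate} feasible $\To$ the primal system infeasible'' is the short one, and it is exactly the direction one needs when $\bs{\dray}$ is used as a certificate. I would argue it by contradiction: if some $\bs{x} \in \mathbb{R}^n_+$ solved $A\bs{x} = \bs{b}$ while $\bs{\dray}$ solved \eqref{eq:Farkas-certificate} simultaneously, then
\begin{align*}
0 < \bs{\dray}^T \bs{b} = \bs{\dray}^T (A\bs{x}) = (\bs{\dray}^T A)\bs{x} \le 0,
\end{align*}
the last inequality combining $\bs{\dray}^T A \le \bs{0}$ with $\bs{x} \ge \bs{0}$, a contradiction. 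Hence the two systems are never simultaneously feasible.

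For the converse ``primal infeasible $\To$ \eqref{eq:Farkas-certificate} feasible'', I would work with the convex cone $C = \{ A\bs{x} \mid \bs{x} \in \mathbb{R}^n_+ \}$ generated by the columns $\bs{a}_1,\dots,\bs{a}_n$ of $A$; primal infeasibility says precisely that $\bs{b} \notin C$. First I would note that $C$ is closed (discussed below), so the Euclidean projection $\bs{p}$ of $\bs{b}$ onto $C$ exists, and $\bs{b} \ne \bs{p}$ since $\bs{b} \notin C \ni \bs{p}$. Setting $\bs{\dray} = \bs{b} - \bs{p} \ne \bs{0}$, the projection inequality $\bs{\dray}^T(\bs{y} - \bs{p}) \le 0$ holds for all $\bs{y} \in C$. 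Next, because $C$ is a cone, I would substitute $\bs{y} = t\bs{y}_0$ for an arbitrary $\bs{y}_0 \in C$ and let $t \to \infty$, forcing $\bs{\dray}^T \bs{y}_0 \le 0$ for every $\bs{y}_0 \in C$; applied to the generators $\bs{a}_j = A\bs{e}_j$ this yields $\bs{\dray}^T A \le \bs{0}$. Then, substituting $\bs{y} = 2\bs{p}$ and $\bs{y} = \tfrac12 \bs{p}$ (both in $C$) pins down $\bs{\dray}^T \bs{p} = 0$, whence $\bs{\dray}^T \bs{b} = \bs{\dray}^T \bs{p} + \lVert \bs{\dray} \rVert^2 = \lVert \bs{\dray} \rVert^2 > 0$. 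Thus $\bs{\dray}$ satisfies \eqref{eq:Farkas-certificate}.

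The main obstacle is the closedness of $C$, on which both the existence of the projection and the strictness $\bs{\dray} \ne \bs{0}$ rest. I would establish it by a Carath\'eodory/Minkowski--Weyl-type argument: every point of $C$ is already a nonnegative combination of a linearly independent subset of the columns, so $C$ is a finite union of closed simplicial cones and is therefore closed. If I wanted to avoid topology altogether --- in keeping with the combinatorial flavour of the paper --- I would instead rewrite the system as a single inequality system $M\bs{x} \le \bs{d}$ with $M = \begin{pmatrix} A \\ -A \\ -I \end{pmatrix}$ and $\bs{d} = \begin{pmatrix} \bs{b} \\ -\bs{b} \\ \bs{0} \end{pmatrix}$, run Fourier--Motzkin elimination to reach the standard inequality form of Farkas' lemma, and recombine the resulting nonnegative multipliers into a single vector $\bs{\dray}$ of the required shape; the only delicate point there is the bookkeeping showing that the eliminated multipliers assemble into the claimed certificate.
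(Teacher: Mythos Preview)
Your proof is correct. Both directions are handled cleanly: the easy implication is the standard one-line contradiction, and for the hard implication your projection-onto-a-closed-cone argument is a textbook separating-hyperplane proof. The closedness of the finitely generated cone $C$ is indeed the only nontrivial ingredient, and your Carath\'eodory reduction to a finite union of simplicial cones disposes of it correctly.

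However, there is nothing to compare against: the paper does not prove this lemma at all. It is stated with the attribution ``(e.g., \cite{Sch98})'' and used as a black box, as is customary for Farkas' lemma in applied papers. So your proposal is not an alternative to the paper's proof but rather a self-contained justification of a result the authors take for granted. If your goal is to supply what the paper omits, your projection argument is perfectly adequate; the Fourier--Motzkin alternative you sketch would also work and is closer in spirit to the purely algebraic treatment in Schrijver's book, but either is fine here.
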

		
		The infeasibility of the dual LP problem \eqref{eq:dual-LP} is characterized as follows.
		
		\begin{lemma}[Farkas' lemma for the dual infeasibility (e.g., \cite{Sch98})]
			\label{lem:Farkas-dual}
			Let $m,n$ be positive integers.
			Let $A \in \mathbb{R}^{m \times n}$ be a matrix and $\bs{c} \in \mathbb{R}^n$ be a vector.
			The linear system 
			\begin{align*}
				\left\{
				\begin{array}{l}
					\bs{\dual}^TA \le \bs{c}^T,\\
					\bs{\dual} \in \mathbb{R}^m
				\end{array}
				\right.
			\end{align*}
			is infeasible if and only if 
			\begin{align}
				\label{eq:dual-Farkas-certificate}
				\left\{
				\begin{array}{l}
					A\bs{\pray} = \bs{0},\\
					\bs{c}^T\bs{\pray} < 0,\\
					\bs{\pray} \in \mathbb{R}_+^n
				\end{array}
				\right.
			\end{align}
			is feasible.
		\end{lemma}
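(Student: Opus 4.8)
The plan is to derive this dual form of Farkas' lemma directly from the primal form already stated in \cref{lem:Farkas-primal}, by recasting the inequality system $\bs{\dual}^T A \le \bs{c}^T$ (equivalently $A^T\bs{\dual} \le \bs{c}$ with $\bs{\dual}$ free) into the standard equality-plus-nonnegativity shape that \cref{lem:Farkas-primal} requires. First I would eliminate the free variable and the inequality simultaneously: write $\bs{\dual} = \bs{\dual}^+ - \bs{\dual}^-$ with $\bs{\dual}^+, \bs{\dual}^- \ge \bs{0}$ and introduce a slack $\bs{s} \ge \bs{0}$, so that feasibility of the original system is equivalent to feasibility of
\[
A^T\bs{\dual}^+ - A^T\bs{\dual}^- + \bs{s} = \bs{c}, \qquad (\bs{\dual}^+, \bs{\dual}^-, \bs{s}) \ge \bs{0}.
\]
Collecting these into $M = [\,A^T \mid -A^T \mid I\,] \in \mathbb{R}^{n \times (2m+n)}$ and $\bs{w} = (\bs{\dual}^+, \bs{\dual}^-, \bs{s})$, the original system is feasible if and only if $M\bs{w} = \bs{c}$, $\bs{w} \ge \bs{0}$ is feasible.

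Next I would apply \cref{lem:Farkas-primal} to $M\bs{w} = \bs{c}$, $\bs{w} \ge \bs{0}$ (here the roles of rows and columns are swapped, so the certificate lives in $\mathbb{R}^n$): this system is infeasible if and only if there is $\bs{z} \in \mathbb{R}^n$ with $\bs{z}^T M \le \bs{0}$ and $\bs{z}^T\bs{c} > 0$. Reading $\bs{z}^T M \le \bs{0}$ block by block gives $\bs{z}^T A^T \le \bs{0}$, $-\bs{z}^T A^T \le \bs{0}$, and $\bs{z}^T \le \bs{0}$. Using $\bs{z}^T A^T = (A\bs{z})^T$, the first two conditions combine into $A\bs{z} = \bs{0}$, while the last is simply $\bs{z} \le \bs{0}$.

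Finally I would substitute $\bs{\pray} = -\bs{z} \ge \bs{0}$. This turns $A\bs{z} = \bs{0}$ into $A\bs{\pray} = \bs{0}$ and $\bs{z}^T\bs{c} > 0$ into $\bs{c}^T\bs{\pray} < 0$, which is exactly the certificate system \eqref{eq:dual-Farkas-certificate}. Chaining the equivalences then shows that $\bs{\dual}^T A \le \bs{c}^T$ is infeasible precisely when \eqref{eq:dual-Farkas-certificate} is feasible, as claimed. I do not expect a genuine obstacle here, since the statement is a standard theorem of the alternative (one could equally cite Gale's theorem for systems $A^T\bs{\dual} \le \bs{c}$); the only place to be careful is the bookkeeping of transposes and the sign flip between the multiplier $\bs{z}$ returned by \cref{lem:Farkas-primal} and the certificate variable $\bs{\pray}$, where the nonnegativity of the slack block is exactly what forces $\bs{z} \le \bs{0}$ and hence $\bs{\pray} \ge \bs{0}$.
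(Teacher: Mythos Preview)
Your derivation is correct: the reduction of $A^T\bs{\dual}\le\bs{c}$ to the equality-plus-nonnegativity form via the split $\bs{\dual}=\bs{\dual}^+-\bs{\dual}^-$ and the slack $\bs{s}$, followed by an application of \cref{lem:Farkas-primal} to $M=[A^T\mid -A^T\mid I]$, goes through exactly as you describe, and the sign flip $\bs{\pray}=-\bs{z}$ at the end is the right move.

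As for comparison with the paper: there is nothing to compare. The paper does not prove \cref{lem:Farkas-dual}; it is stated as a classical theorem of the alternative with a reference to Schrijver~\cite{Sch98} and used as a black box. Your argument is a standard textbook reduction of one Farkas variant to another, and it is perfectly adequate here; the only reason the paper omits it is that both \cref{lem:Farkas-primal} and \cref{lem:Farkas-dual} are quoted as known background results rather than contributions of the paper.
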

		
		Now, we define what constitute certificates for the four possible cases in \cref{thm:LP-solution-pattern}.
		
		\begin{description}
			\item[(i)] Feasible solutions of \eqref{eq:LP} and \eqref{eq:dual-LP} whose objective values are the same, 
			\item[(ii)] a feasible solution of \eqref{eq:Farkas-certificate} (called a \emph{Farkas' certificate} of infeasibility of \eqref{eq:LP}) and 
			a feasible solution of \eqref{eq:dual-LP},
			\item[(iii)] a feasible solution of \eqref{eq:LP} and a feasible solution of \eqref{eq:dual-Farkas-certificate} (called a \emph{Farkas' certificate} of infeasibility of \eqref{eq:dual-LP}),
			\item[(iv)] a feasible solution of \eqref{eq:Farkas-certificate} and a feasible solution of \eqref{eq:dual-Farkas-certificate}.
		\end{description}

		With those certificates, we can confirm the correctness of the output of our certifying algorithm for solving the LP problem~\eqref{eq:LP} by checking if given vectors satisfy the corresponding linear systems.
		We note that for case (ii) (resp., (iii)) a feasible solution of \eqref{eq:Farkas-certificate} (resp., \eqref{eq:dual-Farkas-certificate}) is a direction of unboundedness.
		
		Finally, we summarize the notations used throughout the paper.
		For $I \subseteq [m]$ and $J \subseteq [n]$, let $A_{I,J}$ be the submatrix of $A$ whose rows and columns are restricted to $I$ and $J$, respectively.
		We sometimes denote $A_{[m],J}$ (resp., $A_{I,[n]}$) by $A_{.J}$ (resp., $A_{I.}$).
		If $I \subseteq [m]$ (resp., $J \subseteq [n]$) is a singleton set, e.g., $I=\{i\}$ (resp., $J=\{j\}$), we denote $A_{\{i\},J}$ (resp., $A_{I,\{j\}}$) by $A_{i,J}$ (resp., $A_{I,j}$).
		These rules apply simultaneously, e.g., if $I = [m]$ and $J = \{j\}$, we denote $A_{.j}$ (which is the $j$th column vector of $A$). 
		For $I \subseteq [m]$ (resp., $J \subseteq [n]$), let $\overline{I} = [m] \setminus I$ (resp., $\overline{J} = [n] \setminus J$).
		For an $m$-dimensional vector $\bs{b}$ and $I \subseteq [m]$, let $\bs{b}_{I}$ be the vector obtained by restricting the coordinates of $\bs{b}$ to $I$.
		We denote by $e_i$ an unit vector of appropriate size, where its
		$i$th element is $1$ and all other elements are $0$.
		
		\section{Main algorithms}
		\label{sec:main-algorithms}
		
		In this section, 
		we provide a combinatorial certifying algorithm for LP problems with gainfree Leontief substitution systems \eqref{eq:LP} 
		and show the following theorem.
		Here, a combinatorial algorithm consists only of additions, subtractions, multiplications, and comparisons.
		Recall that $m$ is the number of constraints and $n$ is the number of variables in \eqref{eq:LP}.

		\begin{theorem}[Main]
			\label{thm:main-theorem}
			The LP problems with gainfree Leontief substitution systems \eqref{eq:LP} admit a combinatorial 
			${\rm O}(m^3n)$-time 
			certifying algorithm. 
		\end{theorem}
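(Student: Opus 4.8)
The plan is to split the certification along \cref{thm:LP-solution-pattern}. By strong duality its four cases are precisely the four combinations of feasibility of the primal \eqref{eq:LP} and of the dual \eqref{eq:dual-LP}: (i) both feasible, (ii) primal infeasible and dual feasible, (iii) primal feasible and dual infeasible, and (iv) both infeasible. Consequently it suffices to decide, and certify, primal feasibility and dual feasibility \emph{separately}: in cases (ii)--(iv) the prescribed certificate is exactly one feasible solution together with one Farkas certificate (\eqref{eq:Farkas-certificate} or \eqref{eq:dual-Farkas-certificate}), which such separate routines deliver directly, while case (i) is the only one also demanding primal and dual solutions of equal value, for which I would simply detect ``both feasible'' and invoke the $\mathrm{O}(m^2n)$ algorithm of Jeroslow et al.~\cite{JMR92}, valid exactly when an optimum exists.

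Both decisions come from a single Bellman--Ford--type label-correcting procedure on the hypergraph $\mathcal H$, run on the dual variables. First I would record the standard fact that the Horn solution set $\{\bs y : \bs y^{\top}A \le \bs c^{\top}\}$ is closed under componentwise maximum, so that whenever nonempty it possesses a componentwise-largest element, some of whose coordinates may be unbounded above. To compute it I initialize every label $y_i$ to a formal symbol $\infty$ and repeatedly relax each hyperarc $E_j$ with a head via $y_{h(E_j)} \leftarrow \min\!\bigl(y_{h(E_j)},\, c_j + \sum_{v_k\in T(E_j)}\gamma(E_j,v_k)\,y_k\bigr)$ in extended arithmetic (a headless hyperarc is normalized by a dummy root vertex so that every constraint takes this uniform form). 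The decisive bookkeeping is twofold: I log which hyperarc lowered each label in which pass, and I keep the symbol $\infty$ distinct from genuine finite decrease, so that at termination every coordinate is classified as finite, permanently $\infty$, or driven below every bound.

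Gainfreeness governs both termination and the certificates. Traversing a directed cycle multiplies a label by $\prod_i\gamma(E_i,v_i)\ge 1$ and adds the cycle length, so no lossy (gain $<1$) cycle can make a label diverge; the \emph{only} source of unbounded decrease is a cycle of gain exactly one and negative length, the hypergraph analogue of a negative cycle. After $\mathrm{O}(m^2)$ passes, each costing $\mathrm{O}(mn)$, any label still decreasing is traced through the log to such a gain-one negative cycle, whose balancing nonnegative hyperarc weights $\bs r$ satisfy $A\bs r=\bs 0$ (exactly, because the gain is one) and $\bs c^{\top}\bs r<0$, i.e.\ the dual Farkas certificate \eqref{eq:dual-Farkas-certificate}; if no label diverges, the converged labels (each $\infty$ instantiated by a large enough finite value) form a dual-feasible $\bs y$. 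Symmetrically, the coordinates frozen at $\infty$ are read, together with the rates at which they grow, as a vector $\bs z$ with $\bs z^{\top}A\le\bs 0$; by \cref{lem:Farkas-primal} the primal \eqref{eq:LP} is infeasible iff some such coordinate carries positive demand, in which case $\bs z^{\top}\bs b>0$ and $\bs z$ is the primal Farkas certificate \eqref{eq:Farkas-certificate}, and otherwise a feasible $\bs x$ is produced by back-substituting the demands $\bs b$ along the forest of tight hyperarcs in the log.

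The main obstacle, which is where gainfreeness and the extended-arithmetic bookkeeping must earn their keep, is to prove that this three-way classification of coordinates is well defined and correctly decoupled: that label divergence propagates so as to witness dual infeasibility exactly, that the frozen-$\infty$ coordinates capture the recession cone $\{\bs z:\bs z^{\top}A\le\bs 0\}$ even while other coordinates diverge (so cases (ii) and (iv) are separated cleanly), and above all that the extracted $\bs r$ and $\bs z$ satisfy the Farkas systems with \emph{equality} $A\bs r=\bs 0$ rather than mere inequality --- which hinges on the balancing weights existing precisely for gain-one cycles. Establishing the $\mathrm{O}(m^2)$ bound on the number of passes in the gainfree setting, so that the total cost is $\mathrm{O}(m^3n)$, is the remaining quantitative step.
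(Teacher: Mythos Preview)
Your high-level architecture matches the paper's: split along \cref{thm:LP-solution-pattern}, run a Bellman--Ford--type relaxation on the dual variables with a symbolic $\infty$ (the paper's $M$), classify coordinates as finite/frozen/diverging, extract a dual Farkas certificate from a traced cycle and a primal Farkas certificate from the $M$-coefficients, and fall back on Jeroslow et al.\ for case~(i). So the skeleton is right.

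There is, however, a genuine gap in your dual-infeasibility certificate. You assert that ``the only source of unbounded decrease is a cycle of gain exactly one and negative length'' and that $A\bs r=\bs 0$ holds ``exactly, because the gain is one.'' This is false. When the traced cycle lives entirely in the region where labels are already finite (the paper's $\nontriv=\true$ case), a cycle with $\prod_k\gamma(E^{(k)},w_{k-1})>1$ (gain strictly less than one) can perfectly well drive the labels to $-\infty$: once the side branches have pulled $y(w_s)$ below the fixed point $-D/(\prod\gamma-1)$, each traversal strictly decreases it. In that case $A\bs r^*=\bs 0$ holds not because the gain is one but because $A\bs r^{(s)}_{w_s}=\bs e_{w_s}$ \emph{exactly} (this is what $\nontriv=\true$ buys), so the $(\prod\gamma-1)A\bs r^{(s)}_{w_s}$ term cancels against $(\prod\gamma-1)\bs e_{w_s}$. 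The gain-equals-one argument is needed only in the complementary case where the cycle labels still carry $M$; there the $M$-coefficients on both ends of the cycle must agree, which forces $\prod\gamma=1$ and simultaneously forces every side-branch vertex to be $\nontriv=\true$. Your proposal collapses these two mechanisms into one, and as written would not produce a valid certificate in the first case.

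A secondary issue is the running-time decomposition. The paper uses $m$ passes, not $\mathrm O(m^2)$, and the $\mathrm O(m^3n)$ bound comes from maintaining the $n$-dimensional vectors $\bs r^{(k)}_v$ throughout: each of the $m$ passes updates $m$ such vectors, each update costing $\mathrm O(mn)$. Your ``$\mathrm O(m^2)$ passes each costing $\mathrm O(mn)$'' gets the same product but for the wrong reason; a pass that only updates scalar labels costs $\mathrm O(mn)$, but then you cannot read off $\bs r^*$ at the end without the stored $\bs r^{(k)}_v$, and reconstructing them from the log alone is exactly what the paper argues cannot be done naively (this is the ``store all the history'' remark in the introduction).
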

		
		Our combinatorial certifying algorithm for the LP problems with gainfree Leontief substitution systems \eqref{eq:LP} is an extension of the non-certifying algorithm in~\cite{JMR92}.
		Let us first summarize the non-certifying algorithm in~\cite{JMR92}, which consists of \textsc{ValueIteration} and \textsc{PrimalRetrieval}. \textsc{ValueIteration} determines the feasibility of the dual LP problem~\eqref{eq:dual-LP}. 
		It starts from a sufficiently large vector and iteratively compute an upper bound of the value of each variable derived from the constraints in \eqref{eq:dual-LP}.
		For an LP problem with a gainfree Leontief substitution system, $m$ iterations is shown to be sufficient to obtain a feasible solution if the dual LP problem is feasible.
		Then, the feasibility of the primal LP problem~\eqref{eq:LP} can be determined using the data computed in \textsc{ValueIteration}, and when it is feasible, \textsc{PrimalRetrieval} computes a feasible solution of it.
		This algorithm outputs feasible solutions of the primal and dual LP problems with the same objective values as a certificate of primal and dual feasibility for case (i) in \cref{thm:LP-solution-pattern} in \cref{sec:preliminaries}.
		
		To make the algorithm in~\cite{JMR92} also certifying for the primal and dual infeasibility (i.e., for cases (ii-iv) in \cref{thm:LP-solution-pattern}), we modify the algorithm and add several subroutines to it.
		We first modify \textsc{ValueIteration} to \textsc{DualFeasibility} (\cref{alg:dual-feasibility}).
		In \textsc{DualFeasibility}, when the upper bound $\bs{\dual}^{(k)}$ for the dual variables is updated in the $k$th iteration of the for-loop starting from line 2, we store variables changed in the iteration and a vector $\bs{\pray}^{(k)}$, which represents how an upper bound $\bs{\dual}^{(k)}$ is derived from the constraint in \eqref{eq:dual-LP}.
		This enables us to compute a Farkas' certificate of dual infeasibility in \textsc{FarkasCertificateOfDualInfeasibility} (\cref{alg:primal-ray}) when the dual LP problem is infeasible.
		This modification also makes our algorithm different from the one in~\cite{Gup14}.
		Since the upper bound $\bs{\dual}^{(m)}$ computed in \textsc{DualFeasibility} contains symbol $M$ as described below, we compute 
		in \textsc{DualSolution} (\cref{alg:dual-solution}) a feasible solution of the dual LP problem from $\bs{\dual}^{(m)}$ when the dual LP problem is feasible.
		\textsc{PrimalFeasibility} (\cref{alg:primal-feasibility}) determines the feasibility of the primal LP problem~\eqref{eq:LP} using the same criterion as in (ii) of Theorem 3.6 in \cite{JMR92}.
		\textsc{PrimalSolution} (\cref{alg:primal-solution}) is different from \textsc{PrimalRetrieval} in~\cite{JMR92} 
		in that the former computes a primal feasible solution not only when the dual LP problem is feasible but also it is infeasible.
		Finally, in \textsc{DualFeasibility} we treat $M$ as a symbol representing an ``arbitrary large'' number so that we can compute a Farkas certificate of primal infeasibility in \textsc{FarkasCertificateOfPrimalInfeasibility} (\cref{alg:dual-ray}).
		For any real numbers $\alpha_1,\alpha_2,\beta_1,\beta_2 \in \mathbb{R}$, 
		we define $\alpha_1 M + \beta_1 > \alpha_2 M + \beta_2$ if and only if $\alpha_1 > \alpha_2$ or ($\alpha_1 = \alpha_2$ and $\beta_1 > \beta_2$).
		
		For the readability, we first describe a certifying algorithm for the feasibility of the dual of the LP problems (with gainfree Leontief substitution systems) in~\cref{subsec:dual-feasibility} and one for the feasibility of the primal LP problems in~\cref{subsec:primal-feasibility}.
		A proof of \cref{thm:main-theorem} will be given in~\cref{subsec:proof-of-main-theorem}.
			
		\subsection{A certifying algorithm for the feasibility of the dual LP problem}
		\label{subsec:dual-feasibility}
		
		In this subsection, we provide a certifying algorithm for the feasibility of the dual~\eqref{eq:dual-LP} of the LP problem with a gainfree Leontief substitution system.
		The main algorithm (\cref{alg:cert-gainfree-Leontief-dual}) first calls subroutine \textsc{DualFeasibility} (\cref{alg:dual-feasibility}), which determines the feasibility of the dual LP problem~\eqref{eq:dual-LP}.
		If it is feasible, then subroutine \textsc{DualSolution} (\cref{alg:dual-solution}) is called to compute a feasible solution of the dual LP problem; otherwise, 
		subroutine \textsc{FarkasCertificateOfDualInfeasibility} (\cref{alg:primal-ray}) is called to compute a Farkas' certificate of the dual infeasibility.
		
		\begin{algorithm}
			\caption{Combinatorial certifying algorithm for the feasibility of the dual of the LP problems with gainfree Leontief substitution systems}
			\label{alg:cert-gainfree-Leontief-dual}
			\KwInput{A matrix $A$ and a vector $\bs{c}$ for the constraint of the dual LP problem~\eqref{eq:dual-LP}.}
		($\bs{y}^{(m)},\bs{r}^{(m)},\change^{(k)}(k=0,...,m),p^{(k)}(k=0,...,m),\nontriv^{(m)},\bs{q},\val$)$\ot$\textsc{DualFeasibility}($A,\bs{c}$). \\
		\eIf{$\val=\true$}
		{ \textsc{DualSolution}.\\
			$\bs{\dual}^* \ot$ {\rm \textsc{DualSolution}}($A,\bs{c},\bs{y}^{(m)}$).\\
			{\bf print} ``dual-feasible'' and {\bf return} $\bs{\dual}^*$.} 
	{		$\bs{\pray}^* \ot$ {\rm \textsc{FarkasCertificateOfDualInfeasibility}}($A,\bs{c},\bs{y}^{(m)},\bs{r}^{(m)},\change^{(k)}(k=0,...,m),p^{(k)}(k=0,...,m)$).\\
		{\bf print} ``dual-infeasible'' and {\bf return} $\bs{\pray}^*$.
	}
\end{algorithm}	

\begin{algorithm}
	\caption{\textsc{DualFeasibility}}
	\label{alg:dual-feasibility}
	\KwInput{A matrix $A$ and a vector $\bs{c}$ for the constraint of the dual LP problem~\eqref{eq:dual-LP}.} 
	For each $v \in V$, $\dual^{(0)}(v)\ot M$, $\bs{\pray}_v^{(0)}\ot \bs{0}$, $\change^{(0)}(v)\ot \false$, $p^{(0)}(v)\ot \emptyset$, $\nontriv^{(0)}(v)\ot \false$, and $q(v)\ot 0$.\\
	\For{$k=1,\dots,m$}{
		\For{$v \in V$}{
			\eIf{$\dual^{{(k-1)}}(v) > \min\left\{ \ell(E) + \sum_{u \in T(E)}\weight(E,u)\dual^{(k-1)}(u) \mid E \in \mathcal{E}, h(E) = v \right\}$}
			{Choose an arbitrary $E \in \argmin\left\{ \ell(E) + \sum_{u \in T(E)}\weight(E,u)\dual^{(k-1)}(u) \mid E \in \mathcal{E}, h(E) = v \right\}$.\\
				$\dual^{(k)}(v)\ot \ell(E) + \sum_{u \in T(E)}\weight(E,u)\dual^{(k-1)}(u)$.\\
				$p^{(k)}(v)\ot E$.\\
				$\bs{\pray}^{(k)}_v\ot e_{E} + \sum_{u\in T(E)}\weight(E,u)\bs{\pray}^{(k-1)}_{u}$.\\
				$\change^{(k)}(v)\ot \true$.\\
				\eIf{for every $u \in T(E)$ $\nontriv^{(k-1)}(u) = \true$ (this includes the case that $T(E)=\emptyset$)}
				{$\nontriv^{(k)}(v) \ot \true$ and $q(v)\ot k$.}
				{$\nontriv^{(k)}(v) \ot \nontriv^{(k-1)}(v)$.}
			}
			{$\dual^{(k)}(v)\ot \dual^{(k-1)}(v)$, $p^{(k)}(v)\ot \emptyset$, $\bs{\pray}^{(k)}_v\ot \bs{\pray}^{(k-1)}_v$, $\change^{(k)}(v)\ot \false$, and 
				$\nontriv^{(k)}(v)\ot \nontriv^{(k-1)}(v)$.}
		}
	}
	\uIf{$\dual^{(m)}(v) > \min\left\{ \ell(E) + \sum_{u \in T(E)}\weight(E,u)\dual^{(m)}(u) \mid E \in \mathcal{E}, h(E) = v \right\}$ {\rm for some} $v \in V$}
	{		$\val\ot \false$.}
	\uElseIf{$0 > \ell(E) + \sum_{u \in T(E)}\weight(E,u)\dual^{(m)}(u)$ for some $E \in \mathcal{E}$ with $h(E)=\emptyset$}{
		$\val\ot \false$.}
	\Else{
		$\val\ot \true$.}
	{\bf return} ($\bs{y}^{(m)},\bs{r}^{(m)},\change^{(k)}(k=0,...,m),p^{(k)}(k=0,...,m),\nontriv^{(m)},\bs{q},\val$).
\end{algorithm}	

\begin{algorithm}
	\caption{
		\textsc{\textsc{DualSolution}}
	}
	\label{alg:dual-solution}
	\KwInput{A matrix $A$ and a vector $\bs{c}$ for the constraint of the dual LP problem~\eqref{eq:dual-LP}, and an $n$-dimensional vector $\bs{y}$ with each entry being a linear function of $M$.}
	\For{each $E \in \mathcal{E}$}{Define two integers $\alpha(E)$ and $\beta(E)$ such that $\alpha(E)M + \beta(E) = \dual^{(m)}(h(E)) - \ell(E) - \sum_{u \in T(E)}\weight(E,u)\dual^{(m)}(u)$ , if where we define $\dual^{(m)}(\emptyset) = 0$.}
	\eIf{all $E \in \mathcal{E}$ satisfy $\alpha(E) \ge 0$}
	{$\lambda \ot 0$.}
	{$\lambda \ot \max\left\{ \frac{\beta(E)}{-\alpha(E)} \mid E \in \mathcal{E}, \alpha(E) < 0 \right\}$.}
	Let $\bs{\dual}^*$ be the vector obtained from $\bs{\dual}$ by substituting $\lambda$ with $M$.\\
	{\bf return} $\bs{\dual}^*$.
\end{algorithm}	

\begin{algorithm}
	\caption{\textsc{FarkasCertificateOfDualInfeasibility}}
	\label{alg:primal-ray}
	\KwInput{A matrix $A$ and a vector $\bs{c}$ for the constraint of the dual LP problem~\eqref{eq:LP}, $\bs{y}^{(m)}$, $\bs{r}^{(m)}$, and $\change^{(k)}$ and $p^{(k)}$ for $k=0,...,m$.}
	\eIf{$\dual^{(m)}(v) > \min\left\{ \ell(E) + \sum_{u \in T(E)}\weight(E,u)\dual^{(m)}(u) \mid E \in \mathcal{E}, h(E) = v \right\}$ {\rm for some} $v \in V$}
	{Choose one $v\in V$ such that $\dual^{(m)}(v) > \min\left\{ \ell(E) + \sum_{u \in T(E)}\weight(E,u)\dual^{(m)}(u) \mid E \in \mathcal{E}, h(E) = v \right\}$.\\
		Choose an arbitrary $E \in \mathcal{E}$ with $h(E)=v$ that minimizes $\ell(E) + \sum_{u \in T(E)}\weight(E,u)\dual^{(m)}(u)$.\\
		$w_{m+1} \ot v$.\\
		$\bs{\pray}^{(m+1)}_{w_{m+1}} \ot  e_{E} + \sum_{u\in T(E)}\weight(E,u)\bs{\pray}^{(m)}_{u}$.\\
		$E^{(m+1)}\ot E$.\\
		/* Find a cycle */\\
		\For{$k=m+1,\dots, 2$}{
			Choose an arbitrary $u \in T(E^{(k)})$ such that $\change^{(k-1)}(u) = \true$.\\
			$w_{k-1} \ot u$.\\
			$E^{(k-1)} \ot p^{(k-1)}(w_{k-1})$.\\
			\If{$w_{k-1} = w_{q}$  for some $q \ge k$}
			{$t \ot  q$.\\
				$s \ot k-1$.\\ 
				Break.}
		}
		$\bs{\pray}^* \ot \bs{\pray}^{(t)}_{w_t} - \bs{\pray}^{(s)}_{w_s}$.\\
		{\bf return} $\bs{\pray}^*$.}
	{Choose one $E \in \mathcal{E}$ with $h(E) = \emptyset$ such that $0 > \ell(E) + \sum_{u \in T(E)}\weight(E,u)\dual^{(m)}(u)$.\\
		$\bs{\pray}^*\ot \bs{e}_E + \sum_{u \in T(E)}\weight(E,u)\bs{\pray}^{(m)}_{u}$.\\
		{\bf return} $\bs{\pray}^*$.
	}
\end{algorithm}	

Before going into the proofs of correctness of these algorithms, we show several examples how these algorithms work.
We only show how $\bs{y}^{(k)}$ and $\bs{\pray}^{(k)}_{v}$ are updated in each iteration of the for-loop starting from line 2 in \textsc{DualFeasibility} in these examples for readability.
Also, we omit the input vector $\bs{b}$ in these examples, since $\bs{b}$ is irrelevant to the feasibility of the dual LP problem~\eqref{eq:dual-LP}.

\begin{example}
	\label{ex:main-algo1}
	For the following input data, the associated hypergraph is drawn in \cref{fig:hypergraph}.
	
	\begin{figure}[t]
		\centering
		\includegraphics[scale=0.6,bb= 0 0 200 150]{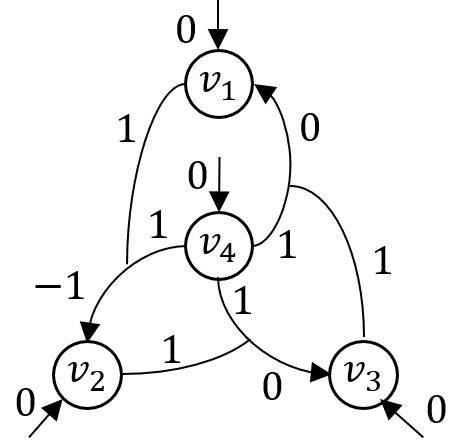}
		\caption{The hypergraph representation of system~\eqref{eq:ex-dual-infeas2}}
		\label{fig:hypergraph}
	\end{figure}
	
	\begin{align*}
		A=
		\begin{pmatrix}
			-1 & 0 & 1 & 1 & 0 & 0 & 0 \\
			1 & -1 & 0 & 0 & 1 & 0 & 0 \\
			0 & 1 & -1 & 0 & 0 & 1 & 0 \\
			-1 & -1 & -1 & 0 & 0 & 0 & 1 \\
		\end{pmatrix}\ and \ 
			c=\begin{pmatrix}
				-1\\0\\0\\0\\0\\0\\0
			\end{pmatrix}.
		\end{align*}
		
		The constraints in the dual LP problem~\eqref{eq:dual-LP} are 
		
		\begin{align}
			\label{eq:ex-dual-infeas2}
			A^Ty \le c \equiv \left\{
			\begin{array}{rl}
				-y_1 + y_2 - y_4 &\le -1\\
				-y_2 + y_3 - y_4 &\le 0\\
				y_1 - y_3 - y_4 &\le 0\\
				y_1&\le 0\\
				y_2&\le 0\\
				y_3&\le 0\\
				y_4&\le 0.
			\end{array}
			\right.
		\end{align}
		Initially, $\bs{\dual}^{(0)} = (M,M,M,M)$ and $\bs{\pray}^{(0)}_{v_i} = \bs{0}$ for $i=1,2,3,4$.\\
		Iteration 1: 
		$\bs{\dual}^{(1)} = (0,0,0,0)$ and 
		$\bs{\pray}^{(1)}_{v_i} = e_{i+3}$ ($i=1,2,3,4$).\\
		Iteration 2: 
		$\bs{\dual}^{(2)} = (0,-1,0,0)$ and 
		$\bs{\pray}^{(2)}_{v_2} = e_1+e_4+e_7$.\\
		Iteration 3: 
		$\bs{\dual}^{(3)} = (0,-1,-1,0)$ and 
		$\bs{\pray}^{(3)}_{v_3} = e_1+e_2+e_4+2e_7$.\\
		Iteration 4: 
		$\bs{\dual}^{(4)} = (-1,-1,-1,0)$ and 
		$\bs{\pray}^{(4)}_{v_1} = e_1+e_2+e_3+e_4+3e_7$.\\
		Now, 
		$(-1,0,1,-1)\bs{\dual}^{(4)} = 0 > -1$.
		Hence, the first inequality is violated by $\bs{\dual}^{(4)}$.
		Then, by running \textsc{FarkasCertificateOfDualInfeasibility}, we have 
		$\bs{\pray}^*=\bs{\pray}^{(5)}_{v_2}-\bs{\pray}^{(2)}_{v_2}=e_1+e_2+e_3+3e_7$.
		Thus, $A\bs{\pray}^*=(0,0,0,0)^T$ and $c^T\bs{\pray}^*=-1$.
		Hence, $\bs{\pray}^*$ is a Farkas certificate of infeasibility of system~\eqref{eq:ex-dual-infeas2}.
		
	\end{example}
	
	\begin{example}
		\label{ex:main-algo2}
		For the following input data
		\begin{align*}
			A=
			\begin{pmatrix}
				-2 & 1 & 0 & 0 \\
				-5 & -2 & 1 & 0 \\
				-3 & -1 & -2 & 1 \\
			\end{pmatrix}\ and \ 
				c=\begin{pmatrix}
					-64\\
					3\\
					1\\
					2
				\end{pmatrix},
			\end{align*}
			
			the constraints in the dual LP problem~\eqref{eq:dual-LP} are 
			
			\begin{align}
				\label{eq:ex-dual-infeas}
				A^Ty \le c \equiv \left\{
				\begin{array}{rl}
					-2y_1 - 5y_2 - 3y_3 &\le -64\\
					y_1 - 2y_2 - y_3 &\le 3\\
					y_2 - 2y_3 &\le 1\\
					y_3 &\le 2.
				\end{array}
				\right.
			\end{align}
			Initially, $\bs{\dual}^{(0)} = (M,M,M)$ and $\bs{\pray}^{(0)}_{v_i} = (0,0,0,0)$ for $i=1,2,3$.\\
			Iteration 1: 
			$\bs{\dual}^{(1)} = (M,M,2)$ and 
			$\bs{\pray}^{(1)}_{v_3} = (0,0,0,1)$.\\
			Iteration 2: 
			$\bs{\dual}^{(2)} = (M,5,2)$ and 
			$\bs{\pray}^{(2)}_{v_2} = (0,0,1,2)$.\\
			Iteration 3: 
			$\bs{\dual}^{(3)} = (15,5,2)$ and 
			$\bs{\pray}^{(3)}_{v_1} = (0,1,2,5)$.\\
			Now, $(-2,-5,-3)\bs{\dual}^{(3)} = -61 > -64$.
			Hence, the first inequality is violated by $\bs{\dual}^{(3)}$.
			Then, by running \textsc{FarkasCertificateOfDualInfeasibility}, we have $\bs{\pray}^*=(1,0,0,0)+2(0,1,2,5)+5(0,0,1,2)+3(0,0,0,1)=(1,2,9,23)$.
			Then $A\bs{\pray}^*=(0,0,0)^T$ and $c^T\bs{\pray}^*=-3$.
			Hence, $\bs{\pray}^*$ is a Farkas certificate of infeasibility of system~\eqref{eq:ex-dual-infeas}.
		\end{example}
		
		\begin{example}
			\label{ex:main-algo3}
			For the following input data
			\begin{align*}
				A=
				\begin{pmatrix}
					-(1/2) & 1 & 1 \\
					1 & -(1/3) & 0 \\
					0 & 0 & -6 \\
				\end{pmatrix}\ and \ 
					c=\begin{pmatrix}
						-3\\
						1\\
						-2
					\end{pmatrix},
				\end{align*}
				
				the constraints in the dual LP problem~\eqref{eq:dual-LP} are 
				
				\begin{align}
					\label{eq:ex-dual-infeas4}
					A^Ty \le c \equiv \left\{
					\begin{array}{rl}
						-(1/2)y_1 + y_2  &\le -3\\
						- (1/3)y_2 + y_3 &\le 1\\
						y_1 - 6y_3 &\le -2.
					\end{array}
					\right.
				\end{align}
				Initially, $\bs{\dual}^{(0)} = (M,M,M)$ and $\bs{\pray}^{(0)}_{v_i} = (0,0,0)$ for $i=1,2,3$.\\
				Iteration 1: 
				$\bs{\dual}^{(1)} = (M,(1/2)M-3,(1/3)M+1)$, 
				$\bs{\pray}^{(1)}_{v_2} = (1,0,0)$, and 
				$\bs{\pray}^{(1)}_{v_3} = (0,1,0)$.\\
				Iteration 2: 
				$\bs{\dual}^{(2)} = (M,(1/2)M-3,(1/6)M)$, 
				$\bs{\pray}^{(2)}_{v_3} = (1/3,1,0)$.\\
				Iteration 3: 
				$\bs{\dual}^{(3)} = (M-2,(1/2)M-3,(1/6)M)$, 
				$\bs{\pray}^{(3)}_{v_1} = (2,6,1)$.\\
				Now, $(-(1/2),1,0)\bs{\dual}^{(3)} = -2 > -3$.
				Hence, the first inequality is violated by $\bs{\dual}^{(3)}$.
				Then, by running \textsc{FarkasCertificateOfDualInfeasibility}, we have 
				$\bs{\pray}^*=\bs{\pray}^{(4)}_{v_2}-\bs{\pray}^{(1)}_{v_2}=(1,3,1/2)$.
				Thus, $A\bs{\pray}^*=(0,0,0)^T$ and $c^T\bs{\pray}^*=-1$.
				Hence, $\bs{\pray}^*$ is a Farkas certificate of infeasibility of system~\eqref{eq:ex-dual-infeas4}.
			\end{example}
			
			\begin{example}
				\label{ex:main-algo4}
				For the following input data
				\begin{align*}
					A=
					\begin{pmatrix}
						-(1/2) & 1 & 1 \\
						1 & -(1/3) & 0 \\
						0 & 0 & -6 \\
					\end{pmatrix}\ and \ 
						c=\begin{pmatrix}
							-3\\
							1\\
							2
						\end{pmatrix},
					\end{align*}
					
					the constraints in the dual LP problem~\eqref{eq:dual-LP} are 
					
					\begin{align}
						\label{eq:ex-dual-infeas5}
						A^Ty \le c \equiv \left\{
						\begin{array}{rl}
							-(1/2)y_1 + y_2  &\le -3\\
							- (1/3)y_2 + y_3 &\le 1\\
							y_1 - 6y_3 &\le 2.
						\end{array}
						\right.
					\end{align}
					Initially, $\bs{\dual}^{(0)} = (M,M,M)$ and $\bs{\pray}^{(0)}_{v_i} = (0,0,0)$ for $i=1,2,3$.\\
					Iteration 1: 
					$\bs{\dual}^{(1)} = (M,(1/2)M-3,(1/3)M+1)$, 
					$\bs{\pray}^{(1)}_{v_2} = (1,0,0)$, and 
					$\bs{\pray}^{(1)}_{v_3} = (0,1,0)$.\\
					Iteration 2: 
					$\bs{\dual}^{(2)} = (M,(1/2)M-3,(1/6)M)$, 
					$\bs{\pray}^{(2)}_{v_3} = (1/3,1,0)$.\\
					Iteration 3: 
					$\bs{\dual}^{(3)} = \bs{\dual}^{(2)} = (M,(1/2)M-3,(1/6)M)$. \\
					Now, all the inequalities are satisfied by $\bs{\dual}^{(3)}$.
					Then, by running \textsc{DualSolution}, we have 
					$\bs{\dual}^*=(0,-3,0)$.
					Then $\bs{\dual}^*$ is a certificate of feasibility (feasible solution) of system~\eqref{eq:ex-dual-infeas5}.
				\end{example}
				
				In the remainder of this subsection, 
				we will prove correctness of \cref{alg:cert-gainfree-Leontief-dual}.
				We show the correctness of subroutines \textsc{DualFeasibility}, \textsc{DualSolution}\footnote{\textsc{DualSolution} uses a division, however, we can avoid the division by using \textsc{ValueIteration} in \cite{JMR92} to obtain a feasible dual solution.}, and \textsc{FarkasCertificateOfDualInfeasibility}, and show the following proposition.
								
				\begin{proposition}
					\label{prop:dual-feasibility-correct}
					\cref{alg:cert-gainfree-Leontief-dual} is a combinatorial 
					${\rm O}(m^3n)$-time 
					certifying algorithm for the feasibility of the dual~\eqref{eq:dual-LP} of the LP problem with a gainfree Leontief substitution system.
				\end{proposition}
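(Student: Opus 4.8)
The plan is to prove a few running invariants of \textsc{DualFeasibility}, deduce from them that $\val$ decides dual feasibility correctly, and then check, case by case against \cref{lem:Farkas-dual}, that \textsc{DualSolution} and \textsc{FarkasCertificateOfDualInfeasibility} return objects meeting the defining inequalities of the relevant certificate; complexity and the combinatorial/certifying claims will follow by inspection. Writing each bound as $\dual^{(k)}(v)=\alpha^{(k)}_vM+\beta^{(k)}_v$ and comparing by the lexicographic rule fixed before the algorithms, I would prove by induction on $k$, using the update $\bs{\pray}^{(k)}_v\ot \bs{e}_{E}+\sum_{u\in T(E)}\weight(E,u)\bs{\pray}^{(k-1)}_u$ (with $E=p^{(k)}(v)$ on an update), the identities
\begin{align*}
\beta^{(k)}_v=\bs{c}^{T}\bs{\pray}^{(k)}_v,\qquad \alpha^{(k)}_v=\sum_{u\in T(E)}\weight(E,u)\,\alpha^{(k-1)}_u\ \ge 0,\\
\bs{e}_v-A\bs{\pray}^{(k)}_v=\sum_{u\in T(E)}\weight(E,u)\bigl(\bs{e}_u-A\bs{\pray}^{(k-1)}_u\bigr),
\end{align*}
the last using that the $E$-th column of $A$ is $\bs{e}_v-\sum_{u\in T(E)}\weight(E,u)\bs{e}_u$ when $h(E)=v$. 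Writing $\bs{d}^{(k)}_v:=\bs{e}_v-A\bs{\pray}^{(k)}_v$ (so $\bs{d}^{(0)}_v=\bs{e}_v$), an easy induction gives that $\alpha^{(k)}_v=0$, $\bs{d}^{(k)}_v=\bs{0}$ (equivalently $A\bs{\pray}^{(k)}_v=\bs{e}_v$), and $\nontriv^{(k)}(v)=\true$ are all equivalent, i.e. hold exactly when $v$ is grounded; a fourth, monotonicity invariant (each $\dual^{(k)}(v)$ is lexicographically nonincreasing in $k$ and, reading $M=+\infty$, dominates every feasible dual vector) follows from $\weight>0$.

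When $\val=\true$ the two stopping tests assert exactly that $\dual^{(m)}$ satisfies every head constraint ($h(E)=v$) and every empty-head constraint; \textsc{DualSolution} then replaces the symbol $M$ by the scalar $\lambda$ chosen so that no coordinate $\alpha(E)M+\beta(E)$ turns negative, which is readily checked to keep all constraints satisfied, so the returned $\bs{\dual}^*$ is a feasible solution of $\bs{\dual}^{T}A\le \bs{c}^{T}$. That $\val=\true$ is genuinely equivalent to dual feasibility is where I would invoke the gainfree convergence bound behind \textsc{ValueIteration} in \cite{JMR92}, namely that for a gainfree system $m$ rounds of the value iteration suffice. When $\val=\false$ because the empty-head test fired, the returned $\bs{\pray}^*=\bs{e}_E+\sum_{u\in T(E)}\weight(E,u)\bs{\pray}^{(m)}_u$ is a \cref{eq:dual-Farkas-certificate} certificate: strictness of $0>\ell(E)+\sum_u\weight(E,u)\dual^{(m)}(u)$ forces its $M$-coefficient to vanish, hence all tails $u$ are grounded, whence $A\bs{\pray}^*=-\sum_u\weight(E,u)\bs{e}_u+\sum_u\weight(E,u)\bs{e}_u=\bs{0}$, $\bs{c}^{T}\bs{\pray}^*=\ell(E)+\sum_u\weight(E,u)\beta^{(m)}_u<0$, and $\bs{\pray}^*\ge\bs{0}$ since all coefficients are positive.

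The cycle branch is the heart of the proof and the step I expect to be hardest. First I must show the back-tracking loop is well defined and actually closes: an improvement of $w_k$ at round $k$ must be triggered by a tail whose own bound strictly dropped at round $k-1$, so a changed tail $w_{k-1}\in T(E^{(k)})$ with $E^{(k)}=p^{(k)}(w_k)$ always exists, and visiting $m+1$ vertices among $m$ forces, by pigeonhole, a first repeat $w_s=w_t$ with $s<t$; minimality of the repeat makes the intermediate vertices (hence the hyperarcs) distinct, so $w_s\,E^{(s+1)}\,w_{s+1}\cdots E^{(t)}\,w_t$ is a directed cycle of $\mathcal{H}$. For $\bs{\pray}^*=\bs{\pray}^{(t)}_{w_t}-\bs{\pray}^{(s)}_{w_s}$ I would then verify the three conditions of \cref{eq:dual-Farkas-certificate}. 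Let $P=\prod_{k=s+1}^{t}\weight(E^{(k)},w_{k-1})$, so gainfreeness reads $P\ge 1$. Telescoping the $\alpha$-identity around the cycle gives $\alpha^{(t)}_{w_t}=P\,\alpha^{(s)}_{w_s}+(\text{nonnegative off-cycle terms})\ge\alpha^{(s)}_{w_s}$, while monotonicity at the repeated vertex forces $\alpha^{(t)}_{w_t}\le\alpha^{(s)}_{w_s}$; since $P\ge 1$ and $\alpha\ge 0$, equality is only possible if every off-cycle tail met along the cycle is grounded and either $P=1$ or $\alpha^{(s)}_{w_s}=0$. Telescoping the defect identity under ``off-cycle tails grounded'' then yields $\bs{d}^{(t)}_{w_t}=P\,\bs{d}^{(s)}_{w_s}$, so $A\bs{\pray}^*=\bs{d}^{(s)}_{w_s}-\bs{d}^{(t)}_{w_t}=(1-P)\bs{d}^{(s)}_{w_s}$, which is $\bs{0}$ in both surviving cases ($P=1$, or $\alpha^{(s)}_{w_s}=0$ making $w_s$ grounded and $\bs{d}^{(s)}_{w_s}=\bs{0}$). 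For the sign, $w_t$ is updated at round $t$ and $t-1\ge s$, so monotonicity gives $\dual^{(t)}(w_t)<\dual^{(s)}(w_s)$ strictly; as the $\alpha$-parts agree this is $\beta^{(t)}_{w_t}<\beta^{(s)}_{w_s}$, i.e. $\bs{c}^{T}\bs{\pray}^*=\beta^{(t)}_{w_t}-\beta^{(s)}_{w_s}<0$. Finally, telescoping the $\bs{\pray}$-recursion while discarding the nonnegative off-cycle tails gives $\bs{\pray}^{(t)}_{w_t}\ge P\,\bs{\pray}^{(s)}_{w_s}$ coordinatewise, so $\bs{\pray}^*=(\bs{\pray}^{(t)}_{w_t}-P\bs{\pray}^{(s)}_{w_s})+(P-1)\bs{\pray}^{(s)}_{w_s}\ge\bs{0}$, again using $P\ge 1$. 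Thus gainfreeness is precisely what simultaneously forces $A\bs{\pray}^*=\bs{0}$ and $\bs{\pray}^*\ge\bs{0}$.

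It remains to bound the cost and confirm the two formal claims. Each of the $m$ rounds touches at most $m$ vertices, and for every updated vertex the dominant work is forming the $n$-dimensional witness $\bs{\pray}^{(k)}_v$ as a combination over at most $m$ tails, at cost $O(mn)$; this gives $O(m^3n)$ for \textsc{DualFeasibility}, dominating the $O(mn+m^2)$ of the certificate routine and the $O(mn)$ of \textsc{DualSolution}, for an overall $O(m^3n)$. Every step is an addition, subtraction, multiplication, or a lexicographic comparison of the forms $\alpha M+\beta$ (the single division in \textsc{DualSolution} being removable as in the footnote), so the algorithm is combinatorial; and since checking a returned $\bs{\dual}^*$ against $\bs{\dual}^{T}A\le\bs{c}^{T}$, or a returned $\bs{\pray}^*$ against $A\bs{\pray}^*=\bs{0}$, $\bs{c}^{T}\bs{\pray}^*<0$, $\bs{\pray}^*\ge\bs{0}$, each takes $O(mn)$ time, it is certifying. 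Combining the cases establishes \cref{prop:dual-feasibility-correct}.
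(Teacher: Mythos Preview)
Your proposal is correct and follows essentially the same approach as the paper: the invariants you state are precisely the paper's Claims~3.5--3.9 (packaged more compactly via $\alpha$, $\beta$, and the defect $\bs{d}^{(k)}_v=\bs{e}_v-A\bs{\pray}^{(k)}_v$), and your case analysis for $\val=\true$, the empty-head branch, and the cycle branch mirrors Lemmas~3.2, 3.4, and~3.10 respectively, with the same telescoping and the same use of gainfreeness ($P\ge 1$) to force both $A\bs{\pray}^*=\bs{0}$ and $\bs{\pray}^*\ge\bs{0}$. One minor point: invoking the convergence bound from \cite{JMR92} to establish that $\val$ decides feasibility is unnecessary---once you show $\val=\true$ produces a feasible $\bs{\dual}^*$ and $\val=\false$ produces a valid Farkas certificate, the equivalence follows automatically, which is exactly how the paper closes the argument.
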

				
				As mentioned in Introduction, 
				the above proposition resolves the open questions raised in~\cite{Gup14}.
				
				To show \cref{prop:dual-feasibility-correct}, we first deal with the case where
				\cref{alg:cert-gainfree-Leontief-dual} prints ``dual-feasible'' (or, equivalently, \textsc{DualFeasibility} returns $\false$)
				in \cref{lem:dual-feasible} below.
				Then, we deal with the case where 
				\cref{alg:cert-gainfree-Leontief-dual} prints 
				``dual-infeasible'' (or, equivalently, \textsc{DualFeasibility} returns $\true$) in \cref{lem:dual-infeasible} below.
				
				\begin{lemma}\label{lem:dual-feasible}
					If 
					\textsc{DualFeasibility} returns $\true$, 
					then the dual LP problem~\eqref{eq:dual-LP} is feasible and 
					\textsc{DualSolution} outputs a feasible solution to it.
				\end{lemma}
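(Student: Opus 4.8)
The plan is to show that the hypothesis $\val=\true$ makes the symbolic vector $\bs{\dual}^{(m)}$ a feasible solution of the dual problem \emph{in the order on expressions in $M$} (namely $\alpha_1M+\beta_1>\alpha_2M+\beta_2$ iff $\alpha_1>\alpha_2$, or $\alpha_1=\alpha_2$ and $\beta_1>\beta_2$), and that \textsc{DualSolution} picks the substitution value $\lambda$ exactly so that replacing $M$ by $\lambda$ converts this symbolic feasibility into genuine feasibility over $\mathbb{R}$.

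First I would record that every entry $\dual^{(k)}(v)$ is an affine function of $M$: this holds at $k=0$ since $\dual^{(0)}(v)=M$, and the update $\dual^{(k)}(v)\ot\ell(E)+\sum_{u\in T(E)}\weight(E,u)\dual^{(k-1)}(u)$ preserves affinity because $\ell(E)$ and $\weight(E,u)$ are constants. By induction $\dual^{(m)}(v)=\alpha_vM+\beta_v$ for reals $\alpha_v,\beta_v$, so the integers $\alpha(E),\beta(E)$ formed in \textsc{DualSolution} are well defined. Next I would unwind $\val=\true$. Since neither failure check fired, we have for every $v\in V$ that $\dual^{(m)}(v)\le\ell(E)+\sum_{u\in T(E)}\weight(E,u)\dual^{(m)}(u)$ for all $E$ with $h(E)=v$, and $0\le\ell(E)+\sum_{u\in T(E)}\weight(E,u)\dual^{(m)}(u)$ for all $E$ with $h(E)=\emptyset$, all read in the $M$-order. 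These are precisely the two forms of the dual constraints attached to the hyperarcs, so in the notation of \textsc{DualSolution} (with $\dual^{(m)}(\emptyset)=0$) they state exactly that $\alpha(E)M+\beta(E)\le0$ in the $M$-order for every $E\in\mathcal{E}$.

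The key step is to translate this symbolic inequality into a real one. By the definition of the $M$-order, $\alpha(E)M+\beta(E)\le0$ forces either $\alpha(E)<0$, or $\alpha(E)=0$ with $\beta(E)\le0$; in particular no hyperarc has $\alpha(E)>0$. For a hyperarc with $\alpha(E)=0$ we get $\alpha(E)\lambda+\beta(E)=\beta(E)\le0$ for every $\lambda$, while for a hyperarc with $\alpha(E)<0$ the inequality $\alpha(E)\lambda+\beta(E)\le0$ is equivalent to $\lambda\ge\beta(E)/(-\alpha(E))$. Hence it suffices that $\lambda$ dominate all the finitely many thresholds $\beta(E)/(-\alpha(E))$ over $E$ with $\alpha(E)<0$ — which is exactly the value chosen by \textsc{DualSolution}; and when no such $E$ exists, every $\alpha(E)=0$, so the choice $\lambda=0$ is legitimate. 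Consequently each dual constraint evaluated at $\bs{\dual}^*$, the vector obtained from $\bs{\dual}^{(m)}$ by substituting $\lambda$ for $M$, satisfies $\alpha(E)\lambda+\beta(E)\le0$, i.e.\ $\bs{\dual}^*$ is feasible for \eqref{eq:dual-LP}.

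The main obstacle I anticipate is the bookkeeping in this translation: one must verify that a \emph{single} real threshold $\lambda$ satisfies all constraints simultaneously, which forces the clean separation of the $\alpha(E)=0$ and $\alpha(E)<0$ cases, and that the headless constraints ($h(E)=\emptyset$) are handled uniformly with the head constraints via the convention $\dual^{(m)}(\emptyset)=0$. I note that gainfreeness plays no role here: soundness of the $\true$ branch holds after any number of iterations, so this lemma is purely about correctly decoding the symbolic solution, with gainfreeness reserved for bounding the iteration count and for the infeasibility direction.
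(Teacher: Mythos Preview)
Your proof is correct and follows essentially the same approach as the paper's: both use that $\val=\true$ gives $\alpha(E)M+\beta(E)\le 0$ in the $M$-order for every $E$, split into the cases $\alpha(E)=0$ (where $\beta(E)\le 0$ automatically) and $\alpha(E)<0$ (where the choice $\lambda\ge\beta(E)/(-\alpha(E))$ suffices), and conclude that the substitution yields a genuine feasible solution. Your added remarks on the affinity of $\dual^{(k)}(v)$ in $M$, the uniform treatment of headless hyperarcs via $\dual^{(m)}(\emptyset)=0$, and the irrelevance of gainfreeness here are all correct and make the argument slightly more self-contained than the paper's version.
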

				
				\begin{proof}
					We show that the output $\bs{\dual}^*$ of \textsc{DualSolution} is a feasible solution of the dual LP problem~\eqref{eq:dual-LP}.
					We divide the proof into cases according to the conditions in the definition of $\lambda$ in \textsc{DualSolution}.
					
					Fix $E \in \mathcal{E}$.
					Note that we have 
					\begin{align*}
						\dual^{(m)}(h(E)) \le \ell(E) + \sum_{u \in T(E)}\weight(E,u)\dual^{(m)}(u),
					\end{align*}
					since the conditions of ``if '' and ``else if '' in lines 20 and 22, respectively, are false in \textsc{DualFeasibility}, 
					where we define $\dual^{(m)}(\emptyset)=0$.
					Hence, we have $\alpha(E)M + \beta(E) \le 0$.
					It follows that $\alpha(E) \le 0$.
					If $\alpha(E) = 0$, then $\beta(E) \le 0$ and $\bs{\dual}^*$ satisfy the constraint in the dual LP problem~\eqref{eq:dual-LP} corresponding to $E$.
					If $\alpha(E) < 0$, then 
					$\bs{\dual}^*$ also satisfies the inequality in the dual LP problem~\eqref{eq:dual-LP} corresponding to $E$, since 
					$\lambda \ge \frac{\beta(E)}{-\alpha(E)}$ by definition.
					This completes the proof.
				\end{proof}
				
				Next, we treat the case where 
				\textsc{DualFeasibility} returns $\false$ 
				and show the following. 
				\begin{lemma}\label{lem:dual-infeasible}
					If 
					\textsc{DualFeasibility} returns $\false$, 
					then the dual LP problem~\eqref{eq:dual-LP} is infeasible and 
					\textsc{FarkasCertificateOfDualInfeasibility} returns a  Farkas' certificate of the dual infeasibility.
				\end{lemma}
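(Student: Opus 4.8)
The plan is to show that the vector $\bs{\pray}^*$ returned by \textsc{FarkasCertificateOfDualInfeasibility} is a feasible solution of \eqref{eq:dual-Farkas-certificate}, i.e. $\bs{\pray}^*\ge\bs0$, $A\bs{\pray}^*=\bs0$ and $\bs{c}^T\bs{\pray}^*<0$; the infeasibility of \eqref{eq:dual-LP} then follows at once from \cref{lem:Farkas-dual}. First I would record the invariants of \textsc{DualFeasibility}. Writing each entry as a linear function of $M$, say $\dual^{(k)}(v)=\alpha^{(k)}(v)M+\beta^{(k)}(v)$, an induction on $k$ over the two branches of the inner loop shows $\bs{\pray}^{(k)}_v\ge\bs0$, that $\alpha^{(k)}(v)\ge0$ is non-increasing in $k$ (because $\dual^{(k)}(v)$ decreases in the $M$-ordering, which compares $\alpha$ first), and that $\beta^{(k)}(v)=\bs{c}^T\bs{\pray}^{(k)}_v$ (the base case uses $\dual^{(0)}=M$, $\bs{\pray}^{(0)}=\bs0$, and the step uses $\ell(E)=\bs{c}^T\bs{e}_E$ together with the fact that $\dual^{(k)}(v)$ and $\ell(E)+\sum_u\weight(E,u)\dual^{(k-1)}(u)$ obey the same recursion). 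Using $A\bs{e}_E=\bs{e}_{h(E)}-\sum_{u\in T(E)}\weight(E,u)\bs{e}_u$ (with $\bs{e}_{h(E)}$ absent when $h(E)=\emptyset$), the same induction shows that the deficit $\bs{s}^{(k)}_v:=\bs{e}_v-A\bs{\pray}^{(k)}_v$ is nonnegative and satisfies $\bs{s}^{(k)}_v=\sum_{u\in T(E)}\weight(E,u)\bs{s}^{(k-1)}_u$ whenever $v$ is updated through $E$; in particular $\bs1^T\bs{s}^{(k)}_v=\alpha^{(k)}(v)$, so $\alpha^{(k)}(v)=0$ forces $\bs{s}^{(k)}_v=\bs0$, i.e. $A\bs{\pray}^{(k)}_v=\bs{e}_v$.

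Next I would dispatch the easy branch, where \textsc{DualFeasibility} returns $\false$ because some $E$ with $h(E)=\emptyset$ satisfies $0>\ell(E)+\sum_u\weight(E,u)\dual^{(m)}(u)$. Under the $M$-ordering this inequality forces the coefficient of $M$ on the right, namely $\sum_u\weight(E,u)\alpha^{(m)}(u)$, to vanish; since each $\alpha^{(m)}(u)\ge0$ and $\weight>0$, every tail $u$ has $\alpha^{(m)}(u)=0$, hence $A\bs{\pray}^{(m)}_u=\bs{e}_u$. Then $\bs{\pray}^*=\bs{e}_E+\sum_u\weight(E,u)\bs{\pray}^{(m)}_u\ge\bs0$, and $A\bs{\pray}^*=A\bs{e}_E+\sum_u\weight(E,u)\bs{e}_u=\bs0$ because $h(E)=\emptyset$, while $\bs{c}^T\bs{\pray}^*=\ell(E)+\sum_u\weight(E,u)\bs{c}^T\bs{\pray}^{(m)}_u$ equals the finite negative right-hand side; thus $\bs{\pray}^*$ is a certificate.

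The hard part is the cyclic branch. I would first check that the trace-back is well defined and terminates. A Bellman--Ford-style argument shows that whenever a head $w_j=h(E^{(j)})$ is (virtually) updated at iteration $j$, some tail of $E^{(j)}$ must have been changed at iteration $j-1$: otherwise the value of $E^{(j)}$ computed from $\dual^{(j-1)}$ would equal its value from $\dual^{(j-2)}$, which is at least $\dual^{(j-1)}(w_j)$, contradicting the strict improvement $\dual^{(j)}(w_j)<\dual^{(j-1)}(w_j)$. Hence each $w_{j-1}$ exists, and since only $m$ vertices are available the $m+1$ vertices $w_{m+1},\dots,w_1$ must repeat, giving $w_s=w_t$ with $s<t$ at the first repeat. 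The sequence $w_sE^{(s+1)}w_{s+1}\cdots E^{(t)}w_t$ is then a directed cycle (its intermediate vertices, hence hyperarcs, are distinct since $t$ is the first repeat), so gainfreeness yields $\Gamma:=\prod_{j=s+1}^{t}\weight(E^{(j)},w_{j-1})\ge1$.

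Finally I would verify the three certificate conditions, treating the virtual iteration $m+1$ as a genuine update with $\bs{\pray}^{(m+1)}_{w_{m+1}}$ and $\dual^{(m+1)}(w_{m+1})$ set to the improving values, so that all invariants extend. Dropping all but one term in $\bs{\pray}^{(j)}_{w_j}=\bs{e}_{E^{(j)}}+\sum_{u}\weight(E^{(j)},u)\bs{\pray}^{(j-1)}_u$ gives $\bs{\pray}^{(j)}_{w_j}\ge\weight(E^{(j)},w_{j-1})\bs{\pray}^{(j-1)}_{w_{j-1}}$; telescoping around the cycle yields $\bs{\pray}^{(t)}_w\ge\Gamma\,\bs{\pray}^{(s)}_w\ge\bs{\pray}^{(s)}_w$ with $w=w_s=w_t$, so $\bs{\pray}^*=\bs{\pray}^{(t)}_w-\bs{\pray}^{(s)}_w\ge\bs0$. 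The identical telescoping for $\bs{s}$ gives $\bs{s}^{(t)}_w\ge\Gamma\,\bs{s}^{(s)}_w\ge\bs{s}^{(s)}_w$ entrywise, whereas monotonicity of $\alpha$ gives $\bs1^T\bs{s}^{(t)}_w=\alpha^{(t)}(w)\le\alpha^{(s)}(w)=\bs1^T\bs{s}^{(s)}_w$; a vector dominating another entrywise yet with no larger total must equal it, so $\bs{s}^{(t)}_w=\bs{s}^{(s)}_w$ and $A\bs{\pray}^*=\bs{s}^{(s)}_w-\bs{s}^{(t)}_w=\bs0$. For the objective, $w=w_t$ is updated at iteration $t$, so $\dual^{(t)}(w)<\dual^{(t-1)}(w)\le\dual^{(s)}(w)$; since $\alpha^{(t)}(w)=\alpha^{(s)}(w)$ this forces $\beta^{(t)}(w)<\beta^{(s)}(w)$, giving $\bs{c}^T\bs{\pray}^*=\beta^{(t)}(w)-\beta^{(s)}(w)<0$. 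The main obstacle throughout is exactly this last step: coupling the telescoped entrywise bound with gainfreeness ($\Gamma\ge1$) and the monotonicity of $\alpha$ to pin down $A\bs{\pray}^*=\bs0$.
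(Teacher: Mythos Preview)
Your argument is correct and, in the cyclic branch, takes a genuinely different route from the paper. The paper establishes $A\bs{\pray}^*=\bs{0}$ by first proving (\cref{cl:beard-nontriv}) that every ``side tail'' $u\in T(E^{(k)})\setminus\{w_{k-1}\}$ along the cycle has $\nontriv^{(k-1)}(u)=\true$, and then does a case split on whether $\nontriv^{(t)}(w_t)$ is true or false; in the first case it uses $A\bs{\pray}^{(s)}_{w_s}=\bs{e}_{w_s}$ directly, and in the second it argues from \cref{eq:dual-beta-alpha} that $\prod_k\weight(E^{(k)},w_{k-1})=1$. You instead introduce the deficit $\bs{s}^{(k)}_v=\bs{e}_v-A\bs{\pray}^{(k)}_v$ and the identity $\bs{1}^T\bs{s}^{(k)}_v=\alpha^{(k)}(v)$, telescope around the cycle to get $\bs{s}^{(t)}_w\ge\Gamma\,\bs{s}^{(s)}_w\ge\bs{s}^{(s)}_w$ entrywise from gainfreeness, and combine this with $\bs{1}^T\bs{s}^{(t)}_w\le\bs{1}^T\bs{s}^{(s)}_w$ from monotonicity of $\alpha$ to squeeze out equality. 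This avoids the $\nontriv$ bookkeeping and the case split entirely, and the same squeeze immediately yields $\alpha^{(t)}(w)=\alpha^{(s)}(w)$, which is exactly what you need for $\bs{c}^T\bs{\pray}^*<0$. The paper's approach has the advantage that the $\nontriv$ data is reused later for primal feasibility, so it is not wasted effort in the larger algorithm; your approach has the advantage of a cleaner, self-contained proof of this lemma that makes the role of gainfreeness more transparent. For the empty-head branch and the trace-back termination, your treatment is essentially the same as the paper's (\cref{lem:dual-infeasible-head-empty} and \cref{cl:extended-change-true-exists}).
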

				
				The proof of \cref{lem:dual-infeasible} is the most technical part of our results.
				Intuitively, when 
				\textsc{DualFeasibility} returns $\false$, 
				we can find a ``negative cycle'' as in the case of difference constraint (DC) systems.
				Here, the gainfree property assures that such a negative cycle, together with paths to the tails of hyperarcs in the cycle, corresponds to an infeasible subsystem of \eqref{eq:dual-LP}.
				The vector $\bs{r}^{(m)}_{v}$ stores how the negative cycle is derived from constraints in \eqref{eq:dual-LP} and helps to compute such a subsystem (with multiplicity) in \textsc{FarkasCertificateOfDualInfeasibility}.
				
				We first treat the case where the ``if '' condition in line 20 is false and the ``else if '' condition in line 22 is true in \textsc{DualFeasibility}.

				\begin{lemma}\label{lem:dual-infeasible-head-empty}
					If 
					\textsc{DualFeasibility} returns $\false$ 
					as the ``if '' condition in line 20 is false and the ``else if '' condition in line 22 is true, 
					then the dual LP problem~\eqref{eq:dual-LP} is infeasible and \textsc{FarkasCertificateOfDualInfeasibility} returns a Farkas' certificate of the dual infeasibility.
				\end{lemma}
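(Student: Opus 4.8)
The plan is to verify directly that the vector $\bs{\pray}^*$ returned in the ``else'' branch of \textsc{FarkasCertificateOfDualInfeasibility} solves the system~\eqref{eq:dual-Farkas-certificate}, i.e.\ $A\bs{\pray}^* = \bs{0}$, $\bs{c}^T\bs{\pray}^* < 0$, and $\bs{\pray}^* \ge \bs{0}$; by \cref{lem:Farkas-dual} this is exactly a Farkas' certificate of infeasibility of the dual LP problem~\eqref{eq:dual-LP}. In the present case the top condition of \textsc{FarkasCertificateOfDualInfeasibility} (the line-20 condition) is false, so the algorithm sets $\bs{\pray}^* = \bs{e}_{E_0} + \sum_{u \in T(E_0)}\weight(E_0,u)\bs{\pray}^{(m)}_u$, where $E_0$ is a hyperarc with $h(E_0)=\emptyset$ and $0 > \ell(E_0) + \sum_{u \in T(E_0)}\weight(E_0,u)\dual^{(m)}(u)$.

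First I would record three invariants of \textsc{DualFeasibility}, proved simultaneously by induction on $k$, after writing each iterate entry as a linear function of $M$, say $\dual^{(k)}(v) = \alpha^{(k)}(v)M + \beta^{(k)}(v)$ with $\alpha^{(0)}(v)=1$, $\beta^{(0)}(v)=0$:
(a) $\bs{\pray}^{(k)}_v \ge \bs{0}$;
(b) $\bs{c}^T\bs{\pray}^{(k)}_v = \beta^{(k)}(v)$, so $\bs{\pray}^{(k)}_v$ accounts exactly for the $M$-free part of the bound;
(c) the vector $\bs{a}^{(k)}_v := \bs{e}_v - A\bs{\pray}^{(k)}_v \in \mathbb{R}^m$ is nonnegative and satisfies $\bs{1}^T\bs{a}^{(k)}_v = \alpha^{(k)}(v)$.
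All three pass trivially through the no-update branch, and in the update branch they follow from the defining recursion $\bs{\pray}^{(k)}_v = \bs{e}_E + \sum_{u\in T(E)}\weight(E,u)\bs{\pray}^{(k-1)}_u$ together with the column identity $A\bs{e}_E = \bs{e}_{h(E)} - \sum_{u\in T(E)}\weight(E,u)\bs{e}_u$ (the head term absent when $h(E)=\emptyset$). For (b) this gives $\bs{c}^T\bs{\pray}^{(k)}_v = \ell(E) + \sum_u\weight(E,u)\beta^{(k-1)}(u) = \beta^{(k)}(v)$; for (c) it gives $\bs{a}^{(k)}_v = \sum_u\weight(E,u)\bs{a}^{(k-1)}_u$, whose nonnegativity and total mass match the recursion $\alpha^{(k)}(v) = \sum_u\weight(E,u)\alpha^{(k-1)}(u)$ governing the $M$-coefficient of $\dual^{(k)}(v)$.

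Next I would extract the two scalars hidden in the violation at $E_0$. Writing $\ell(E_0) + \sum_{u}\weight(E_0,u)\dual^{(m)}(u) = \alpha_0 M + \beta_0$ with $\alpha_0 = \sum_u \weight(E_0,u)\alpha^{(m)}(u)$ and $\beta_0 = \ell(E_0) + \sum_u \weight(E_0,u)\beta^{(m)}(u)$, the termination test $0 > \alpha_0 M + \beta_0$ must be read through the defined $M$-ordering, i.e.\ $0\cdot M + 0 > \alpha_0 M + \beta_0$. Since each $\alpha^{(m)}(u)\ge 0$ and $\weight(E_0,u) > 0$, we have $\alpha_0 \ge 0$, so the ordering forces $\alpha_0 = 0$ and $\beta_0 < 0$. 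By invariant (b), $\bs{c}^T\bs{\pray}^* = \beta_0 < 0$, which is the second required inequality for free. Furthermore $\alpha_0 = \sum_u \weight(E_0,u)\alpha^{(m)}(u) = 0$ with nonnegative summands forces $\alpha^{(m)}(u) = 0$ for every $u\in T(E_0)$, whence invariant (c) gives $\bs{a}^{(m)}_u = \bs{0}$, that is $A\bs{\pray}^{(m)}_u = \bs{e}_u$, for each such $u$.

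Finally I would assemble $A\bs{\pray}^*$. Using $A\bs{e}_{E_0} = -\sum_{u\in T(E_0)}\weight(E_0,u)\bs{e}_u$ (since $h(E_0)=\emptyset$) and the just-derived $A\bs{\pray}^{(m)}_u = \bs{e}_u$, I obtain
\[
A\bs{\pray}^* = -\sum_{u\in T(E_0)}\weight(E_0,u)\bs{e}_u + \sum_{u\in T(E_0)}\weight(E_0,u)\bs{e}_u = \bs{0}.
\]
Combined with $\bs{c}^T\bs{\pray}^* < 0$ and $\bs{\pray}^* \ge \bs{0}$ (the last from invariant (a), $\bs{e}_{E_0}\ge\bs{0}$, and $\weight>0$), this shows $\bs{\pray}^*$ is feasible for~\eqref{eq:dual-Farkas-certificate}, so \cref{lem:Farkas-dual} yields the infeasibility of~\eqref{eq:dual-LP}. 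I expect the main obstacle to be not any single computation but the $M$-bookkeeping: establishing the coefficient identity in invariant (c) and verifying that the ``arbitrary large'' semantics of $M$ (the defined comparison rule) is precisely what turns $\alpha_0 = 0$ into the exact annihilation $A\bs{\pray}^* = \bs{0}$. Once (c) and this reading of the ordering are in place, the remaining steps are immediate, which is why this head-empty case is considerably simpler than the cycle case.
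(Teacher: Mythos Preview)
Your proof is correct and follows the same overall shape as the paper's: verify $\bs{\pray}^*\ge\bs{0}$, $A\bs{\pray}^*=\bs{0}$, and $\bs{c}^T\bs{\pray}^*<0$, the key step being that the $M$-coefficient of $\dual^{(m)}(u)$ vanishes for every $u\in T(E_0)$, which forces $A\bs{\pray}^{(m)}_u=\bs{e}_u$. Where you differ is in how that last implication is obtained. The paper routes it through the boolean flag $\nontriv^{(k)}(v)$: \cref{cl:nontriv=constant} shows $\dual^{(k)}(v)$ is $M$-free iff $\nontriv^{(k)}(v)=\true$, and \cref{cl:each-primal-ray-property} separately shows that $\nontriv^{(k)}(v)=\true$ implies $A\bs{\pray}^{(k)}_v=\bs{e}_v$. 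You instead prove a single quantitative invariant, your (c), namely that the defect vector $\bs{a}^{(k)}_v=\bs{e}_v-A\bs{\pray}^{(k)}_v$ is nonnegative with total mass exactly equal to the $M$-coefficient $\alpha^{(k)}(v)$; then $\alpha^{(m)}(u)=0$ immediately gives $\bs{a}^{(m)}_u=\bs{0}$. This bypasses the $\nontriv$ bookkeeping entirely and is arguably more direct for this lemma, while the paper's formulation via $\nontriv$ is reused later for the primal-feasibility side of the algorithm. Your invariants (a) and (b) coincide with the first parts of \cref{cl:each-primal-ray-property}, so the only genuine novelty is the mass identity in (c), which is a clean strengthening of the paper's $A\bs{\pray}^{(k)}_v\le\bs{e}_v$.
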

				
				To show this lemma, we need some auxiliary claims.
				
				\begin{claim}\label{cl:nontriv=constant}
					In the end of \textsc{DualFeasibility}, 
					for all $k \in \{1,\dots,m\}$ and $v \in V$, 
					$y^{(k)}(v)$ contains $M$ if and only if 
					$\nontriv^{(k)}(v)=\false$.
					Moreover, if $y^{(k)}(v)$ contains $M$, the coefficient of $M$ is positive for all $k=1,\dots, m$ and $v \in V$.
				\end{claim}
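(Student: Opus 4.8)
The plan is to prove both assertions simultaneously by induction on $k$, using the fact that every value $\dual^{(k)}(v)$ produced by \textsc{DualFeasibility} is a linear function $\alpha_{k}(v)\,M + \beta_{k}(v)$ of the symbol $M$. The strengthened inductive hypothesis I would carry is that, for every $v \in V$, (a) $\alpha_{k}(v) \ge 0$, and (b) $\dual^{(k)}(v)$ contains $M$ --- meaning $\alpha_{k}(v) \neq 0$ --- if and only if $\nontriv^{(k)}(v) = \false$. Part (b) is exactly the iff in the claim, while (a) upgrades it to the positivity assertion: whenever $M$ occurs, $\alpha_{k}(v)$ is nonzero and, by (a), nonnegative, hence positive.

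For the base case $k = 0$ the initialization sets $\dual^{(0)}(v) = M$ and $\nontriv^{(0)}(v) = \false$, so $\alpha_{0}(v) = 1 > 0$ and (a)--(b) hold. For the inductive step I would split on the two branches of the inner loop at iteration $k$. If $v$ is not updated, then $\dual^{(k)}(v) = \dual^{(k-1)}(v)$ and $\nontriv^{(k)}(v) = \nontriv^{(k-1)}(v)$, so (a)--(b) transfer verbatim. If $v$ is updated through the chosen hyperarc $E$, then
$$\dual^{(k)}(v) = \ell(E) + \sum_{u \in T(E)}\weight(E,u)\,\dual^{(k-1)}(u),$$
so $\alpha_{k}(v) = \sum_{u \in T(E)}\weight(E,u)\,\alpha_{k-1}(u)$, since $\ell(E)$ carries no $M$. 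Each $\weight(E,u) > 0$ and, by the hypothesis, each $\alpha_{k-1}(u) \ge 0$, which yields (a); moreover $\alpha_{k}(v) \neq 0$ exactly when some $u \in T(E)$ has $\alpha_{k-1}(u) > 0$, equivalently (by the inductive (b)) some $u \in T(E)$ has $\nontriv^{(k-1)}(u) = \false$. This last condition is precisely the one that routes the algorithm into the \emph{else} branch of the $\nontriv$ update; its negation --- every tail $u$ (including the case $T(E) = \emptyset$) satisfying $\nontriv^{(k-1)}(u) = \true$ --- routes it into the \emph{if} branch, where $\nontriv^{(k)}(v) = \true$ is set and, correspondingly, $\alpha_{k}(v) = 0$.

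Assembling (b) in the update case is where the only real subtlety lies. In the \emph{if} branch we have just matched $\nontriv^{(k)}(v) = \true$ with $\alpha_{k}(v) = 0$. In the \emph{else} branch $\alpha_{k}(v) > 0$ while the algorithm only copies $\nontriv^{(k)}(v) = \nontriv^{(k-1)}(v)$, so I must rule out $\nontriv^{(k-1)}(v) = \true$; this is the main obstacle. I would settle it through the monotone (non-increasing) behaviour of the values under the order on linear functions of $M$ fixed in the excerpt. If $\nontriv^{(k-1)}(v) = \true$, then by the hypothesis $\alpha_{k-1}(v) = 0$, so $\dual^{(k-1)}(v)$ carries no $M$; but the update fires only when $\dual^{(k-1)}(v) > \dual^{(k)}(v)$, whereas in the \emph{else} branch $\alpha_{k}(v) > 0$, and by definition of the order a linear function with positive $M$-coefficient strictly exceeds one with zero $M$-coefficient. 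Hence $\dual^{(k-1)}(v) < \dual^{(k)}(v)$, contradicting the update condition. Therefore $\nontriv^{(k-1)}(v) = \false$, so $\nontriv^{(k)}(v) = \false$, completing the induction. The same observation shows that once a value loses its $M$-term it can never regain one, so $\nontriv$ is in fact monotone in $k$; that monotonicity is the conceptual content behind the argument.
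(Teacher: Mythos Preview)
Your proof is correct and follows essentially the same inductive approach as the paper. The only organizational difference is that the paper cases on the value of $\nontriv^{(k)}(v)$ and, for the subtle case $\nontriv^{(k)}(v)=\true$ with $q(v)<k$, invokes the inequality $\dual^{(k)}(v)\le \dual^{(q(v))}(v)$, whereas you case on the algorithm's branches and dispatch the same situation via the one-step strict decrease $\dual^{(k-1)}(v)>\dual^{(k)}(v)$; both amount to the same monotonicity observation.
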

				
				\begin{proof}
					We show these by induction on $k$ when the outer for-loop of \textsc{DualFeasibility} finishes the $k$th iteration.
					Let $k = 1$ and fix $v \in V$.
					Assume that $\nontriv^{(1)}(v)=\true$.
					Then, $\dual^{(1)}(v)$ is updated using $E \in \mathcal{E}$ with $T(E)=\emptyset$ by lines 10--12 in \textsc{DualFeasibility}.
					Hence, $\dual^{(1)}(v) = \ell(E)$ and thus it does not contain $M$.
					On the other hand, assume that $\nontriv^{(1)}(v)=\false$.
					If $\dual^{(1)}(v) = \dual^{(0)}(v) = M$, then it contains $M$ and its coefficient is one and thus positive.
					If $\dual^{(1)}(v) \neq \dual^{(0)}(v)$, then there exists $E \in \mathcal{E}$ such that $\dual^{(1)}(v)=\sum_{u \in T(E)}\weight(E,u)\dual^{(0)}(u) + \ell(E) = \sum_{u \in T(E)}\weight(E,u)M + \ell(E)$.
					Since $\nontriv^{(1)}(v)=\false$, we have $T(E) \neq \emptyset$.
					Therefore, $\dual^{(1)}(v)$ contains $M$ and its coefficient is positive since $\weight(E,u) > 0$ for each $u \in T(E)$.
					This completes the proof for $k=1$.
					
					For $k > 1$, fix $v \in V$.
					Assume that $\nontriv^{(k)}(v)=\true$.
					If $q(v) = k$, then $\dual^{(k)}(v)$ is updated using $E \in \mathcal{E}$ with $\nontriv^{(k-1)}(u)=\true$ for all $u \in T(E)$ by lines 10--12 in \textsc{DualFeasibility}.
					Then, by the inductive hypothesis, 
					$\dual^{(k-1)}(u)$ does not contain $M$ for all $u \in T(E)$.
					Hence, $\dual^{(k)}(v) = \ell(E) + \sum_{u \in T(E)}\weight(E,u)\dual^{(k-1)}(u)$ does not contain $M$.
					If $q(v) < k$, then $\dual^{(q(v))}(v)$ does not contain $M$ by inductive hypothesis.
					Since $\dual^{(k)}(v) \le \dual^{(q(v))}(v)$, 
					$\dual^{(k)}(v)$ does not contain $M$ either.
					On the other hand, assume that $\nontriv^{(k)}(v)=\false$.
					If $\dual^{(k)}(v) = \dual^{(k-1)}(v)$, then $\dual^{(k)}(v)$ contains $M$ and its coefficient is positive by the inductive hypothesis.
					If $\dual^{(k)}(v) \neq \dual^{(k-1)}(v)$, then there exists $E \in \mathcal{E}$ such that $\dual^{(k)}(v)=\ell(E) + \sum_{u \in T(E)}\weight(E,u)\dual^{(k-1)}(u)$.
					Since $\nontriv^{(k)}(v)=\false$, for some $u' \in T(E)$ we have $\nontriv^{(k-1)}(u')=\false$.
					By the inductive hypothesis, $\dual^{(k-1)}(u')$ contains $M$.
					Therefore, $\dual^{(k)}(v)$ also contains $M$ and its coefficient is positive since for each $u \in T(E)$ we have $\weight(E,u) > 0$ and the coefficient of $M$ in $\dual^{(k-1)}(u)$ with $\nontriv^{(k-1)}(u)=\false$ is positive by the inductive hypothesis.
					This completes the proof.
				\end{proof}
				
				\begin{claim}\label{cl:each-primal-ray-property}
					In the end of \textsc{DualFeasibility}, 
					for all $k \in \{1,\dots,m\}$ and $v \in V$, we have 
					$A\bs{\pray}^{(k)}_v \le  \bs{e}_v$, $\bs{\pray}^{(k)}_v \ge \bs{0}$, and 
					$\bs{c}^{T}\bs{\pray}^{(k)}_v$ equals the constant term of $\dual^{(k)}(v)$.
					If $\nontriv^{(k)}(v) = \true$, then 
					$A\bs{\pray}^{(k)}_v = \bs{e}_v$ and $\bs{c}^{T}\bs{\pray}^{(k)}_v = \dual^{(k)}(v)$.
				\end{claim}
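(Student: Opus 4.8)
The plan is to prove all four assertions simultaneously by induction on $k$, ranging from the initialization $k=0$ up to $k=m$, with $v\in V$ fixed arbitrarily at each stage. The whole computation rests on one structural identity relating the hypergraph to the matrix: for every hyperarc $E$ we have $A\bs{e}_E=\bs{e}_{h(E)}-\sum_{u\in T(E)}\weight(E,u)\bs{e}_u$ (reading $\bs{e}_{h(E)}=\bs{0}$ when $h(E)=\emptyset$). This is merely a restatement of the definition of the column of $A$ indexed by $E$, whose entry in row $h(E)$ is $1$, in each row $u\in T(E)$ is $-\weight(E,u)$, and is $0$ elsewhere. For the base case $k=0$ we have $\bs{\pray}^{(0)}_v=\bs{0}$, so $A\bs{\pray}^{(0)}_v=\bs{0}\le\bs{e}_v$, $\bs{\pray}^{(0)}_v\ge\bs{0}$, and $\bs{c}^T\bs{\pray}^{(0)}_v=0$, which is the constant term of $\dual^{(0)}(v)=M$; the conditional statement is vacuous since $\nontriv^{(0)}(v)=\false$.

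For the inductive step I distinguish the two branches of the outer update. If $v$ is not updated in iteration $k$, then $\bs{\pray}^{(k)}_v=\bs{\pray}^{(k-1)}_v$, $\dual^{(k)}(v)=\dual^{(k-1)}(v)$, and $\nontriv^{(k)}(v)=\nontriv^{(k-1)}(v)$, so every assertion transfers verbatim from the induction hypothesis. If $v$ is updated using a hyperarc $E$ with $h(E)=v$, I substitute the recurrence $\bs{\pray}^{(k)}_v=\bs{e}_E+\sum_{u\in T(E)}\weight(E,u)\bs{\pray}^{(k-1)}_u$ and apply the structural identity to obtain $A\bs{\pray}^{(k)}_v=\bs{e}_v+\sum_{u\in T(E)}\weight(E,u)\bigl(A\bs{\pray}^{(k-1)}_u-\bs{e}_u\bigr)$. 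Since $\weight(E,u)>0$ and each parenthesized term is $\le\bs{0}$ by the induction hypothesis, this yields $A\bs{\pray}^{(k)}_v\le\bs{e}_v$; nonnegativity is immediate from $\bs{e}_E\ge\bs{0}$, $\weight(E,u)>0$, and $\bs{\pray}^{(k-1)}_u\ge\bs{0}$. Taking inner products with $\bs{c}$ and using $\bs{c}^T\bs{e}_E=\ell(E)$ together with the cost part of the induction hypothesis shows that $\bs{c}^T\bs{\pray}^{(k)}_v$ reproduces exactly the linear combination $\ell(E)+\sum_{u\in T(E)}\weight(E,u)\bigl(\text{const.\ term of }\dual^{(k-1)}(u)\bigr)$ defining the constant term of $\dual^{(k)}(v)$.

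The delicate point, and the step I expect to be the main obstacle, is upgrading the inequality $A\bs{\pray}^{(k)}_v\le\bs{e}_v$ to an equality precisely when $\nontriv^{(k)}(v)=\true$. Here I would invoke \cref{cl:nontriv=constant}: if $\nontriv^{(k)}(v)=\true$ then $\dual^{(k)}(v)$ carries no $M$, and since every $\dual^{(k-1)}(u)$ has a nonnegative coefficient of $M$ that is strictly positive exactly when $\nontriv^{(k-1)}(u)=\false$, the vanishing of the $M$-coefficient of $\dual^{(k)}(v)=\ell(E)+\sum_{u\in T(E)}\weight(E,u)\dual^{(k-1)}(u)$ forces $\nontriv^{(k-1)}(u)=\true$ for every $u\in T(E)$; no cancellation is possible because all these coefficients and all weights are positive. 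The induction hypothesis then gives $A\bs{\pray}^{(k-1)}_u=\bs{e}_u$ for each such $u$, so every parenthesized term in the displayed expression vanishes and $A\bs{\pray}^{(k)}_v=\bs{e}_v$; moreover $\dual^{(k)}(v)$ equals its own constant term, whence $\bs{c}^T\bs{\pray}^{(k)}_v=\dual^{(k)}(v)$. The non-update branch is handled identically by inheritance from the induction hypothesis. Thus the crux is the interaction between the symbolic arithmetic in $M$ and the tightness of the inequality, which the positivity of the $M$-coefficients recorded in \cref{cl:nontriv=constant} resolves cleanly.
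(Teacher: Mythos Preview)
Your proof is correct and follows essentially the same approach as the paper's own proof: induction on $k$, splitting into the update and no-update branches, using the column identity $A\bs{e}_E=\bs{e}_{h(E)}-\sum_{u\in T(E)}\weight(E,u)\bs{e}_u$, and invoking \cref{cl:nontriv=constant} to force $\nontriv^{(k-1)}(u)=\true$ for all $u\in T(E)$ whenever $\nontriv^{(k)}(v)=\true$ in the update branch. The only cosmetic difference is that you anchor the induction at $k=0$ (where everything is trivially true by initialization) rather than at $k=1$ as the paper does, which if anything is slightly cleaner.
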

				
				\begin{proof}
					We show this by induction on $k$.
					Assume that $k=1$.
					Fix $v \in V$.
					If $\change^{(1)}(v) = \true$, then $\bs{\pray}^{(1)}_{v} = e_{E} + \sum_{u\in T(E)}\weight(E,u)\bs{\pray}^{(0)}_{u}$ for some $E \in \mathcal{E}$ with $h(E) = v$.
					Since $\bs{\pray}^{(0)}_{u} = \bs{0}$ for all $u \in T(E)$, we have $\bs{\pray}^{(1)}_{v} = \bs{e}_E \ge \bs{0}$.
					Then, $A\bs{\pray}^{(1)}_v = A\bs{e}_E = \bs{e}_v - \sum_{u\in T(E)}\weight(E,u)\bs{e}_u \le \bs{e}_v$.
					Moreover, $\bs{c}^{T}\bs{\pray}^{(1)}_v = \bs{c}^{T}\bs{e}_E = \ell(E)$ and $\dual^{(1)}(v) = \ell(E) + \sum_{u \in T(E)}\weight(E,u)\dual^{(0)}(u)$.
					Since $\dual^{(0)}(u) = M$ for every $u \in T(E)$, 
					$\bs{c}^{T}\bs{\pray}^{(0)}_v$ equals the constant term of $\dual^{(0)}(v)$.
					Furthermore, if $\nontriv^{(1)}(v) = \true$, then $T(E) = \emptyset$ from lines 10--12 in \textsc{DualFeasibility}.
					Hence, $A\bs{\pray}^{(1)}_v = A\bs{e}_E = \bs{e}_v - \sum_{u\in T(E)}\weight(E,u)\bs{e}_u = \bs{e}_v$.
					Moreover, $\dual^{(1)}(v) = \ell(E) + \sum_{u \in T(E)}\weight(E,u)\dual^{(0)}(u) = \ell(E)$.
					Therefore, $\bs{c}^{T}\bs{\pray}^{(1)}_v = \ell(E) = \dual^{(1)}(v)$.
					If $\change^{(1)}(v) = \false$, then $\bs{\pray}^{(1)}_v = \bs{\pray}^{(0)}_v = 0$ and  $A\bs{\pray}^{(1)}_v = \bs{0}$. 
					Moreover, $\bs{c}^{T}\bs{\pray}^{(1)}_v = 0$ and $\dual^{(1)}(v) = \dual^{(0)}(v) = M$, and thus the constant term of $\dual^{(1)}(v)$ equals $\bs{c}^{T}\bs{\pray}^{(1)}_v$.
					Furthermore, $\nontriv^{(1)}(v) = \false$, and there exists nothing further to prove.
					
					Now, consider $k$ with $2 \le k \le m$.
					Fix $v \in V$.
					If $\change^{(k)}(v) = \false$, then the claimed properties hold by inductive hypothesis.
					If $\change^{(k)}(v) = \true$, then $\bs{\pray}^{(k)}_{v} = \bs{e}_E + \sum_{u \in T(E)}\weight(E,u)\bs{\pray}^{(k-1)}_{u}$ for some $E \in \mathcal{E}$ with $h(E) = v$. 
					Since $\weight(E,u) > 0$ by definition and $\bs{\pray}^{(k-1)}_{u} \ge 0$ by the inductive hypothesis for $u \in T(E)$, 
					we have $\bs{\pray}^{(k)}_{v} \ge 0$.
					Moreover, since $A\bs{\pray}^{(k-1)}_{u} \le \bs{e}_u$ $(u \in T(E))$ by the inductive hypothesis, we have 
					\begin{align*}
						A\bs{\pray}^{(k)}_{v} &= A\bs{e}_E + A\sum_{u \in T(E)}\weight(E,u)\bs{\pray}^{(k-1)}_{u}\\
						&= (\bs{e}_v - \sum_{u\in T(E)}\weight(E,u)\bs{e}_u) + \sum_{u\in T(E)}\weight(E,u)A\bs{\pray}^{(k-1)}_{u} \\
						&\le (\bs{e}_v - \sum_{u\in T(E)}\weight(E,u)\bs{e}_u) + \sum_{u\in T(E)}\weight(E,u)\bs{e}_u \\
						&= \bs{e}_v.
					\end{align*}
					Furthermore, $\bs{c}^{T}\bs{\pray}^{(k)}_v = \bs{c}^{T} \bs{e}_E + \sum_{u \in T(E)}\weight(E,u)\bs{c}^{T}\bs{\pray}^{(k-1)}_{u} = \ell(E) + \sum_{u \in T(E)}\weight(E,u)\bs{c}^{T}\bs{\pray}^{(k-1)}_{u}$, and $\dual^{(k)}(v) = \ell(E) + \sum_{u \in T(E)}\weight(E,u)\dual^{(k-1)}(u)$.
					By inductive hypothesis, for $u \in T(E)$, 
					$\bs{c}^{T}\bs{\pray}^{(k-1)}_{u}$ equals the constant term of $\dual^{(k-1)}(u)$.
					Hence, $\bs{c}^{T}\bs{\pray}^{(k)}_{v}$ equals the constant term of $\dual^{(k)}(v)$.
					
					Now, assume that $\nontriv^{(k)}(v) = \true$.
					Then from \cref{cl:nontriv=constant} $\dual^{(k)}(v)$ does not contain $M$.
					Moreover, $\nontriv^{(k-1)}(u) = \true$ for $u \in T(E)$, since otherwise for some $u \in T(E)$ $y^{(k-1)}(u)$ contains $M$ from \cref{cl:nontriv=constant} and so does $y^{(k)}(v)$, a contradiction.
					Therefore, we have $A\bs{\pray}^{(k-1)}_u = \bs{e}_u$ and $\bs{c}^{T}\bs{\pray}^{(k-1)}_u = \dual^{(k-1)}(u)$ for $u \in T(E)$ by inductive hypothesis.
					Hence, we have $A\bs{\pray}^{(k)}_{v} = A\bs{e}_E + A\sum_{u \in T(E)}\weight(E,u)\bs{\pray}^{(k-1)}_{u} = (\bs{e}_v - \sum_{u\in T(E)}\weight(E,u)\bs{e}_u) + \sum_{u\in T(E)}\weight(E,u)A\bs{\pray}^{(k-1)}_{u} = (\bs{e}_v - \sum_{u\in T(E)}\weight(E,u)\bs{e}_u) + \sum_{u\in T(E)}\weight(E,u)\bs{e}_u = \bs{e}_v$.
					Moreover, 
					$\bs{c}^{T}\bs{\pray}^{(k)}_{v} = \bs{c}^{T}\bs{e}_E + \sum_{u \in T(E)}\weight(E,u)\bs{c}^{T}\bs{\pray}^{(k-1)}_{u} = \ell(E) + \sum_{u \in T(E)}\weight(E,u)\dual^{(k-1)}(u)$ by the inductive hypothesis.
					Since $\dual^{(k)}(v) = \ell(E) + \sum_{u \in T(E)}\weight(E,u)\dual^{(k-1)}(u)$, we have $\bs{c}^{T}\bs{\pray}^{(k)}_{v} = \dual^{(k)}(v)$.
					This completes the proof.
				\end{proof}
				
				Now, we are ready to show \cref{lem:dual-infeasible-head-empty}.
				
				\begin{proof}[Proof of \cref{lem:dual-infeasible-head-empty}]
					We show that $\bs{\pray}^*$ is actually a Farkas' certificate of the dual infeasibility, i.e., (i) $\bs{\pray}^* \ge \bs{0}$, (ii) $A\bs{\pray}^* = \bs{0}$, and (iii) $\bs{c}^{T}\bs{\pray}^* < 0$ (see \cref{lem:Farkas-dual}).
					
					For (i), from \cref{cl:each-primal-ray-property}, we have that 
					$\bs{\pray}^* (=\bs{e}_E + \sum_{u \in T(E)}\weight(E,u)\bs{\pray}^{(m)}_{u})$ is a sum of nonnegative vectors.
					Hence, $\bs{\pray}^* \ge \bs{0}$.
					
					For (ii), observe that to satisfy $0 > \ell(E) + \sum_{u \in T(E)}\weight(E,u)\dual^{(m)}(u)$, $y^{(m)}(u)$ must not contain $M$ for each $u \in T(E)$, since otherwise the right-hand side of the inequality contains $M$ with a positive coefficient from \cref{cl:nontriv=constant} and thus greater than zero.
					Hence, for each $u \in T(E)$ $\nontriv^{(m)}(u) = \true$ from \cref{cl:nontriv=constant}, 
					implying that $A\bs{\pray}^{(m)}_u = \bs{e}_u$ from \cref{cl:each-primal-ray-property}.
					Therefore, we have 
					\begin{align*}
						A\bs{\pray}^* &=A\bs{e}_E + \sum_{u \in T(E)}\weight(E,u)A\bs{\pray}^{(m)}_{u}\\
						&= -\sum_{u \in T(E)}\weight(E,u)\bs{e}_u + \sum_{u \in T(E)}\weight(E,u)\bs{e}_u\\
						&=0.
					\end{align*}
					
					For (iii), for each $u \in T(E)$ we have $\bs{c}^{T}\bs{\pray}^{(m)}_u = \dual^{(m)}(u)$ from \cref{cl:each-primal-ray-property} since $\nontriv^{(m)}(u) = \true$.
					Hence, we have 
					\begin{align*}
						\bs{c}^{T}\bs{\pray}^* &=
						\bs{c}^{T}\bs{e}_E + \sum_{u \in T(E)}\weight(E,u)\bs{c}^{T}\bs{\pray}^{(m)}_{u}\\
						&=\ell(E) + \sum_{u \in T(E)}\weight(E,u)\dual^{(m)}(u)\\
						&< 0.
					\end{align*}
					
					Hence, $\bs{\pray}^*$ is a Farkas' certificate of the dual infeasibility and by \cref{lem:Farkas-dual} the dual LP problem~\eqref{eq:dual-LP} is infeasible.
				\end{proof}
				
				We then deal with the case where the ``if '' condition in line 20 is true in \textsc{DualFeasibility}.
				
				\begin{lemma}\label{lem:dual-infeasible-head-nonempty}
					If 
					\textsc{DualFeasibility} returns $\false$ 
					as the ``if '' condition in line 20 is true, 
					then the dual LP problem~\eqref{eq:dual-LP} is infeasible and \textsc{FarkasCertificateOfDualInfeasibility} returns a Farkas' certificate of the dual infeasibility.
				\end{lemma}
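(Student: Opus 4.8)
The plan is to verify that the vector $\bs{\pray}^*=\bs{\pray}^{(t)}_{w_t}-\bs{\pray}^{(s)}_{w_s}$ returned by \textsc{FarkasCertificateOfDualInfeasibility} satisfies the three conditions of \cref{lem:Farkas-dual}, i.e. $\bs{\pray}^*\ge\bs{0}$, $A\bs{\pray}^*=\bs{0}$, and $\bs{c}^T\bs{\pray}^*<0$. First I would check that the back-tracking loop is well defined and terminates with a genuine directed cycle. The key point is that whenever a head $w_k=h(E^{(k)})$ was strictly updated at iteration $k$, at least one tail $u\in T(E^{(k)})$ must satisfy $\change^{(k-1)}(u)=\true$: otherwise every tail would have $\dual^{(k-1)}(u)=\dual^{(k-2)}(u)$, so the same bound was already available one iteration earlier and $w_k$ could not have strictly decreased at iteration $k$. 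Hence the loop can always pick $w_{k-1}$, and since it generates the $m+1$ vertices $w_{m+1},\dots,w_1$ out of only $m$ vertices, by pigeonhole the test ``$w_{k-1}=w_q$ for some $q\ge k$'' triggers, yielding indices $s<t$ with $w_s=w_t$. As this is the first repetition, $w_s,\dots,w_{t-1}$ are distinct, so the heads $h(E^{(j)})=w_j$ $(j=s+1,\dots,t)$ are distinct and therefore the hyperarcs $E^{(s+1)},\dots,E^{(t)}$ are distinct as well; thus $w_sE^{(s+1)}w_{s+1}\cdots E^{(t)}w_t$ is a directed cycle and gainfreeness applies to it.

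For $\bs{\pray}^*\ge\bs{0}$, I would unfold $\bs{\pray}^{(t)}_{w_t}$ along the cycle. Writing $\gamma_j=\weight(E^{(j)},w_{j-1})$ and peeling off $\bs{e}_{E^{(j)}}$ together with the side tails $u\in T(E^{(j)})\setminus\{w_{j-1}\}$ at each level, the recursion for $\bs{\pray}$ in \textsc{DualFeasibility} gives
\begin{align*}
\bs{\pray}^{(t)}_{w_t}=R+g\,\bs{\pray}^{(s)}_{w_s},\qquad g=\prod_{j=s+1}^{t}\gamma_j,
\end{align*}
where $R$ is a nonnegative combination of the $\bs{e}_{E^{(j)}}$ and of earlier rays $\bs{\pray}^{(j-1)}_u$, hence $R\ge\bs{0}$ by \cref{cl:each-primal-ray-property}. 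Since $1/g$ is exactly the gain of the detected cycle, gainfreeness gives $g\ge1$, and therefore $\bs{\pray}^*=R+(g-1)\bs{\pray}^{(s)}_{w_s}\ge\bs{0}$.

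The heart of the argument is $A\bs{\pray}^*=\bs{0}$ and $\bs{c}^T\bs{\pray}^*<0$, and this is where the symbol $M$ does the work. Setting $\bs{d}^{(k)}_v=\bs{e}_v-A\bs{\pray}^{(k)}_v\ge\bs{0}$ (\cref{cl:each-primal-ray-property}), the deficits obey the same recursion $\bs{d}^{(k)}_v=\sum_{u\in T(E)}\weight(E,u)\bs{d}^{(k-1)}_u$ used in the proof of \cref{cl:each-primal-ray-property}, so the same unfolding gives $\bs{d}^{(t)}_{w_t}=g\,\bs{d}^{(s)}_{w_s}+D$ with $D\ge\bs{0}$; as $w_s=w_t$ this yields
\begin{align*}
A\bs{\pray}^*=\bs{d}^{(s)}_{w_s}-\bs{d}^{(t)}_{w_t}=-\big((g-1)\bs{d}^{(s)}_{w_s}+D\big)\le\bs{0}.
\end{align*}
Naive telescoping thus only gives $A\bs{\pray}^*\le\bs{0}$; the remaining task is to show this deficit vanishes. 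To this end I would read off the \emph{values} instead of the rays: the identical peeling yields $\dual^{(t)}(w_t)=C+g\,\dual^{(s)}(w_s)$ as linear expressions in $M$, and by \cref{cl:nontriv=constant} the $M$-coefficient of every $\dual^{(j-1)}(u)$ is nonnegative, so that of $C$ is nonnegative and the $M$-coefficient of $\dual^{(t)}(w_t)$ is at least $g$ times, hence at least, that of $\dual^{(s)}(w_s)$. On the other hand $w_t=w_s$ was strictly updated at iteration $t>s$, and the values are nonincreasing in the iteration index, so $\dual^{(t)}(w_t)<\dual^{(s)}(w_s)$ in the $M$-ordering, forcing the $M$-coefficient of $\dual^{(t)}(w_t)$ to be at most that of $\dual^{(s)}(w_s)$. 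The two $M$-coefficients are therefore equal, which pins the $M$-coefficient of $C$ to $0$ and $(g-1)$ times the $M$-coefficient of $\dual^{(s)}(w_s)$ to $0$. By \cref{cl:nontriv=constant,cl:each-primal-ray-property} a zero $M$-coefficient means the vertex is $\nontriv$, i.e. its deficit is $\bs{0}$; applied to the side tails this gives $D=\bs{0}$, and applied to $w_s$ (or else $g=1$) it gives $(g-1)\bs{d}^{(s)}_{w_s}=\bs{0}$, so $A\bs{\pray}^*=\bs{0}$. Finally, with the $M$-coefficients equal, the strict inequality $\dual^{(t)}(w_t)<\dual^{(s)}(w_s)$ must sit in the constant terms, and since $\bs{c}^T\bs{\pray}^{(k)}_v$ equals the constant term of $\dual^{(k)}(v)$ (\cref{cl:each-primal-ray-property}), $\bs{c}^T\bs{\pray}^*<0$. \cref{lem:Farkas-dual} then certifies dual infeasibility.

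The main obstacle is exactly the step just outlined: ordinary telescoping leaves a nonpositive slack $-(g-1)\bs{d}^{(s)}_{w_s}-D$ in $A\bs{\pray}^*$, and closing this gap needs the interaction between gainfreeness and the symbolic $M$. Gainfreeness ($g\ge1$) guarantees the $M$-coefficient can only grow around the cycle, so the strict drop detected by the algorithm cannot live in the $M$-coefficient and must instead sit in the constant term; this simultaneously forces the leftover deficits to cancel (giving $A\bs{\pray}^*=\bs{0}$) and guarantees $\bs{c}^T\bs{\pray}^*<0$. A minor technical point to handle is the case $t=m+1$, where $\dual^{(m+1)}(w_{m+1})$ and $\bs{\pray}^{(m+1)}_{w_{m+1}}$ are the hypothetical quantities produced by one further application of the update rule to the vertex violating the line-20 condition; \cref{cl:nontriv=constant,cl:each-primal-ray-property} extend verbatim to this single extra step.
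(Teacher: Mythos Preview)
Your proposal is correct and follows essentially the same route as the paper. Both arguments hinge on unfolding the recursions for $\bs{\pray}^{(k)}_{w_k}$ and $\dual^{(k)}(w_k)$ along the detected cycle, using gainfreeness to get $g:=\prod_{j}\weight(E^{(j)},w_{j-1})\ge 1$, and then comparing $M$-coefficients in $\dual^{(t)}(w_t)=C+g\,\dual^{(s)}(w_s)$ to force the side tails to be $\nontriv$ and to push the strict decrease into the constant term.

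The main organizational difference is that the paper isolates the statement ``every side tail $u\in T(E^{(k)})\setminus\{w_{k-1}\}$ has $\nontriv^{(k-1)}(u)=\true$'' as a separate claim (\cref{cl:beard-nontriv}), proved by a case split on $\nontriv^{(t)}(w_t)$, and then reuses that case split in the main proof. You instead first establish $A\bs{\pray}^*\le\bs{0}$ unconditionally via the deficit vectors $\bs{d}^{(k)}_v=\bs{e}_v-A\bs{\pray}^{(k)}_v$, and only then invoke the $M$-coefficient identity $b+(g-1)c=0$ (with $b,c\ge 0$) to kill both $D$ and $(g-1)\bs{d}^{(s)}_{w_s}$ in one stroke, without any case distinction. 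This is the same mathematics repackaged more economically: your uniform argument is exactly Case~2 of \cref{cl:beard-nontriv}, and it automatically subsumes Case~1 (where $c=0$). Your handling of $t=m+1$ by extending the recursions one hypothetical step is also what the paper does implicitly.
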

				
				To show \cref{lem:dual-infeasible-head-nonempty}, we need further auxiliary claims.
				
				\begin{claim}\label{cl:y-is-monotone-increasing}
					In the end of \textsc{DualFeasibility}, 
					we have $\dual^{(k)}(v) \le \dual^{(k-1)}(v)$ for all $k=1,\dots, m$ and $v \in V$.
					Moreover, $\dual^{(k)}(v) < \dual^{(k-1)}(v)$ if and only if $\change^{(k)}(v) = \true$ for all $k=1,\dots, m$ and $v \in V$.
					Furthermore, $\dual^{(k)}(v) \le \min\left\{ \ell(E) + \sum_{u \in T(E)}\weight(E,u)\dual^{(k-1)}(u) \mid E \in \mathcal{E}, h(E) = v \right\}$ for all $k=1,\dots, m$ and $v \in V$.
				\end{claim}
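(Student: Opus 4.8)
The plan is to derive all three assertions by a direct case analysis on which branch of the inner conditional of \textsc{DualFeasibility} is executed at a fixed pair $(k,v)$ with $1 \le k \le m$ and $v \in V$; no induction on $k$ is needed, since each assertion concerns only the single transition from $\dual^{(k-1)}$ to $\dual^{(k)}$. The structural fact that makes this clean is that the guard of the inner conditional and every assignment in its then-branch refer solely to the previous-iteration values $\dual^{(k-1)}(\cdot)$; hence the order in which the vertices of $V$ are scanned within the $k$th iteration is immaterial, and I may treat each $(k,v)$ independently. For brevity I will write $\mu_k(v) = \min\left\{ \ell(E) + \sum_{u \in T(E)}\weight(E,u)\dual^{(k-1)}(u) \mid E \in \mathcal{E},\ h(E)=v \right\}$ for the quantity appearing both in the guard and in the third assertion.

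Before the case analysis I would record that every $\dual^{(k)}(v)$ is an affine expression $\alpha M + \beta$ in the symbol $M$, because each update is an affine combination of such expressions with the positive coefficients $\weight(E,u)$; this guarantees that the lexicographic total order introduced in \cref{sec:main-algorithms} applies, so that $\mu_k(v)$, the guard, and the $\argmin$ are all well defined. With this in place, fix $(k,v)$. If the guard $\dual^{(k-1)}(v) > \mu_k(v)$ holds, the then-branch sets $\dual^{(k)}(v) = \mu_k(v)$ and $\change^{(k)}(v) = \true$, so at once $\dual^{(k)}(v) = \mu_k(v) < \dual^{(k-1)}(v)$; this single chain delivers $\dual^{(k)}(v) \le \dual^{(k-1)}(v)$, the strict decrease, and $\dual^{(k)}(v) \le \mu_k(v)$. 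If instead the guard fails, i.e.\ $\dual^{(k-1)}(v) \le \mu_k(v)$, the else-branch sets $\dual^{(k)}(v) = \dual^{(k-1)}(v)$ and $\change^{(k)}(v) = \false$, so $\dual^{(k)}(v) = \dual^{(k-1)}(v) \le \mu_k(v)$, giving $\le$ with equality (hence no strict decrease) together with the third assertion.

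Combining the two cases yields the first and third assertions directly. For the equivalence in the second assertion, I would observe that the then-branch (where $\change^{(k)}(v) = \true$) is exactly the case producing a strict decrease, while the else-branch (where $\change^{(k)}(v) = \false$) is exactly the case of equality; therefore $\dual^{(k)}(v) < \dual^{(k-1)}(v)$ holds if and only if $\change^{(k)}(v) = \true$. I do not expect a genuine obstacle here, as the statement is essentially read off from the update rule. The only points that require care — and the ones I would state explicitly — are the preliminary verification that the iterates stay affine in $M$ so that the comparison and $\min$ in the guard are meaningful (cf.\ \cref{cl:nontriv=constant}), and the synchronous-update remark that all right-hand sides at iteration $k$ use $\dual^{(k-1)}$, which is precisely what lets the per-$(k,v)$ analysis ignore the scanning order.
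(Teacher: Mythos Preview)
Your proposal is correct and follows essentially the same approach as the paper: a direct case split on whether the guard $\dual^{(k-1)}(v) > \mu_k(v)$ holds, reading off each assertion from the corresponding branch of the update rule. Your added remarks about the iterates being affine in $M$ and about the synchronous nature of the update are sound and make the argument slightly more explicit than the paper's, but the core reasoning is identical.
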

				
				\begin{proof}
					Fix $k = 1,\dots, m$ and $v \in V$.
					If $\change^{(k)}(v) = \false$, then $\dual^{(k)}(v) = \dual^{(k-1)}(v)$.
					If $\change^{(k)}=\true$, then $\dual^{(k-1)}(v) > \min\left\{ \ell(E) + \sum_{u \in T(E)}\weight(E,u)\dual^{(k-1)}(u) \mid E \in \mathcal{E}, h(E) = v \right\} = \dual^{(k)}(v)$.
					Therefore, $\dual^{(k)}(v) \le \dual^{(k-1)}(v)$ holds, and $\dual^{(k)}(v) < \dual^{(k-1)}(v)$ if and only if $\change^{(k)}(v) = \true$.
					Moreover, if $\change^{(k)}(v) = \false$, then $\dual^{(k)}(v) \le \min\left\{ \ell(E) + \sum_{u \in T(E)}\weight(E,u)\dual^{(k-1)}(u) \mid E \in \mathcal{E}, h(E) = v \right\}$.
					If $\change^{(k)}=\true$, then $\dual^{(k)}(v) = \min\left\{ \ell(E) + \sum_{u \in T(E)}\weight(E,u)\dual^{(k-1)}(u) \mid E \in \mathcal{E}, h(E) = v \right\}$.
					Summarizing the above, we have $\dual^{(k)}(v) \le \min\left\{ \ell(E) + \sum_{u \in T(E)}\weight(E,u)\dual^{(k-1)}(u) \mid E \in \mathcal{E}, h(E) = v \right\}$.
					This completes the proof.
				\end{proof}
				
				\begin{claim}\label{cl:extended-change-true-exists}
					In 
					\textsc{FarkasCertificateOfDualInfeasibility}, 
					for each $k=m+1,m, \dots, s+1$, there exists $u \in T(E^{(k)})$ such that $\change^{(k-1)}(u) = \true$.
				\end{claim}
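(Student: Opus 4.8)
The plan is to prove the claim by reverse induction on $k$, descending from $k=m+1$ to $k=s+1$, with the monotonicity recorded in \cref{cl:y-is-monotone-increasing} doing all the work. At each index the hyperarc $E^{(k)}$ is exactly the one that produced the current value at $w_k$, so I would first record the update identity $\dual^{(k)}(w_k) = \ell(E^{(k)}) + \sum_{u \in T(E^{(k)})}\weight(E^{(k)},u)\dual^{(k-1)}(u)$. The guiding intuition is that if \emph{none} of the tails of $E^{(k)}$ had been updated during iteration $k-1$, then this very value would already have been attainable one step earlier, so $\dual^{(k-1)}(w_k)$ could not have strictly exceeded $\dual^{(k)}(w_k)$ --- contradicting the fact that $w_k$ was updated at step $k$.

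For the base case $k=m+1$ I would use that $E^{(m+1)}$ is chosen for the vertex $w_{m+1}=v$ witnessing the violated ``if'' in line~20, so that $\dual^{(m)}(w_{m+1}) > \ell(E^{(m+1)}) + \sum_{u \in T(E^{(m+1)})}\weight(E^{(m+1)},u)\dual^{(m)}(u)$. Assuming for contradiction that every $u \in T(E^{(m+1)})$ had $\change^{(m)}(u)=\false$, \cref{cl:y-is-monotone-increasing} gives $\dual^{(m)}(u)=\dual^{(m-1)}(u)$ for all of them, and its min-bound at iteration $m$ then yields $\dual^{(m)}(w_{m+1}) \le \ell(E^{(m+1)}) + \sum_{u}\weight(E^{(m+1)},u)\dual^{(m-1)}(u) = \ell(E^{(m+1)}) + \sum_{u}\weight(E^{(m+1)},u)\dual^{(m)}(u)$, contradicting the strict inequality.

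For the inductive step I would fix $k$ with $s+1 \le k \le m$ and invoke the statement at $k+1$, which is what guarantees that $w_k$ was selected as a tail of $E^{(k+1)}$ with $\change^{(k)}(w_k)=\true$ and that $E^{(k)}=p^{(k)}(w_k)$ with $h(E^{(k)})=w_k$. From $\change^{(k)}(w_k)=\true$, \cref{cl:y-is-monotone-increasing} gives the strict drop $\dual^{(k)}(w_k) < \dual^{(k-1)}(w_k)$. If every tail $u$ of $E^{(k)}$ had $\change^{(k-1)}(u)=\false$, then $\dual^{(k-1)}(u)=\dual^{(k-2)}(u)$ for all such $u$, and the min-bound of \cref{cl:y-is-monotone-increasing} at iteration $k-1$ would force $\dual^{(k-1)}(w_k) \le \ell(E^{(k)}) + \sum_{u}\weight(E^{(k)},u)\dual^{(k-2)}(u) = \dual^{(k)}(w_k)$, contradicting the strict drop. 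Hence some tail has $\change^{(k-1)}(u)=\true$, which is also precisely what lets the loop define $w_{k-1}$ and continue.

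I expect the main obstacle to be purely the index bookkeeping rather than any analytic difficulty: I must confirm that $E^{(k)}=p^{(k)}(w_k)$ really is the minimizing hyperarc justifying the update identity (this is where $\change^{(k)}(w_k)=\true$ is used), and that the argument remains valid down to $k=2$, where $\dual^{(k-2)}=\dual^{(0)}\equiv M$ and \cref{cl:y-is-monotone-increasing} still applies at iteration~$1$. Once these are checked, the entire claim reduces to the single monotonicity contradiction reused in every step.
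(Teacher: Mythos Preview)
Your proposal is correct and follows essentially the same reverse-induction argument as the paper: both establish the base case from the violated line~20 inequality, and both handle the inductive step by using $\change^{(k)}(w_k)=\true$ together with the min-bound of \cref{cl:y-is-monotone-increasing} at iteration $k-1$ to derive a contradiction when all tails have $\change^{(k-1)}(u)=\false$. The only cosmetic difference is that you phrase the contradiction via the strict drop $\dual^{(k)}(w_k)<\dual^{(k-1)}(w_k)$ and the update identity, whereas the paper uses the ``if'' condition $\dual^{(k-1)}(w_k)>\ell(E^{(k)})+\sum_u\weight(E^{(k)},u)\dual^{(k-1)}(u)$ directly; these are equivalent formulations of the same argument.
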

				\begin{proof}
					We show this by induction on $k$ in the reverse order.
					Let $k=m+1$.
					Since we moved to procedure \textsc{FarkasCertificateOfDualInfeasibility}, 
					we have $\dual^{(m)}(w_{m+1}) > \ell(E^{(m+1)}) + \sum_{u \in T(E^{(m+1)})}\weight(E^{(m+1)},u)\dual^{(m)}(u)$.
					If $\change^{(m)}(u) = \false$ for all $u \in T(E^{(m+1)})$, then 
					$\dual^{(m)}(u) = \dual^{(m-1)}(u)$ for all $u \in T(E^{(m+1)})$.
					However, from \cref{cl:y-is-monotone-increasing} this implies $\dual^{(m)}(w_{m+1}) \le \ell(E^{(m+1)}) + \sum_{u \in T(E^{(m+1)})}\weight(E^{(m+1)},u)\dual^{(m-1)}(u) = \ell(E^{(m+1)}) + \sum_{u \in T(E^{(m+1)})}\weight(E^{(m+1)},u)\dual^{(m)}(u)$, 
					a contradiction.
					Hence, there exists $u \in T(E^{(m+1)})$ such that $\change^{(m)}(u) = \true$.
					
					For $k<m+1$, assume that there exists $u \in T(E_{k+1})$ such that $\change^{(k)}(u) = \true$ and chose $w_k \in T(E_{k+1})$ such that $\change^{(k)}(w_k) = \true$.
					This means that $\dual^{(k-1)}(w_k) > \ell(E_k) + \sum_{u \in T(E_k)}\weight(E_k,u)\dual^{(k-1)}(u)$ by Algorithm~\ref{alg:dual-feasibility}, since $E_k = p_k(w_k)$.
					If $\change^{(k-1)}(u) = \false$ for all $u \in T(E_k)$, then 
					$\dual^{(k-1)}(u) = \dual^{(k-2)}(u)$ for all $u \in T(E_k)$.
					However, from Claim~\ref{cl:y-is-monotone-increasing}, this implies $\dual^{(k-1)}(w_k) \le \ell(E_k) + \sum_{u \in T(E_k)}\weight(E_k,u)\dual^{(k-2)}(u) = \ell(E_k) + \sum_{u \in T(E_k)}\weight(E_k,u)\dual^{(k-1)}(u)$, 
					contradicting $\dual^{(k-1)}(w_k) > \ell(E_k) + \sum_{u \in T(E_k)}\weight(E_k,u)\dual^{(k-1)}(u)$.
					Hence, there exists $u \in T(E_k)$ such that $\change^{(k-1)}(u) = \true$.
					This completes the proof.
				\end{proof}
				
				\begin{claim}
					In 
					\textsc{FarkasCertificateOfDualInfeasibility}, 
					we can always obtain a cycle.
				\end{claim}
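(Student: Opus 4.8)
The plan is to combine a pigeonhole argument on the sequence of vertices $w_{m+1}, w_m, w_{m-1}, \dots$ generated in the ``Find a cycle'' loop with \cref{cl:extended-change-true-exists}, which is exactly what guarantees that each choice of $w_{k-1}$ can actually be made. First I would check that, as long as the loop has not yet broken, every iteration is well-defined: at iteration $k$ the hyperarc $E^{(k)}$ is already determined (it is the violated hyperarc for $k=m+1$, and $p^{(k)}(w_k)$ otherwise), and by \cref{cl:extended-change-true-exists} there exists $u \in T(E^{(k)})$ with $\change^{(k-1)}(u)=\true$. Choosing $w_{k-1}$ to be such a $u$ makes $E^{(k-1)} = p^{(k-1)}(w_{k-1})$ a genuine nonempty hyperarc, since $\change^{(k-1)}(w_{k-1})=\true$ forces $p^{(k-1)}(w_{k-1}) \neq \emptyset$ in \textsc{DualFeasibility}. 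Thus the sequence $w_{m+1}, w_m, \dots$ is produced without the loop ever getting stuck before a cycle is detected.

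Next I would apply the pigeonhole principle. Every $w_j$ lies in $V$, and $|V| = m$. Suppose, toward a contradiction, that the loop runs through iteration $k=3$ without the condition ``$w_{k-1}=w_q$ for some $q \ge k$'' ever holding. Then for each $k=m+1, m, \dots, 3$ the newly produced vertex $w_{k-1}$ differs from all previously produced vertices $w_{m+1}, \dots, w_k$, so the $m$ vertices $w_{m+1}, w_m, \dots, w_2$ are pairwise distinct and therefore exhaust $V$. At iteration $k=2$ the algorithm produces $w_1 \in V = \{w_{m+1}, \dots, w_2\}$, whence $w_1 = w_q$ for some $q$ with $2 \le q \le m+1$, i.e.\ $q \ge k$; hence the cycle-detection condition is triggered and the loop breaks. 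This shows a repeated vertex, and therefore a cycle, is always found no later than iteration $k=2$.

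The main obstacle I anticipate is the mild circularity between this claim and \cref{cl:extended-change-true-exists}: the latter is phrased for indices $k = m+1, \dots, s+1$, where $s$ is only defined once the break occurs. I would resolve this by reading \cref{cl:extended-change-true-exists} as an invariant that holds at every iteration for which the loop has not yet terminated, so that its conclusion is available precisely for those iterations in which we still need to make a choice. The pigeonhole bound above then certifies that termination (the break) occurs by iteration $k=2$, which lies within the range covered by \cref{cl:extended-change-true-exists}; the two claims thus reinforce rather than contradict each other. With the well-definedness of the sequence and the counting argument in hand, the claim follows, and the indices $t \ge s$ produced at the break indeed delimit a closed walk $w_t E^{(t)} \cdots E^{(s+1)} w_s = w_t$ in $\mathcal{H}$.
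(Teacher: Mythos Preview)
Your proposal is correct and follows essentially the same approach as the paper: invoke \cref{cl:extended-change-true-exists} to ensure each choice of $w_{k-1}$ in the loop is well-defined, and then apply pigeonhole on the $m+1$ vertices $w_{m+1},\dots,w_1$ in the $m$-element set $V$ to force a repeat by iteration $k=2$. Your additional remarks about the apparent circularity with \cref{cl:extended-change-true-exists} and the well-definedness of $E^{(k-1)}=p^{(k-1)}(w_{k-1})$ are accurate refinements that the paper leaves implicit.
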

				\begin{proof}
					This follows from Claim~\ref{cl:extended-change-true-exists}.
					Indeed, in the for-loop in line 9 in \textsc{FarkasCertificateOfDualInfeasibility}, 
					we can always find $u \in T(E_k)$ with $\change^{(k-1)}(u) = \true$ for each $k=m+1,\dots, 1$ by Claim~\ref{cl:extended-change-true-exists}.
					Moreover, the number of vertices is $m$.
					Therefore, in line 12, the condition of the ``if'' must be true in $m+1$ loops.
					This means that we have found a cycle.
					This completes the proof.
				\end{proof}
				
				The following claim uses the gainfree property of the LP problem~\eqref{eq:LP}.
				
				\begin{claim}
					\label{cl:beard-nontriv}
					In 
					\textsc{FarkasCertificateOfDualInfeasibility}, 
					for any $s + 1 \le k \le t$ and any $u \in T(E^{(k)}) \setminus \{w_{k-1}\}$, 
					we have $\nontriv^{(k-1)}(u) = \true$.
				\end{claim}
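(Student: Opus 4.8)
The plan is to read off from the construction in \textsc{FarkasCertificateOfDualInfeasibility} the directed cycle $w_s E^{(s+1)} w_{s+1} \cdots E^{(t)} w_t$ (with $w_s = w_t$) and then to track the coefficient of $M$ in the values $\dual^{(k)}(w_k)$ as one traverses it. First I would record the structural facts: since $E^{(k)} = p^{(k)}(w_k)$ is the hyperarc that updated $w_k$ in iteration $k$, we have $h(E^{(k)}) = w_k$, and by the choice made in the for-loop $w_{k-1} \in T(E^{(k)})$; hence the sequence is genuinely a directed cycle, and (because $\change^{(k)}(w_k)=\true$ with $p^{(k)}(w_k)=E^{(k)}$) for each $s+1 \le k \le t$ the value obeys
\begin{align*}
\dual^{(k)}(w_k) = \ell(E^{(k)}) + \weight(E^{(k)}, w_{k-1})\dual^{(k-1)}(w_{k-1}) + \!\!\sum_{u \in T(E^{(k)}) \setminus \{w_{k-1}\}} \!\!\weight(E^{(k)}, u)\dual^{(k-1)}(u).
\end{align*}
For the boundary index $t = m+1$ I would adopt the natural convention $\dual^{(m+1)}(w_{m+1}) := \ell(E^{(m+1)}) + \sum_{u}\weight(E^{(m+1)},u)\dual^{(m)}(u)$, i.e. the violated right-hand side that triggered the procedure, which is consistent with the definition of $\bs{\pray}^{(m+1)}_{w_{m+1}}$.

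Next I would pass to coefficients of $M$. Writing $a_k \ge 0$ for the coefficient of $M$ in $\dual^{(k)}(w_k)$ and $c_k := \weight(E^{(k)}, w_{k-1}) > 0$, \cref{cl:nontriv=constant} tells us that the coefficient of $M$ in each $\dual^{(k-1)}(u)$ is nonnegative, and is strictly positive exactly when $\nontriv^{(k-1)}(u) = \false$. Taking $M$-coefficients in the recursion gives $a_k = c_k a_{k-1} + \sigma_k$, where $\sigma_k := \sum_{u \in T(E^{(k)}) \setminus \{w_{k-1}\}} \weight(E^{(k)}, u)\, b_u^{(k-1)} \ge 0$ collects the side-tail contributions, with $b_u^{(k-1)}$ the $M$-coefficient of $\dual^{(k-1)}(u)$. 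Unrolling around the cycle yields
\begin{align*}
a_t = \Big(\prod_{k=s+1}^{t} c_k\Big) a_s + \sum_{k=s+1}^{t} \Big(\prod_{j=k+1}^{t} c_j\Big)\sigma_k,
\end{align*}
in which every factor $c_j$ is positive and every $\sigma_k$ is nonnegative.

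The three remaining ingredients are then combined. Gainfreeness, applied to the directed cycle $w_s E^{(s+1)} \cdots E^{(t)} w_t$, gives $\prod_{k=s+1}^{t} c_k \ge 1$; monotonicity (\cref{cl:y-is-monotone-increasing}), together with $w_s = w_t$ and $s \le t$, gives $\dual^{(t)}(w_t) \le \dual^{(s)}(w_s)$ and hence $a_t \le a_s$ in the $M$-ordering; and $a_s \ge 0$. From the displayed identity with $\prod c_k \ge 1$ and $a_s \ge 0$ we obtain $a_t \ge a_s + \sum_{k}\big(\prod_{j>k}c_j\big)\sigma_k \ge a_s$, while monotonicity forces $a_t \le a_s$; therefore $\sum_{k}\big(\prod_{j>k}c_j\big)\sigma_k = 0$. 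Since each factor is positive and each $\sigma_k \ge 0$, we conclude $\sigma_k = 0$ for all $s+1 \le k \le t$, which forces $b_u^{(k-1)} = 0$, i.e. $\nontriv^{(k-1)}(u) = \true$, for every side tail $u \in T(E^{(k)}) \setminus \{w_{k-1}\}$, as claimed.

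I expect the main obstacle to be the bookkeeping at the two seams of the argument: justifying that the recursion for $\dual^{(k)}(w_k)$ really holds at every cycle index (this relies on $\change^{(k)}(w_k)=\true$ and $p^{(k)}(w_k)=E^{(k)}$, which were fixed when the vertices were selected in the backward loop) and handling the virtual index $t=m+1$ uniformly, together with making the $M$-ordering comparison in the monotonicity step precise enough that \emph{$a_t \le a_s$} is extracted correctly from $\dual^{(t)}(w_t)\le\dual^{(s)}(w_s)$. Once these are in place the inequality chain closes cleanly, and gainfreeness is used in exactly one spot, through $\prod c_k \ge 1$.
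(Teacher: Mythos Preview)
Your proof is correct and follows essentially the same line as the paper's: unroll the recursion for $\dual^{(k)}(w_k)$ along the cycle, pass to $M$-coefficients, and combine gainfreeness ($\prod_k c_k \ge 1$) with monotonicity ($a_t \le a_s$) to force every side-tail contribution $\sigma_k$ to vanish, whence \cref{cl:nontriv=constant} gives $\nontriv^{(k-1)}(u)=\true$.

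The one difference is organizational. The paper splits into the cases $\nontriv^{(t)}(w_t)=\true$ and $\nontriv^{(t)}(w_t)=\false$: in the first it runs a short forward induction showing all tails are nontrivial, and only in the second does it perform the $M$-coefficient computation you give. Your argument handles both cases at once, since when $a_t=0$ the identity $a_t=(\prod_k c_k)a_s+\sum_k(\prod_{j>k}c_j)\sigma_k$ with all terms nonnegative already forces $a_s=0$ and every $\sigma_k=0$. This unification is a mild streamlining; the substance (which is exactly the paper's Case~2 computation, \cref{eq:dual-beta-alpha}) is the same. Your flagged boundary point about $t=m+1$ is genuine but harmless: with the convention $\dual^{(m+1)}(w_{m+1})$ equal to the violated right-hand side, the violation condition gives $\dual^{(m+1)}(w_{m+1})<\dual^{(m)}(w_{m+1})\le\dual^{(s)}(w_s)$, and the algorithm's break condition guarantees that $w_{s+1},\dots,w_t$ are pairwise distinct, so the sequence is a directed cycle in the sense needed for gainfreeness.
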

				\begin{proof}
					We divide the proof into cases of $\nontriv^{(t)}(w_{t}) = \true$ and $\nontriv^{(t)}(w_{t}) = \false$.
					
					\paragraph*{Case 1: $\nontriv^{(t)}(w_{t}) = \true$.}
					We show more strongly that 
					for any $s + 1 \le k \le t$ and any $u \in T(E^{(k)})$, 
					we have $\nontriv^{(k-1)}(u) = \true$.
					We show this by induction on $k=t, t-1,\dots, s+1$.
					For $k=t$, since $\nontriv^{(t)}(w_{t}) = \true$, 
					$\dual^{(t)}(w_t)$ does not contain $M$ by \cref{cl:nontriv=constant}.
					As $\dual^{(t)}(w_t) = \sum_{u \in T(E^{(t)})}\weight(E^{(t)},u)\dual^{(t-1)}(u) + \ell(E^{(t)}))$, for every $u \in T(E^{(t)})$, $\dual^{(t-1)}(u)$ does not contain $M$.
					Hence for every $u \in T(E^{(t)})$, $\nontriv^{(t-1)}(u) = \true$ again by \cref{cl:nontriv=constant}.
					For $k < t$, we have $\nontriv^{(k)}(w_k) = \true$ by inductive hypothesis.
					Thus, $\dual^{(k)}(w_k)$ does not contain $M$, and for every $u \in T(E^{(k)})$ $\dual^{(k-1)}(u)$ does not contain $M$, implying that $\nontriv^{(k-1)}(u) = \true$ by \cref{cl:nontriv=constant}.
					
					\paragraph*{Case 2: $\nontriv^{(t)}(w_{t}) = \false$.}
					Recall that for each $s+1 \le k \le t$
					\begin{align*}
						\dual^{(k)}(w_k) &= \ell(E^{(k)}) + \sum_{u \in T(E^{(k)})}\weight(E^{(k)},u)\dual^{(k-1)}(u) \\
						&= \ell(E^{(k)}) + \sum_{u\in T(E^{(k)})\setminus \{w_{k-1}\}}\weight(E^{(k)},u)\dual^{(k-1)}(u) + \weight(E^{(k)},w_{k-1})\dual^{(k-1)}(w_{k-1}),
					\end{align*} since $\change^{(k)}(w_k)=\true$ and $w_{k-1} \in T(E^{(k)})$.
					Hence,  
					\begin{align}
						\begin{aligned}
							\label{eq:dual-beta-alpha}
							\dual^{(t)}(w_t) &= \ell(E^{(t)}) + \sum_{u\in T(E^{(t)})\setminus \{w_{t-1}\}}\weight(E^{(t)},u)\dual^{(t-1)}(u) + \weight(E^{(t)},w_{t-1})\dual^{(t-1)}(w_{t-1})\\
							&= \ell(E^{(t)}) + \sum_{u\in T(E^{(t)})\setminus \{w_{t-1}\}}\weight(E^{(t)},u)\dual^{(t-1)}(u) \\
							& \ \ \ +\weight(E^{(t)},w_{t-1})\left( \ell(E^{(t-1)}) + \sum_{u\in T(E^{(t-1)})\setminus \{w_{t-2}\}}\weight(E^{(t-1)},u)\dual^{(t-2)}(u) + \weight(E^{(t-1)},w_{t-2})\dual^{(t-2)}(w_{t-2}) \right)\\
							&=\cdots\\
							&=\sum_{k=s+1}^t \left(\prod_{\ell=k+1}^{t}\weight(E^{(\ell)},w_{\ell-1})\right)\left(\ell(E^{(k)}) + \sum_{u\in T(E^{(k)})\setminus \{w_{k-1}\}}\weight(E^{(k)},u)\dual^{(k-1)}(u)\right)\\
							& \ \ \ + \left(\prod_{k=s+1}^{t}\weight(E^{(k)},w_{k-1})\right) \dual^{(s)}(w_s).
						\end{aligned}
					\end{align}
					Let $\dual^{(t)}(w_t) = dM+f$ and $\dual^{(s)}(w_s) = d'M+f'$.
					Since $\dual^{(t)}(w_t) < \dual^{(s)}(w_s)$, 
					we have $d < d'$ or ($d=d'$ and $f < f'$), and in particular $d \le d'$.
					From \cref{eq:dual-beta-alpha}, 
					we have $d \ge \left(\prod_{k=s+1}^{t}\weight(E^{(k)},w_{k-1})\right)d'$ as the coefficient of $M$ in any $\dual^{(k-1)}(u)$ is nonnegative from \cref{cl:nontriv=constant}.
					Now, $\prod_{k=s+1}^{t}\weight(E^{(k)},w_{k-1}) \ge 1$, since the LP problem is gainfree.
					Since $d \le d'$, 
					it follows that $\prod_{k=s+1}^{t}\weight(E^{(k)},w_{k-1}) = 1$ and 
					$d=d'$.
					Moreover, from \cref{eq:dual-beta-alpha} for any $s + 1 \le k \le t$ and any $u \in T(E^{(k)}) \setminus \{w_{k-1}\}$, $\dual^{(k-1)}(u)$ does not contain $M$, since otherwise $d > d'$ by \cref{cl:nontriv=constant}, a contradiction.
					Therefore, for any $s + 1 \le k \le t$ and any $u \in T(E^{(k)}) \setminus \{w_{k-1}\}$, we have $\nontriv^{(k-1)}(u) = \true$ again from \cref{cl:nontriv=constant}.
					This completes the proof.
				\end{proof}
				
				Now, we are ready to prove \cref{lem:dual-infeasible-head-nonempty}.
				
				\begin{proof}[Proof of \cref{lem:dual-infeasible-head-nonempty}]
					We show that $\bs{\pray}^*$ is actually a Farkas' certificate of the dual infeasibility, i.e., (i) $\bs{\pray}^* \ge \bs{0}$, (ii) $A\bs{\pray}^* = \bs{0}$, and (iii) $\bs{c}^{T}\bs{\pray}^* < 0$.
					
					For (i), 
					recall that for each $s+1 \le k \le t$,
					\begin{align*}
						\bs{\pray}^{(k)}_{w_k} = e_{E^{(k)}} + \sum_{u\in T(E^{(k)})}\weight(E^{(k)},u)\bs{\pray}^{(k-1)}_{u} = e_{E^{(k)}} + \sum_{u\in T(E^{(k)})\setminus \{w_{k-1}\}}\weight(E^{(k)},u)\bs{\pray}^{(k-1)}_{u} + \weight(E^{(k)},w_{k-1})\bs{\pray}^{(k-1)}_{w_{k-1}},
					\end{align*} since $\change^{(k)}(w_k)=\true$ and $w_{k-1} \in T(E^{(k)})$.
					Hence,  
					\begin{align*}
						\bs{\pray}^{(t)}_{w_t} &= \bs{e}_{E^{(t)}} + \sum_{u \in T(E^{(t)})\setminus \{w_{t-1}\}} \weight(E^{(t)},u)\bs{\pray}^{(t-1)}_{u} + \weight(E^{(t)},w_{t-1})\bs{\pray}^{(t-1)}_{w_{t-1}}\\
						&= \bs{e}_{E^{(t)}} + \sum_{u \in T(E^{(t)})\setminus \{w_{t-1}\}} \weight(E^{(t)},u)\bs{\pray}^{(t-1)}_{u} \\
						& \ \ \ +\weight(E^{(t)},w_{t-1})\left( \bs{e}_{E^{(t-1)}} + \sum_{u \in T(E^{(t-1)})\setminus \{w_{t-2}\}} \weight(E_{t-1},u)\bs{\pray}^{(t-2)}_{u} + \weight(E^{(t-1)},w_{t-2})\bs{\pray}^{(t-2)}_{w_{t-2}} \right)\\
						&=\cdots\\
						&=\sum_{k=s+1}^t\prod_{\ell=k+1}^{t}\weight(E^{(\ell)},w_{\ell-1})\left(e_{E^{(k)}} + \sum_{u\in T(E^{(k)})\setminus \{w_{k-1}\}}\weight(E^{(k)},u)\bs{\pray}^{(k-1)}_{u}\right)
						+ \prod_{k=s+1}^{t}\weight(E^{(k)},w_{k-1}) \bs{\pray}^{(s)}_{w_s}.
					\end{align*}
					Since the LP problem is gainfree, we have $\prod_{k=s+1}^{t}\weight(E^{(k)},w_{k-1}) \ge 1$.
					Hence, 
					\begin{align*}
						\bs{\pray}^* &= \bs{\pray}^{(t)}_{w_t} - \bs{\pray}^{(s)}_{w_s} \\
						&= \sum_{k=s+1}^t\prod_{\ell=k+1}^{t}\weight(E^{(\ell)},w_{\ell-1})\left(e_{E^{(k)}} + \sum_{u\in T(E^{(k)})\setminus \{w_{k-1}\}}\weight(E^{(k)},u)\bs{\pray}^{(k-1)}_{u}\right)
						+ \left(\prod_{k=s+1}^{t}\weight(E^{(k)},w_{k-1}) -1 \right)\bs{\pray}^{(s)}_{w_s}\\
						&\ge \bs{0},
					\end{align*}
					where we recall that $\bs{e}_{E^{(k)}} \ge 0$ and  $\bs{\pray}^{(k-1)}_{u} \ge 0$ (from \cref{cl:each-primal-ray-property}) 
					for each $k=s+1,\dots, t$ and $u \in T(E^{(k)})$.
					Thus, $\bs{\pray}^* \ge \bs{0}$.
					
					For (ii), recall that 
					for any $s + 1 \le k \le t$ and any $u \in T(E^{(k)}) \setminus \{w_{k-1}\}$, 
					we have $A\bs{\pray}^{(k-1)}_{u} = \bs{e}_{u}$ from \cref{cl:each-primal-ray-property,cl:beard-nontriv}.
					Moreover, we have 
					$A\bs{e}_{E^{(k)}} = \bs{e}_{h(E^{(k)})}-\sum_{u \in T(E^{(k)})} \weight(E^{(k)},u)\bs{e}_u$.
					Hence, for each $s +1 \le k \le t$,
					\begin{align*}
						A\bs{\pray}^{(k)}_{w_k} &=A(e_{E^{(k)}} + \sum_{u\in T(E^{(k)})}\weight(E^{(k)},u)\bs{\pray}^{(k-1)}_{u})\\
						&=\bs{e}_{h(E^{(k)})}-\sum_{u \in T(E^{(k)})} \weight(E^{(k)},u)\bs{e}_u + A(\sum_{u\in T(E^{(k)})}\weight(E^{(k)},u)\bs{\pray}^{(k-1)}_{u})\\
						&=\bs{e}_{w_{k}} + \sum_{u\in T(E^{(k)})}\weight(E^{(k)},u)(A\bs{\pray}^{(k-1)}_{u} - \bs{e}_u)\\
						&=\bs{e}_{w_{k}} + \sum_{u\in T(E^{(k)})\setminus \{w_{k-1}\}}\weight(E^{(k)},u)(A\bs{\pray}^{(k-1)}_{u} - \bs{e}_u) + \weight(E^{(k)},w_{k-1})(A\bs{\pray}^{(w_{k-1})}_{k-1}-\bs{e}_{w_k-1})\\
						&= \bs{e}_{w_{k}} + \weight(E^{(k)},w_{k-1})(A\bs{\pray}^{(w_{k-1})}_{k-1}-\bs{e}_{w_k-1}).
					\end{align*}
					Namely, we have $A\bs{\pray}^{(k)}_{w_k} - \bs{e}_{w_{k}} = \weight(E^{(k)},w_{k-1})(A\bs{\pray}^{(w_{k-1})}_{k-1}-\bs{e}_{w_k-1})$.
					Therefore, we have
					\begin{align*}
						A\bs{\pray}^{(t)}_{w_t}  - \bs{e}_{w_{t}}&= \weight(E^{(t)},w_{t-1})(A\bs{\pray}^{(t-1)}_{w_{t-1}}-\bs{e}_{w_t-1})\\
						&= \weight(E^{(t)},w_{t-1})\weight(E^{(t-1)},w_{t-2})(A\bs{\pray}^{(t-2)}_{w_{t-2}}-\bs{e}_{w_t-2})\\
						&= \cdots \\
						&= \prod_{k=s+1}^t \weight(E^{(k)},w_{k-1}) (A\bs{\pray}^{(s)}_{w_s} - \bs{e}_{w_{s}}).
					\end{align*}
					Hence, we have 
					\begin{align*}
						A\bs{\pray}^* &=A(\bs{\pray}^{(t)}_{w_t} - \bs{\pray}^{(s)}_{w_s})\\
						&= \bs{e}_{w_{t}} + \prod_{k=s+1}^t \weight(E^{(k)},w_{k-1}) (A\bs{\pray}^{(s)}_{w_s}  - \bs{e}_{w_{s}})- A\bs{\pray}^{(s)}_{w_s}\\
						&= \bs{e}_{w_{t}} + \left(\prod_{k=s+1}^t \weight(E^{(k)},w_{k-1})-1\right) A\bs{\pray}^{(s)}_{w_s} - \prod_{k=s+1}^t \weight(E^{(k)},w_{k-1})\bs{e}_{w_{s}}.
					\end{align*}
					Now, if $\nontriv^{(t)}(w_{t}) = \true$, 
					then $\nontriv^{(s)}(w_{s}) = \true$ by the proof of \cref{cl:beard-nontriv}.
					Hence, $A\bs{\pray}^{(s)}_{w_s} = \bs{e}_{w_{s}}$ by \cref{cl:each-primal-ray-property}.
					Therefore, we have 
					\begin{align*}
						&\bs{e}_{w_{t}} + \left(\prod_{k=s+1}^t \weight(E^{(k)},w_{k-1})-1\right) A\bs{\pray}^{(s)}_{w_s} - \prod_{k=s+1}^t \weight(E^{(k)},w_{k-1})\bs{e}_{w_{s}} \\ 
						&= \bs{e}_{w_{t}} + (\prod_{k=s+1}^t \weight(E^{(k)},w_{k-1})-1) \bs{e}_{w_{s}} - \prod_{k=s+1}^t \weight(E^{(k)},w_{k-1})\bs{e}_{w_{s}}\\
						&= \bs{e}_{w_{t}} - \bs{e}_{w_{s}}\\
						&= \bs{0}, 
					\end{align*}
					where the last equality holds since $w_{t} = w_{s}$.
					If $\nontriv^{(t)}(w_{t}) = \false$, then 
					$\prod_{k=s+1}^t \weight(E^{(k)},w_{k-1}) = 1$ by the proof of \cref{cl:beard-nontriv}.
					Therefore, we have 
					\begin{align*}
						&\bs{e}_{w_{t}} + \left(\prod_{k=s+1}^t \weight(E^{(k)},w_{k-1})-1\right) A\bs{\pray}^{(s)}_{w_s} - \prod_{k=s+1}^t \weight(E^{(k)},w_{k-1})\bs{e}_{w_{s}} \\ 
						&= \bs{e}_{w_{t}} - \bs{e}_{w_{s}}\\
						&= \bs{0}.
					\end{align*}
					In either case, we have $A\bs{\pray}^* = \bs{0}$.
					
					For (iii), 
					if $\nontriv^{(t)}(w_{t}) = \true$, 
					then $\nontriv^{(s)}(w_{s}) = \true$ by the proof of \cref{cl:beard-nontriv}.
					Hence, $\bs{c}^{T}\bs{\pray}^{(k)}_{w_k} = \dual^{(k)}(w_k)$ for $k \in \{s,t\}$ by \cref{cl:each-primal-ray-property}.
					Hence, we have 
					\begin{align*}
						\bs{c}^{T}\bs{\pray}^* &=
						\bs{c}^{T}(\bs{\pray}^{(t)}_{w_t} - \bs{\pray}^{(s)}_{w_s})\\
						&=\dual^{(t)}(w_t)-\dual^{(s)}(w_s) \\
						& < 0,
					\end{align*}
					where the last strict inequality holds by Claim~\ref{cl:y-is-monotone-increasing} and $\change^{(t)}(w_t)=\true$ (by Algorithm~\ref{alg:primal-ray}).
					If $\nontriv^{(t)}(w_{t}) = \false$, $\bs{c}^{T}\bs{\pray}^{(k)}_{w_k}$ equals the constant term of $\dual^{(k)}(w_k)$ for $k \in \{s,t\}$ by \cref{cl:each-primal-ray-property}.
					As $\dual^{(t)}(w_t) < \dual^{(s)}(w_s)$ and the coefficients of $M$ in $\dual^{(t)}(w_t)$ and $\dual^{(s)}(w_s)$ coincide by the proof of \cref{cl:beard-nontriv}, the constant term of $\dual^{(t)}(w_t)$ is smaller than that of $\dual^{(s)}(w_s)$.
					Hence, we have 
					\begin{align*}
						\bs{c}^{T}\bs{\pray}^* &=
						\bs{c}^{T}(\bs{\pray}^{(t)}_{w_t} - \bs{\pray}^{(s)}_{w_s})\\
						&={\rm the\ constant\ term\ of\ }\dual^{(t)}(w_t)-{\rm the\ constant\ term\ of\ }\dual^{(s)}(w_s) \\
						& < 0.
					\end{align*}
					
					Hence, $\bs{\pray}^*$ is a Farkas' certificate of the dual infeasibility and by \cref{lem:Farkas-dual} the dual LP problem~\eqref{eq:dual-LP} is infeasible.
					This completes the proof.
				\end{proof}
				
				Combining \cref{lem:dual-infeasible-head-empty} and \cref{lem:dual-infeasible-head-nonempty}, we obtain \cref{lem:dual-infeasible}.
				
				Now, we are ready to show \cref{prop:dual-feasibility-correct}, which we recall:
				
				\newtheorem*{MainProp}{Proposition~\ref{prop:dual-feasibility-correct}}
				\begin{MainProp}
					\cref{alg:cert-gainfree-Leontief-dual} is a combinatorial 
					${\rm O}(m^3n)$-time 
					certifying algorithm for the feasibility of the dual~\eqref{eq:dual-LP} of the LP problem with a gainfree Leontief substitution system.
				\end{MainProp}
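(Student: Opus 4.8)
The plan is to assemble the proposition from the two correctness lemmas already proved and then add a running-time and model-of-computation analysis. For correctness, I would first note that \textsc{DualFeasibility} sets the flag $\val$ to exactly one of $\true$ or $\false$ in lines 20--27, so \cref{alg:cert-gainfree-Leontief-dual} enters exactly one branch. If $\val=\true$, then by \cref{lem:dual-feasible} the dual LP problem~\eqref{eq:dual-LP} is feasible and \textsc{DualSolution} outputs a feasible $\bs{\dual}^*$, a verifiable certificate of feasibility. If $\val=\false$, then by \cref{lem:dual-infeasible} problem~\eqref{eq:dual-LP} is infeasible and \textsc{FarkasCertificateOfDualInfeasibility} outputs $\bs{\pray}^*$ satisfying \eqref{eq:dual-Farkas-certificate}, a Farkas' certificate of infeasibility. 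Since the dual is either feasible or infeasible, the two implications are in fact equivalences, so the printed answer is always correct and always accompanied by a certificate checkable by substitution; this is precisely the certifying property.

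For the running time the dominant subroutine is \textsc{DualFeasibility}, whose outer and inner loops each run $m$ times. In one outer iteration, deciding whether to update each $v$ amounts to evaluating $\ell(E)+\sum_{u\in T(E)}\weight(E,u)\dual^{(k-1)}(u)$ over all hyperarcs; since each $\dual$-value is a linear expression in $M$ on which arithmetic and the comparison rule for such expressions cost ${\rm O}(1)$, and since $\sum_{E}(1+|T(E)|)={\rm O}(mn)$, this decision work is ${\rm O}(mn)$ per outer iteration. The bottleneck is instead the maintenance of the $n$-dimensional vectors $\bs{\pray}^{(k)}_v$: each update $\bs{\pray}^{(k)}_v\ot \bs{e}_E+\sum_{u\in T(E)}\weight(E,u)\bs{\pray}^{(k-1)}_u$, and each copy $\bs{\pray}^{(k)}_v\ot\bs{\pray}^{(k-1)}_v$ in the no-update branch, costs ${\rm O}(n(1+|T(E)|))$. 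As at most $m$ vertices are touched per outer iteration and each chosen hyperarc satisfies $|T(E)|\le m$, the per-iteration $\bs{\pray}$-cost is ${\rm O}(nm^2)$, hence ${\rm O}(m^3n)$ over all $m$ iterations.

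I would then check that the two auxiliary subroutines are cheaper. \textsc{DualSolution} forms $\alpha(E),\beta(E)$ for all $E$ and takes one maximum, in ${\rm O}(mn)$ time. \textsc{FarkasCertificateOfDualInfeasibility} finds the violated vertex or hyperarc and builds $\bs{\pray}^{(m+1)}$ in ${\rm O}(mn)$, then runs the cycle search of its for-loop: because $|V|=m$, a repeated vertex $w_{k-1}=w_q$ must occur within $m+1$ steps, and each step---choosing $u\in T(E^{(k)})$ with $\change^{(k-1)}(u)=\true$ and testing $w_{k-1}$ against the stored vertices---costs ${\rm O}(m)$, for ${\rm O}(m^2)$ overall, while forming $\bs{\pray}^*$ is ${\rm O}(n)$. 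Thus the total running time is ${\rm O}(m^3n)$. Finally, for the combinatorial claim I would verify that every step uses only additions, subtractions, multiplications, and comparisons of linear expressions in $M$; the single division $\beta(E)/(-\alpha(E))$ defining $\lambda$ in \textsc{DualSolution} is the only exception, and it is removed by recovering a feasible dual solution through \textsc{ValueIteration} of~\cite{JMR92} as noted in the footnote, keeping the procedure inside the combinatorial model.

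The conceptually hard part of the whole development---showing that a detected ``negative cycle'' really yields a Farkas' certificate---was already settled in \cref{lem:dual-infeasible} using the gainfree property. Within this proposition the only delicate point is therefore the complexity bound, where the key (and easy-to-miss) observation is that the cost is driven by the $\bs{\pray}$-vector bookkeeping rather than by the feasibility test, so that the crude estimates $|T(E)|\le m$ and ``at most $m$ updates per outer iteration'' already deliver the claimed ${\rm O}(m^3n)$ bound.
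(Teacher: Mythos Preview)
Your proposal is correct and follows essentially the same approach as the paper's proof: correctness is assembled from \cref{lem:dual-feasible} and \cref{lem:dual-infeasible} (the paper cites \cref{lem:dual-feasible,lem:dual-infeasible-head-empty,lem:dual-infeasible-head-nonempty} directly, which together are equivalent), and the running time is governed by the $\bs{\pray}$-vector updates in \textsc{DualFeasibility}, giving ${\rm O}(m^3n)$. Your write-up is simply more detailed---you explicitly bound the auxiliary subroutines and address the combinatorial-model caveat about the division in \textsc{DualSolution}---but the argument is the same.
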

				
				\begin{proof}[Proof of \cref{prop:dual-feasibility-correct}]
					Note that subroutines \textsc{DualFeasibility}, \textsc{DualSolution}, and \textsc{FarkasCertificateOfDualInfeasibility} constitute a certifying algorithm for the feasibility problem of the dual LP problem~\eqref{eq:dual-LP} (\cref{alg:cert-gainfree-Leontief-dual}).
					The correctness of this algorithm follows from \cref{lem:dual-feasible,lem:dual-infeasible-head-empty,lem:dual-infeasible-head-nonempty}.
					
					Now, we analyze the running time of the algorithm.
					The most time-consuming part of the algorithm is the for-loop from line 2 to 19 in \textsc{DualFeasibility}.
					This for-loop has $m$ iterations, and 
					${\rm O}(mn)$ operations for computing $\bs{\pray}^{(k)}_v$ each $v \in V$ in each iteration.
					Hence, it takes ${\rm O}(m^3n)$ time.
					This completes the proof.
				\end{proof}
				
				\begin{remark}
					In the case of a DC system, 
					$\bs{\pray}^*$ in \textsc{FarkasCertificateOfDualInfeasibility} corresponds to a negative cycle.
					Namely, $\bs{\pray}^* \in \{ 0,1 \}^m$, and the arc set $B := \{ E \in \mathcal{E} \mid \bs{\pray}^*(E) = 1 \}$ constitutes a cycle, whose weight is negative.
					Hence, our algorithm is an extension of the Bellman-Ford algorithm.
					
					Note that the main differences from the Bellman-Ford algorithm is that our algorithm keeps values of the primal vectors (i.e., $\bs{\pray}^{(k)}_{v}$), which makes the running time of our algorithm slower than that of the Bellman-Ford algorithm.
					However, for DC systems we only need ${\rm O}(n)$ operations to compute $\bs{\pray}^{(k)}_v$ and thus our algorithm runs in ${\rm O}(m^2n)$ time, 
					which is ${\rm O}(m)$ times slower than 
					the running time ${\rm O}(mn)$ of the Bellman--Ford algorithm.
				\end{remark}

				\subsection{A certifying algorithm for the feasibility of the primal LP problem}
				\label{subsec:primal-feasibility}
				
				In this subsection, we provide a certifying algorithm for the feasibility of the primal LP problem~\eqref{eq:LP} with a gainfree Leontief substitution system, using the data computed in \textsc{DualFeasibility}.
				More precisely, we show that subroutines \textsc{PrimalFeasibility} (\cref{alg:primal-feasibility}), \textsc{PrimalSolution} (\cref{alg:primal-solution}), and \textsc{FarkasCertificateOfPrimalInfeasibility} (\cref{alg:dual-ray}), together with \textsc{DualFeasibility}, constitute a certifying algorithm for the feasibility problem of the primal LP problem~\eqref{eq:LP} (\cref{alg:cert-gainfree-Leontief-primal}).
				\textsc{PrimalFeasibility} determines the feasibility of the primal LP problem~\eqref{eq:LP} using the same criterion as in (ii) of Theorem 3.6 in \cite{JMR92}.
				\textsc{PrimalSolution} is similar to \textsc{PrimalRetrieval} in \cite{JMR92};
				however, \textsc{PrimalSolution} also computes a primal feasible solution when the dual LP problem is infeasible.
				\textsc{FarkasCertificateOfPrimalInfeasibility} returns a Farkas' certificate of the primal infeasibility, where the gainfree property is again crucial for the correctness.
							
				\begin{algorithm}
					\caption{Combinatorial certifying algorithm for the feasibility of the primal LP problems with gainfree Leontief substitution systems}
					\label{alg:cert-gainfree-Leontief-primal}
					\KwInput{A matrix $A$ and vectors $\bs{b}$ and $\bs{c}$ for the primal LP problem~\eqref{eq:LP}.}
					($\bs{y}^{(m)},\bs{r}^{(m)},\change^{(k)}(k=0,...,m),p^{(k)}(k=0,...,m),\nontriv^{(m)},\bs{q},\val$)$\ot$\textsc{DualFeasibility}($A,\bs{c}$). \\
					\eIf{\textsc{PrimalFeasibility}{\rm (}$\bs{b},\nontriv^{(m)}${\rm )} $=\true$ }{
						$\bs{\primal}^*\ot$ {\rm \textsc{PrimalSolution}}($A,\bs{b},\nontriv^{(m)},p^{(k)}(k=0,...,m),\bs{q},\val$).\\
						{\bf print} ``primal-feasible'' and 
						{\bf return} $\bs{\primal}^*$.}
					{$\bs{\dray}^*\ot {\rm \textsc{FarkasCertificateOfPrimalInfeasibility}}$($\bs{\dual}^{(m)},\nontriv^{(m)}$).\\
						{\bf print} ``primal-infeasible'' and  
						{\bf return} $\bs{\dray}^*$.}
				\end{algorithm}

				\begin{algorithm}
					\caption{\textsc{PrimalFeasibility}}
					\label{alg:primal-feasibility}
					\KwInput{A vector $\bs{b}$ and $\nontriv^{(m)}$.}
					\eIf{$b(v) = 0$ for all $v$ with $\nontriv^{(m)}(v)=\false$}{
						{\bf return} $\true$.
					}
					{
						{\bf return} $\false$.}
				\end{algorithm}	
								
				\begin{algorithm}
					\caption{\textsc{PrimalSolution} (\textsc{PrimalRetrieval} in~\cite{JMR92})}
					\label{alg:primal-solution}
					\KwInput{A matrix $A$ and a vector $\bs{b}$ for the constraint of the primal LP problem~\eqref{eq:LP}, and $\nontriv^{(m)},p^{(k)}(k=0,...,m),\bs{q},\val$.}
					\If{
						$\val = \false$}{($\bs{y}^{(m)},\bs{r}^{(m)},\change^{(k)}(k=0,...,m),p^{(k)}(k=0,...,m),\bs{q},\val$)$\ot$\textsc{DualFeasibility}($A,\bs{0}$).}
					For each $E \in \mathcal{E}$, $\primal^*(E)\ot 0$, $\Tilde{V}\ot \{i \in V \mid \nontriv^{(m)}(i) = \true\}$, and for each $v \in V$, $f(v)\ot b(v)$.\\
					\While{$\Tilde{V} \neq \emptyset$}{
						Choose an arbitrary $v \in \Tilde{V}$ with maximum $q(v)$.\\
						$E \ot p^{(q(v))}(v)$.\\
						$\primal^*(E)\ot f(v)$.\\
						$f(u)\ot f(u)+\weight(E,u)\primal(E)$ for each $u \in T(E)$.\\
						$\Tilde{V}\ot \Tilde{V} \setminus \{v\}$.
					}
					{\bf return} $\bs{\primal}^*$.
				\end{algorithm}
				
				\begin{algorithm}
					\caption{\textsc{FarkasCertificateOfPrimalInfeasibility}}
					\label{alg:dual-ray}
					\KwInput{A vector $\bs{\dual}^{(m)}$ and $\nontriv^{(m)}$.}
					\For{each $v \in V$}
					{\eIf{$\nontriv^{(m)}(v) = \true$}
						{$\dray^*(v) \ot 0$.}
						{$\dray^*(v) \ot $ the coefficient of $M$ in $\dual^{(m)}(v)$.}
					}
					{\bf return} $\bs{\dray}^*$.
				\end{algorithm}	
				
				Before going into the proofs of correctness of these algorithms, we show several examples how these algorithms work, continuing examples shown in \cref{subsec:dual-feasibility}.
				
				\begin{example}
					Recall \cref{ex:main-algo1} in \cref{subsec:dual-feasibility}.
					In this example, we have $\nontriv^{(m)}(v) = \true$ for each $v \in V$ and \textsc{PrimalFeasibility}$(\bs{b},\nontriv^{(m)})=\true$ for any $\bs{b} (\ge \bs{0})$.
					Hence, \textsc{PrimalSolution} is called in \cref{alg:cert-gainfree-Leontief-primal}.
					As the dual LP problem is infeasible, 
					\textsc{DualFeasibility}$(A,\bs{0})$ is called in \textsc{PrimalSolution} and in particular $\bs{q} = (1,1,1,1)^T$ is obtained.
					Then in the while-loop in \textsc{PrimalSolution} variables $\bs{\primal}^*$ and $\bs{f}$ are updated as follows.
					Initially, $\bs{\primal}^*=\bs{0}$ and $\bs{f} = (b_1,b_2,b_3,b_4)^T$.
					First, we may choose $v_1$ according to $\bs{q}$ and since $p^{(1)}(v_1) = E_4$, $\primal^*(E_4) = b_1$ and $\bs{f}$ remains unchanged.
					Then we may choose $v_2$ and since $p^{(1)}(v_2) = E_5$, $\primal^*(E_5) = b_2$ and $\bs{f}$ remains unchanged.
					Then we may choose $v_3$ and since $p^{(1)}(v_3) = E_6$, $\primal^*(E_6) = b_3$ and $\bs{f}$ remains unchanged.
					Finally, we choose $v_4$ and since $p^{(1)}(v_4) = E_7$, $\primal^*(E_7) = b_4$.
					Then we obtain a feasible solution $\bs{\primal}^*=(0,0,0,b_1,b_2,b_3,b_4)^T$ of the primal LP problem~\eqref{eq:LP}.
				\end{example}
				
				\begin{example}
					Recall \cref{ex:main-algo2} in \cref{subsec:dual-feasibility}.
					In this example, we have $\nontriv^{(m)}(v) = \true$ for each $v \in V$ and \textsc{PrimalFeasibility}$(\bs{b},\nontriv^{(m)})=\true$ for any $\bs{b} (\ge \bs{0})$.
					Hence, \textsc{PrimalSolution} is called in \cref{alg:cert-gainfree-Leontief-primal}.
					As the dual LP problem is infeasible, 
					\textsc{DualFeasibility}$(A,\bs{0})$ is called in \textsc{PrimalSolution} and in particular $\bs{q} = (3,2,1)^T$ is obtained.
					Then in the while-loop in \textsc{PrimalSolution} variables $\bs{\primal}^*$ and $\bs{f}$ are updated as follows.
					Initially, $\bs{\primal}^*=(0,0,0,0)^T$ and $\bs{f} = (b_1,b_2,b_3)^T$.
					Then first $v_1$ is chosen according to $\bs{q}$ and since $p^{(3)}(v_1) = E_2$, $\primal^*(E_2) = b_1$ and $\bs{f} = (b_1,2b_1+b_2,b_1+b_3)^T$.
					Then second $v_2$ is chosen and since $p^{(2)}(v_2) = E_3$, $\primal^*(E_3) = 2b_1+b_2$ and $\bs{f} = (b_1,2b_1+b_2,5b_1+2b_2+b_3)^T$.
					Finally, $v_3$ is chosen and since $p^{(1)}(v_3) = E_4$, $\primal^*(E_4) = 5b_1+2b_2+b_3$.
					Then we obtain a feasible solution $\bs{\primal}^*=(0,b_1,2b_1+b_2,5b_1+2b_2+b_3)^T$ of the primal LP problem~\eqref{eq:LP}.
				\end{example}
				
				\begin{example}
					Recall \cref{ex:main-algo3} in \cref{subsec:dual-feasibility}.
					In this example, we have $\nontriv^{(m)}(v) = \false$ for each $v \in V$ and \textsc{PrimalFeasibility}$(\bs{b},\nontriv^{(m)})=\true$ if and only if $\bs{b} =\bs{0}$.
					If $\bs{b} =\bs{0}$, then $\bs{\primal}^*=\bs{0}$ is trivially a feasible solution of the primal LP problem~\eqref{eq:LP}, which can be obtained by \textsc{PrimalSolution}.
					Assume that $\bs{b} \neq \bs{0}$.
					Then \textsc{FarkasCertificateOfPrimalInfeasibility} is called in \cref{alg:cert-gainfree-Leontief-primal} and since $\bs{\dual}^{(3)} = (M-2,(1/2)M-3,(1/6)M)$, $\bs{\dray}^* = (1,1/2,1/6)$ is obtained.
					Then $(\bs{\dray}^*)^TA=(0,0,0)^T$ and $(\bs{\dray}^*)^T\bs{b}=b_1+(1/2)b_2+(1/6)b_3 >0$.
					Hence, $\bs{\dray}^*$ is a Farkas certificate of infeasibility of the primal LP problem~\eqref{eq:LP}.
				\end{example}
				
				\begin{example}
					\cref{ex:main-algo4} in \cref{subsec:dual-feasibility} can be treated similarly to \cref{ex:main-algo3} in \cref{subsec:dual-feasibility} and we omit this case.
				\end{example}
				
				Now, we show the correctness of subroutines \textsc{PrimalFeasibility}, \textsc{PrimalSolution}, and \textsc{FarkasCertificateOfPrimalInfeasibility}, and show the following proposition.
				
				\begin{proposition}
					\label{prop:primal-feasibility-correct}
					\cref{alg:cert-gainfree-Leontief-primal} is a combinatorial 
					${\rm O}(m^3n)$-time 
					certifying algorithm for the feasibility of the primal LP problem~\eqref{eq:LP} with a gainfree Leontief substitution system.
				\end{proposition}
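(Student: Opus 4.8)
The plan is to mirror the structure of the dual case: \cref{alg:cert-gainfree-Leontief-primal} branches on the Boolean returned by \textsc{PrimalFeasibility}, so it suffices to prove two lemmas and then bound the running time. The first states that if \textsc{PrimalFeasibility} returns $\true$, then \textsc{PrimalSolution} outputs $\bs{\primal}^*$ with $A\bs{\primal}^*=\bs{b}$ and $\bs{\primal}^*\ge\bs{0}$; the second states that if it returns $\false$, then \textsc{FarkasCertificateOfPrimalInfeasibility} outputs $\bs{\dray}^*$ satisfying $(\bs{\dray}^*)^{T}A\le\bs{0}$ and $(\bs{\dray}^*)^{T}\bs{b}>0$. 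Since \textsc{PrimalFeasibility} returns exactly one value, the two lemmas together show that the algorithm decides feasibility correctly and that the returned object is a valid certificate in each case, the latter by \cref{lem:Farkas-primal}.

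The technical heart is an analysis of the coefficient of $M$ in the final labels $\dual^{(m)}(v)$. Writing $\dual^{(m)}(v)=\alpha(v)M+\beta(v)$, \cref{cl:nontriv=constant} gives $\alpha(v)=0$ exactly when $\nontriv^{(m)}(v)=\true$ and $\alpha(v)>0$ otherwise. First I would observe that, because $M$ dominates the order $\alpha_1M+\beta_1>\alpha_2M+\beta_2$, a hyperarc minimizing $\ell(E)+\sum_{u\in T(E)}\weight(E,u)\dual^{(k-1)}(u)$ also minimizes $\sum_{u\in T(E)}\weight(E,u)\alpha^{(k-1)}(u)$; hence the coefficients obey the clean recursion $\alpha^{(k)}(v)=\min\bigl(\alpha^{(k-1)}(v),\ \min_{E:h(E)=v}\sum_{u\in T(E)}\weight(E,u)\alpha^{(k-1)}(u)\bigr)$. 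This is exactly the label evolution of \textsc{ValueIteration} of~\cite{JMR92} applied to the instance with $\bs{c}=\bs{0}$ and started from $M$ at every vertex. That instance is gainfree and its dual is feasible (take $\bs{\dual}=\bs{0}$), so by the convergence guarantee of~\cite{JMR92} the coefficients stabilize within $m$ iterations. This yields the two facts I need: (F1) $\alpha(h(E))\le\sum_{u\in T(E)}\weight(E,u)\alpha(u)$ for every $E$ (with $\alpha(\emptyset)=0$), and (F2) the set $\{v:\nontriv^{(m)}(v)=\true\}$ equals the set of vertices reachable through empty-tail hyperarcs and is therefore independent of $\bs{c}$.

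For the infeasible branch, \textsc{FarkasCertificateOfPrimalInfeasibility} returns exactly the vector $(\alpha(v))_{v\in V}$ of $M$-coefficients (it outputs $0$ on nontrivial vertices, where $\alpha=0$ anyway, and the coefficient of $M$ elsewhere). Then $((\bs{\dray}^*)^{T}A)_E=\alpha(h(E))-\sum_{u\in T(E)}\weight(E,u)\alpha(u)\le 0$ by (F1), so $(\bs{\dray}^*)^{T}A\le\bs{0}$; and since \textsc{PrimalFeasibility} returned $\false$ there is a vertex $v_0$ with $\nontriv^{(m)}(v_0)=\false$ and $b(v_0)>0$, whence $(\bs{\dray}^*)^{T}\bs{b}=\sum_v\alpha(v)b(v)\ge\alpha(v_0)b(v_0)>0$ using $\alpha(v_0)>0$ and $\bs{b}\ge\bs{0}$. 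Thus $\bs{\dray}^*$ is a Farkas' certificate of primal infeasibility. For the feasible branch, when $\val=\true$ the data $p^{(k)},\bs{q}$ come from a dual-feasible \textsc{ValueIteration}, and \textsc{PrimalSolution} coincides with \textsc{PrimalRetrieval} of~\cite{JMR92}, whose correctness I would invoke to conclude $A\bs{\primal}^*=\bs{b}$ and $\bs{\primal}^*\ge\bs{0}$ (the implication $\nontriv^{(m)}(v)=\false\Rightarrow b(v)=0$ guaranteed by \textsc{PrimalFeasibility} is exactly what retrieval needs). When $\val=\false$, \textsc{PrimalSolution} re-invokes \textsc{DualFeasibility}$(A,\bs{0})$, whose dual is feasible, so it returns dual-feasible retrieval data; by (F2) its set of nontrivial vertices coincides with the $\nontriv^{(m)}$ passed in, so $\tilde{V}$ and $\bs{b}$ remain consistent with the new $p^{(k)},\bs{q}$ and retrieval applies again.

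Finally, the running time is dominated by \textsc{DualFeasibility}, which costs $\mathrm{O}(m^3n)$ and is called at most twice (once in \cref{alg:cert-gainfree-Leontief-primal} and once inside \textsc{PrimalSolution}); \textsc{PrimalFeasibility}, the $\mathrm{O}(m)$-iteration while-loop of \textsc{PrimalSolution}, and \textsc{FarkasCertificateOfPrimalInfeasibility} are all cheaper, giving the claimed $\mathrm{O}(m^3n)$ bound. The step I expect to be the main obstacle is establishing (F1) when both problems are infeasible ($\val=\false$): there the final labels do \emph{not} satisfy the full fixed-point inequality, and one must argue that the violation lives entirely in the constant terms $\beta$ while the $M$-coefficients have nonetheless stabilized. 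This is precisely where gainfreeness is indispensable, since a directed cycle of gain exceeding one would let the $M$-coefficients keep decreasing around it and thereby destroy (F1).
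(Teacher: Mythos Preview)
Your overall decomposition into the two lemmas plus the running-time bound is exactly what the paper does, and your treatment of the feasible branch (invoking \textsc{PrimalRetrieval} from~\cite{JMR92}, re-running \textsc{DualFeasibility}$(A,\bs{0})$ when $\val=\false$, and using the $\bs{c}$-independence of $\nontriv^{(m)}$) coincides with the paper's \cref{lem:primal-feasible}. The genuine difference is in how you establish (F1).

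The paper proves (F1) by a direct contradiction argument (Case~2 of \cref{lem:primal-infeasible}): assuming some hyperarc $E$ violates the $M$-coefficient inequality at step $m$, it traces a backward chain $u_m,u_{m-1},\dots$ of vertices whose $M$-coefficients strictly dropped, invoking a separate stabilization claim about the sets $V^{(k)}_{\nontriv}$ (\cref{cl:nontrivial-level}) to keep the chain inside the non-trivial part; pigeonhole then yields a directed cycle whose product of weights is strictly below~$1$, contradicting gainfreeness. Your route is more conceptual: you isolate the recursion for the $M$-coefficients, observe it is exactly the \textsc{DualFeasibility} recursion for the instance $(A,\bs{0})$, and appeal to convergence in the feasible-dual case to get the fixed-point inequality for $\alpha^{(m)}$ directly. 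This is correct and arguably cleaner, since it replaces the bespoke chain/cycle argument and \cref{cl:nontrivial-level} by a reduction to the already-proved dual-feasible case. What the paper's approach buys is self-containment: it does not need to re-invoke any black-box convergence theorem and makes the role of gainfreeness explicit at the exact point of failure.

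One point to tighten: when you write ``by the convergence guarantee of~\cite{JMR92}'', be precise about what you are invoking. The cleanest justification is internal to this paper: since $\ell\equiv 0$, running \textsc{DualFeasibility} on $(A,\bs{0})$ produces labels $y^{(k)}(v)=\alpha^{(k)}(v)\cdot M$ with zero constant term, so your $\alpha^{(k)}$ are literally those labels (up to the factor $M$); because $A^T\bs{y}\le\bs{0}$ is feasible, \cref{lem:dual-infeasible} (contrapositive) forces $\val=\true$, which is exactly (F1). Phrasing it as ``\textsc{ValueIteration} started from $M$'' is fine once you note that $c=0$ kills all constant terms, but citing \cite{JMR92} alone leaves open whether their starting value and update rule match yours; routing through \cref{lem:dual-feasible,lem:dual-infeasible} avoids that ambiguity.
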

				
				To show \cref{prop:primal-feasibility-correct}, 
				it suffices to show that if the primal LP problem is feasible, then \textsc{PrimalSolution} returns a primal feasible solution, and 
				otherwise \textsc{FarkasCertificateOfPrimalInfeasibility} returns a Farkas' certificate of the primal infeasibility.
				
				We first consider the case where the primal LP problem is feasible.
				
				\begin{lemma}\label{lem:primal-feasible}
					If 
					\textsc{PrimalFeasibility} returns $\true$, 
					then the primal LP problem~\eqref{eq:LP} is feasible and \textsc{PrimalSolution} returns a feasible solution of \eqref{eq:LP}.
				\end{lemma}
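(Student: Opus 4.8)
The plan is to prove the two assertions directly: first that \textsc{PrimalFeasibility}$=\true$ forces primal feasibility, and second that the vector $\bs{\primal}^*$ returned by \textsc{PrimalSolution} is an explicit feasible solution, so that establishing feasibility reduces to checking $\bs{\primal}^*\ge\bs{0}$ and $A\bs{\primal}^*=\bs{b}$ rather than invoking the feasibility criterion of \cite{JMR92}. Recall that \textsc{PrimalFeasibility} returns $\true$ exactly when $b(v)=0$ for every $v$ with $\nontriv^{(m)}(v)=\false$; since $\bs{b}\ge\bs{0}$, this says that all ``demand'' sits on the nontrivial vertices. The subroutine \textsc{PrimalSolution} then performs a backward flow assignment: it scans $\tilde V=\{v:\nontriv^{(m)}(v)=\true\}$ in order of decreasing $q(v)$, and for the hyperarc $E=p^{(q(v))}(v)$ (whose head is $v$) it sets $\primal^*(E)\ot f(v)$ and pushes the induced demand $\weight(E,u)\primal^*(E)$ back onto each tail $u\in T(E)$.

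First I would dispose of the bookkeeping around the re-run in line~1 of \textsc{PrimalSolution}. When $\val=\false$ (dual infeasible) the original data $p^{(k)},\bs{q}$ need not describe a sound hyperpath structure, so the routine recomputes them by calling \textsc{DualFeasibility}$(A,\bs{0})$; this instance has the trivially feasible dual solution $\bs{0}$, hence returns $\val=\true$ together with a stationary vector $\bs{\dual}^{(m)}$. The point I would make here is that the nontriviality labels are determined by $A$ alone and are independent of $\bs{c}$: once a vertex has an available hyperarc all of whose tails carry $M$-free values, the $M$-ordering makes that arc beat any arc with an $M$-carrying tail, so the vertex becomes nontrivial on its next update regardless of the costs. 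Hence the set $\tilde V$ computed from the original $\nontriv^{(m)}$ coincides with the nontrivial set of the re-run, and the re-run's $p^{(k)},\bs{q}$ are defined on exactly $\tilde V$.

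The crux, and the step I expect to be the main obstacle, is establishing a correct processing order, for which I would prove: if $v\in T(E)$ with $E=p^{(q(w))}(w)$ and $w$ nontrivial, then $q(v)<q(w)$. The argument runs through \cref{cl:nontriv=constant,cl:y-is-monotone-increasing} together with $\val=\true$. By \cref{cl:nontriv=constant}, once $w$ is nontrivial every later update of $w$ must use a hyperarc all of whose tails are nontrivial (otherwise $\dual(w)$ would regain a positive $M$-coefficient while $\nontriv(w)$ stayed $\true$), so each such update overwrites $q(w)$; thus $q(w)$ is the \emph{last} step at which $w$ changes and $\dual^{(m)}(w)=\dual^{(q(w))}(w)$. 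Since $\val=\true$ gives the stationarity $\dual^{(m)}(w)\le \ell(E)+\sum_{u\in T(E)}\weight(E,u)\dual^{(m)}(u)$ while $\dual^{(q(w))}(w)=\ell(E)+\sum_{u\in T(E)}\weight(E,u)\dual^{(q(w)-1)}(u)$, monotonicity (\cref{cl:y-is-monotone-increasing}) forces $\dual^{(q(w)-1)}(u)=\dual^{(m)}(u)$ for all $u\in T(E)$; in particular $v$ never changes after step $q(w)-1$, so $q(v)<q(w)$. Consequently, scanning $\tilde V$ by decreasing $q$ guarantees that every hyperarc depositing flow onto a tail $v$ is processed strictly before $v$ itself, and ties in $q$ never occur between a head and one of its tails.

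With the ordering in hand the verification is routine. Nonnegativity follows by induction along the scan: $f$ starts at $\bs{b}\ge\bs{0}$ and each update adds $\weight(E,u)\primal^*(E)\ge0$, so every assigned value $\primal^*(E)=f(v)$ is nonnegative. For conservation, the ordering shows that when $v$ is processed its value is the full residual demand $f(v)=b(v)+\sum_{E:\,v\in T(E)}\weight(E,v)\primal^*(E)$, and since a hyperarc has a unique head the only head-$v$ arc receiving flow is $p^{(q(v))}(v)$; subtracting the consumption at $v$ then yields exactly $b(v)$, i.e.\ row $v$ of $A\bs{\primal}^*=\bs{b}$. For a trivial vertex $v$ no head-$v$ arc is ever assigned and, because $p^{(q(w))}(w)$ has only nontrivial tails, $v$ is the tail of no assigned arc, so row $v$ evaluates to $0$, which equals $b(v)$ precisely because \textsc{PrimalFeasibility} returned $\true$. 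Hence $A\bs{\primal}^*=\bs{b}$ and $\bs{\primal}^*\ge\bs{0}$, so $\bs{\primal}^*$ is feasible and \eqref{eq:LP} is feasible, completing the proof.
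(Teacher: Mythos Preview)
Your proof is correct and takes a genuinely different route from the paper's own argument. The paper does not verify \textsc{PrimalSolution} directly at all: it simply observes that \textsc{PrimalSolution} coincides with \textsc{PrimalRetrieval} of \cite{JMR92}, notes (as you also do) that $\nontriv^{(m)}$ depends only on $A$ so that the re-run with $\bs{c}=\bs{0}$ leaves $\tilde V$ unchanged, and then invokes Theorem~3.6 of \cite{JMR92} as a black box for both feasibility and correctness of the returned $\bs{\primal}^*$. Your argument, by contrast, is fully self-contained: you prove the ordering property $q(v)<q(w)$ whenever $v\in T(p^{(q(w))}(w))$ from the stationarity of $\bs{\dual}^{(m)}$ (which you correctly ensure in the re-run case via the trivial dual solution $\bs{0}$), and then check $\bs{\primal}^*\ge\bs{0}$ and $A\bs{\primal}^*=\bs{b}$ by a direct flow-conservation bookkeeping. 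The key technical step you supply---that once $w$ is nontrivial every subsequent update of $w$ uses a hyperarc with only nontrivial tails, so $q(w)$ records the \emph{last} change of $w$---is exactly what makes the decreasing-$q$ scan sound, and it is not spelled out anywhere in the present paper. What the paper's approach buys is brevity; what yours buys is an exposition that does not require the reader to consult \cite{JMR92}, and an explicit ordering lemma that clarifies why \textsc{PrimalSolution} terminates with a feasible flow.
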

				
				\begin{proof}
					We divide into two cases where the dual LP problem is feasible and infeasible.
					In the case where the dual LP problem is feasible, we can use Theorem 3.6 in \cite{JMR92} as follows.
					First, observe that \textsc{PrimalSolution} is essentially the same as the procedure \textsc{PrimalRetrieval} in \cite{JMR92}, where we note that $\nontriv^{(m)}$ coincides with $\nontriv$ in \cite{JMR92} although we introduced a symbol $M$.
					Then, from Theorem 3.6 in \cite{JMR92}, we know that the primal LP problem is feasible if and only if $b(v) = 0$ for all $v \in V$ with $\nontriv^{(m)}(v) = \false$, and 
					if it is feasible, then \textsc{PrimalSolution} outputs a feasible solution.
					In the case where the dual LP problem is infeasible, we set $\bs{c}=\bs{0}$ and run \textsc{DualFeasibility}, where the dual LP problem is always feasible.
					Then we can again use Theorem 3.6 in \cite{JMR92} and conclude that the primal LP problem is feasible if and only if $b(v) = 0$ for all $v \in V$ with $\nontriv^{(m)}(v) = \false$, where we note that $\nontriv^{(m)}$ depends only on $A$ (and independent from $\bs{c}$).    
				\end{proof}
				
				Now, we show that when $b(v) > 0$ for some $v \in V$ with $\nontriv^{(m)}(v) = \false$, 
				the vector $\bs{\dray}^*$ returned by \textsc{FarkasCertificateOfPrimalInfeasibility} is a Farkas' certificate of the primal infeasibility.
				Note that this holds in both cases where the dual LP problem is feasible and infeasible.
				We need an auxiliary claim, which is an extension of Lemma 3.4 (ii) in \cite{JMR92}.
				Let $V^{(k)}_{\nontriv} =\{ v \in V \mid \nontriv^{(k)}(v) = \true \}$.
				
				\begin{claim}\label{cl:nontrivial-level}
					If $V^{(k)}_{\nontriv} = V^{(k+1)}_{\nontriv}$ for some $0 \le k \le m-1$, 
					then $V^{(k)}_{\nontriv} = V^{(\ell)}_{\nontriv}$ for every $\ell \ge k+1$.
					It follows that for each $1 \le k \le m$ if $|V^{(k)}_{\nontriv}| \le k-1$, 
					then $V^{(k-1)}_{\nontriv} = V^{(\ell)}_{\nontriv}$ for every $\ell \ge k$.
				\end{claim}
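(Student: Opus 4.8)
The plan is to show that the sequence of sets $V^{(k)}_{\nontriv}$ evolves according to a single monotone operator that depends only on the hypergraph $\mathcal{H}$ and not on the current $\dual$-values, and then to read off both assertions from elementary fixed-point and counting arguments. Concretely, for $S \subseteq V$ define
\[
\Phi(S) = S \cup \{ v \in V \mid \exists E \in \mathcal{E},\ h(E) = v,\ T(E) \subseteq S \},
\]
which is plainly monotone ($S \subseteq S'$ implies $\Phi(S) \subseteq \Phi(S')$) and extensive ($S \subseteq \Phi(S)$). The heart of the argument is the recurrence
\[
V^{(k)}_{\nontriv} = \Phi\bigl(V^{(k-1)}_{\nontriv}\bigr) \qquad (1 \le k \le m),
\]
together with $V^{(0)}_{\nontriv} = \emptyset$ from the initialization. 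Once this is in hand, extensivity gives the chain $\emptyset = V^{(0)}_{\nontriv} \subseteq V^{(1)}_{\nontriv} \subseteq \cdots \subseteq V^{(m)}_{\nontriv}$ for free.

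The hard part will be establishing this recurrence, and this is exactly where \cref{cl:nontriv=constant} enters. The inclusion $V^{(k)}_{\nontriv} \subseteq \Phi(V^{(k-1)}_{\nontriv})$ is the easy direction: since $\nontriv$ can only switch from $\false$ to $\true$ through lines 10--12, a vertex $v \in V^{(k)}_{\nontriv} \setminus V^{(k-1)}_{\nontriv}$ was made nontrivial only because the hyperarc $E$ chosen in the $\argmin$ satisfies $\nontriv^{(k-1)}(u) = \true$ for every $u \in T(E)$, i.e.\ $T(E) \subseteq V^{(k-1)}_{\nontriv}$, so $v \in \Phi(V^{(k-1)}_{\nontriv})$. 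For the reverse inclusion I must show that if some $E$ with $h(E)=v$ has $T(E) \subseteq V^{(k-1)}_{\nontriv}$ while $\nontriv^{(k-1)}(v) = \false$, then $v$ becomes nontrivial at step $k$. By \cref{cl:nontriv=constant}, $\dual^{(k-1)}(v)$ contains $M$ with positive coefficient, whereas the right-hand side $\ell(E) + \sum_{u \in T(E)} \weight(E,u)\dual^{(k-1)}(u)$ contains no $M$ (all of its tails are nontrivial), so under the $M$-ordering the ``if'' test in line 4 fires and $\change^{(k)}(v) = \true$. Moreover, because every $\weight$ is positive and every $M$-coefficient is nonnegative by \cref{cl:nontriv=constant}, the $M$-coefficient of each candidate right-hand side is a nonnegative combination that vanishes precisely when all of its tails are nontrivial; hence every minimizer in the $\argmin$ has its tails inside $V^{(k-1)}_{\nontriv}$, and line 11 sets $\nontriv^{(k)}(v) = \true$. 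This establishes the recurrence.

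With the recurrence proved, the first statement is a one-line fixed-point argument: the hypothesis $V^{(k)}_{\nontriv} = V^{(k+1)}_{\nontriv}$ says $V^{(k)}_{\nontriv} = \Phi(V^{(k)}_{\nontriv})$, so $V^{(k)}_{\nontriv}$ is a fixed point of $\Phi$; applying $\Phi$ repeatedly and inducting on $\ell$ gives $V^{(\ell)}_{\nontriv} = V^{(k)}_{\nontriv}$ for all $\ell \ge k+1$. For the ``it follows'' part I would argue by counting: the sizes $|V^{(0)}_{\nontriv}| \le |V^{(1)}_{\nontriv}| \le \cdots \le |V^{(k)}_{\nontriv}|$ form a nondecreasing sequence of $k+1$ integers starting at $0$, so if every step strictly increased the size we would get $|V^{(k)}_{\nontriv}| \ge k$, contradicting $|V^{(k)}_{\nontriv}| \le k-1$. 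Hence $V^{(j)}_{\nontriv} = V^{(j+1)}_{\nontriv}$ for some $0 \le j \le k-1$, and the first statement forces stabilization from step $j$ onward. Since $j \le k-1$, this yields $V^{(k-1)}_{\nontriv} = V^{(j)}_{\nontriv} = V^{(\ell)}_{\nontriv}$ for every $\ell \ge k$, as required.
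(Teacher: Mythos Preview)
Your proof is correct and follows essentially the same line as the paper's: both rest on the observation that the sets $V^{(k)}_{\nontriv}$ evolve by a monotone closure operator on $V$, so that stabilization at one step propagates forward, and the strict-chain counting argument gives the second assertion. Your presentation is somewhat more explicit in two respects. First, you isolate the operator $\Phi$ and establish the recurrence $V^{(k)}_{\nontriv} = \Phi(V^{(k-1)}_{\nontriv})$ as a standalone fact. Second, and more substantively, you justify the inclusion $\Phi(V^{(k-1)}_{\nontriv}) \subseteq V^{(k)}_{\nontriv}$ via \cref{cl:nontriv=constant}, arguing that when some candidate hyperarc has all tails nontrivial, the $M$-coefficient of the minimum is forced to be zero and hence \emph{any} hyperarc chosen in the $\argmin$ at line~5 also has all tails nontrivial, so line~11 fires. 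The paper's proof asserts this implication in passing (``as otherwise $v$ would be included in $V^{(k+1)}_{\nontriv}$'') without spelling out why the particular minimizer selected must witness it; your version fills that gap.
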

				\begin{proof}
					Assume that $V^{(k)}_{\nontriv} = V^{(k+1)}_{\nontriv}$ for some $0 \le k \le m-1$.
					This means that for each $v \not\in V^{(k)}_{\nontriv}$ and each $E\in \mathcal{E}$ with $h(E)=v$, 
					we have $(V \setminus V^{(k)}_{\nontriv}) \cap T(E) \neq \emptyset$ as otherwise $v$ would be included in $V^{(k+1)}_{\nontriv}$.
					Since $V^{(k)}_{\nontriv} = V^{(k+1)}_{\nontriv}$, for each $v \not\in V^{(k+1)}_{\nontriv}$ and each $E\in \mathcal{E}$ with $h(E)=v$, 
					we have $(V \setminus V^{(k+1)}_{\nontriv}) \cap T(E) \neq \emptyset$ and thus $v \not\in V^{(k+2)}_{\nontriv}$.
					Hence, $V^{(k+2)}_{\nontriv} \subseteq V^{(k+1)}_{\nontriv}$.
					From definition, we also have $V^{(k+1)}_{\nontriv} \subseteq V^{(k+2)}_{\nontriv}$.
					Therefore, we have $V^{(k+1)}_{\nontriv} = V^{(k+2)}_{\nontriv}$.
					In a similar way, we can inductively show that $V^{(k)}_{\nontriv} = V^{(\ell)}_{\nontriv}$ for every $\ell \ge k+1$.
					
					Now we show the latter statement of the lemma.
					It suffices to show that $V^{(k-1)}_{\nontriv} = V^{(k)}_{\nontriv}$ from the former statement.
					Assume otherwise that $V^{(k-1)}_{\nontriv} \subsetneq V^{(k)}_{\nontriv}$.
					Then, from the former statement, $V^{(0)}_{\nontriv} \subsetneq \dots \subsetneq V^{(k-1)}_{\nontriv} \subsetneq V^{(k)}_{\nontriv}$ and thus 
					$|V^{(k)}_{\nontriv}| \ge k$, a contradiction.
					Hence, the latter statement holds.
				\end{proof}
				
				Now, we are ready to show that when $b(v) > 0$ for some $v \in V$ with $\nontriv^{(m)}(v) = \false$, 
				the vector $\bs{\dray}^*$ returned by \textsc{FarkasCertificateOfPrimalInfeasibility} is a Farkas' certificate of the primal infeasibility, i.e., 
				$(\bs{\dray}^*)^TA \le \bs{0}$ and $(\bs{\dray}^*)^T\bs{b} > 0$ (see \cref{lem:Farkas-primal}).
				
				\begin{lemma}\label{lem:primal-infeasible}
					If 
					\textsc{PrimalFeasibility} returns $\false$, 
					then the primal LP problem~\eqref{eq:LP} is infeasible and \textsc{FarkasCertificateOfPrimalInfeasibility} returns a Farkas' certificate of the primal infeasibility.
				\end{lemma}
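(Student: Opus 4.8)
The plan is to verify that the vector $\bs{\dray}^*$ returned by \textsc{FarkasCertificateOfPrimalInfeasibility} meets the two defining conditions of a Farkas' certificate in \cref{lem:Farkas-primal}, namely $(\bs{\dray}^*)^TA \le \bs{0}$ and $(\bs{\dray}^*)^T\bs{b} > 0$. First I would record the structure of $\bs{\dray}^*$: by construction $\dray^*(v)$ is the coefficient of $M$ in $\dual^{(m)}(v)$, which is $0$ exactly when $\nontriv^{(m)}(v)=\true$. Writing $d^{(k)}(v)$ for the coefficient of $M$ in $\dual^{(k)}(v)$, we thus have $\bs{\dray}^* = \bs{d}^{(m)}$, and \cref{cl:nontriv=constant} gives $\bs{\dray}^* \ge \bs{0}$ with $\dray^*(v) > 0$ precisely for $v \notin V^{(m)}_{\nontriv}$. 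Since \textsc{PrimalFeasibility} returned $\false$ and $\bs{b} \ge \bs{0}$, there is a vertex $v_0$ with $\nontriv^{(m)}(v_0) = \false$ and $b(v_0) > 0$; as $\bs{\dray}^*, \bs{b} \ge \bs{0}$ this already yields $(\bs{\dray}^*)^T\bs{b} = \sum_{v} \dray^*(v) b(v) \ge \dray^*(v_0) b(v_0) > 0$, which is the second condition.

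For the first condition I would read $(\bs{\dray}^*)^TA \le \bs{0}$ off the columns of $A$, that is, off the hyperarcs. For a hyperarc $E$ with $h(E) = \emptyset$ the corresponding entry is $-\sum_{u \in T(E)}\weight(E,u)\dray^*(u) \le 0$, immediate from $\bs{\dray}^* \ge \bs{0}$ and $\weight > 0$. The substantive case is a hyperarc $E$ with $h(E)=v$, where the entry equals $\dray^*(v) - \sum_{u \in T(E)}\weight(E,u)\dray^*(u)$, so the task reduces to proving $d^{(m)}(v) \le \sum_{u \in T(E)}\weight(E,u)\,d^{(m)}(u)$ for every $E$ with $h(E)=v$.

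The observation that drives this is that the $M$-coefficients evolve by a pure min--sum (generalized Bellman--Ford) recursion, decoupled from the constant terms: because the order on $\alpha M + \beta$ is lexicographic with $M$ dominant, the hyperarc chosen in \textsc{DualFeasibility} always attains the minimum $M$-coefficient among $\{E : h(E)=v\}$, and one checks by induction on $k$ that $d^{(k)}(v) = \min\bigl(d^{(k-1)}(v),\, \min_{E : h(E)=v}\sum_{u\in T(E)}\weight(E,u)\,d^{(k-1)}(u)\bigr)$ with $d^{(0)} \equiv 1$. In this language the inequality I need is exactly the statement that $\bs{d}^{(m)}$ is a \emph{fixed point} of the recursion, equivalently that the iteration has converged within the $m$ rounds that were run. \cref{cl:y-is-monotone-increasing} shows the iteration is non-increasing (hence, taking $M$-coefficients in its last assertion, $d^{(m)}(v) \le \sum_{u\in T(E)}\weight(E,u)\,d^{(m-1)}(u)$ for every $E$ with $h(E)=v$), and \cref{cl:nontrivial-level}, applied with $|V^{(m)}_{\nontriv}| \le m-1$ (which holds since \textsc{PrimalFeasibility} returned $\false$), shows the support $V \setminus V^{(k)}_{\nontriv}$ of $\bs{d}^{(k)}$ has stabilized by round $m-1$.

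The hard part, and the place where gainfreeness is essential, is the convergence of the \emph{magnitudes}: I would prove that $\bs{d}^{(m-1)}$, and hence $\bs{d}^{(m)}$, already equals the fixed point, by a depth argument analogous to \cref{cl:beard-nontriv}. Concretely, $d^{(k)}(v)$ is the minimum over directed hypertrees of depth at most $k$ rooted at $v$ of their weighted leaf-values, and any root-to-leaf branch of length at least $m$ must repeat a vertex and so contains a directed cycle; gainfreeness forces the weight product $\prod\weight$ around such a cycle to be at least $1$, so excising the cycle together with its side branches does not increase the hypertree value. Hence an optimal hypertree may be taken of depth at most $m-1$, giving $d^{(m-1)}(u) = d^{(m)}(u)$ for all $u$. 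Substituting this into the bound $d^{(m)}(v) \le \sum_{u\in T(E)}\weight(E,u)\,d^{(m-1)}(u)$ from \cref{cl:y-is-monotone-increasing} yields the required $d^{(m)}(v) \le \sum_{u\in T(E)}\weight(E,u)\,d^{(m)}(u)$, completing $(\bs{\dray}^*)^TA \le \bs{0}$; by \cref{lem:Farkas-primal}, $\bs{\dray}^*$ then certifies that the primal LP problem~\eqref{eq:LP} is infeasible. I expect the cycle-excision step—controlling the branching ``beards'' while removing a repeated vertex, exactly as in \cref{cl:beard-nontriv}—to be the main obstacle.
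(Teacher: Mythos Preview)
Your approach is correct and takes a genuinely different route from the paper. Both proofs dispose of $(\bs{\dray}^*)^T\bs{b} > 0$ and the hyperarcs with $h(E)=\emptyset$ or $h(E)\in V^{(m)}_{\nontriv}$ in the same trivial way; the divergence is in the remaining case. The paper argues by contradiction directly on the algorithm's execution: assuming $d^{(m)}(v) > \sum_{u\in T(E)}\weight(E,u)\,d^{(m)}(u)$ for some $E$, it traces back a chain $u_m,u_{m-1},\dots$ along which the $M$-coefficient strictly drops at every step, invoking \cref{cl:nontrivial-level} each time to keep $V^{(k)}_{\nontriv}$ stable; a pigeonhole repeat then gives a directed cycle whose weight product is forced below $1$, contradicting gainfreeness. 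You instead give a semantic characterization of $d^{(k)}(v)$ as a minimum over depth-$\le k$ hypertrees and prove convergence $d^{(m-1)}=d^{(m)}$ by cycle excision, after which the inequality falls out of the $M$-coefficient version of \cref{cl:y-is-monotone-increasing}. Your route is more conceptual and makes the role of gainfreeness transparent---it is exactly the condition that excising a cycle (together with its side branches, which contribute nonnegatively) cannot increase the hypertree value---at the cost of setting up and justifying the hypertree interpretation; the paper's trace-back stays closer to the algorithm and avoids that setup, but leans on the auxiliary stabilization claim at every step of the chain. The excision step you flag is indeed the crux, and your sketch is right: replacing the subtree at the earlier occurrence of a repeated vertex by the subtree at the later one weakly decreases the value and strictly shortens every branch through that node, so iterating on a longest branch terminates.
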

				\begin{proof}
					It suffices to show that the output $\bs{\dray}^*$ satisfies that $(\bs{\dray}^*)^TA \le \bs{0}$ and $(\bs{\dray}^*)^T\bs{b} > 0$.
					
					We first show that $(\bs{\dray}^*)^TA \le \bs{0}$.
					It suffices to show that $\dray^*(h(E)) - \sum_{u \in T(E)}\weight(E,u)\dray^*(u) \le 0$ for each $E \in \mathcal{E}$, where we define $\dray^*(\emptyset) = 0$.
					
					\paragraph{Case 1: $h(E) = v$ for some $v \in V$ and $\nontriv^{(m)}(v) = \true$.}
					In this case, $\dray^*(h(E)) = 0$ by the definition of $\bs{\dray}^*$.
					Moreover, from \cref{cl:nontriv=constant} the coefficient of $M$ in $\dual^{(k)}(v)$ is nonnegative for any $k = 0,\dots, m$ and $v \in V$.
					Hence we have $\bs{\dray}^* \ge 0$ and this, together with $\weight(E,u) > 0$ for every $u \in T(E)$, implies that $- \sum_{u \in T(E)}\weight(E,u)\dray^*(u) \le 0$.
					
					\paragraph{Case 2: $h(E) = v$ for some $v \in V$ and $\nontriv^{(m)}(v) = \false$.}
					Since $\dray^*(v) = $ the coefficient of $M$ in $\dual^{(m)}(v)$, 
					if the coefficient of $M$ in $\dual^{(m)}(v)$ is at most the coefficient of $M$ in $\ell(E) + \sum_{u \in T(E)}\weight(E,u)\dual^{(m)}(u)$, then $\dray^*(v) - \sum_{u \in T(E)}\weight(E,u)\dray^*(u) \le 0$.
					
					Assume otherwise that the coefficient of $M$ in $\dual^{(m)}(v)$ is greater than the coefficient of $M$ in $\ell(E) + \sum_{u \in T(E)}\weight(E,u)\dual^{(m)}(u)$.
					We show that this case never occurs from gainfreeness of the LP problem.
					We first show that there exists $u \in T(E)$ such that $\nontriv^{(m)}(u)=\nontriv^{(m-1)}(u)=\false$, $\change^{(m)}(u)=\true$, and the coefficient of $M$ in $\dual^{(m)}(u)$ is less than the coefficient of $M$ in $\dual^{(m-1)}(u)$.
					To show this, first observe that from $\dual^{(m)}(v) \le \ell(E)+\sum_{u \in T(E)}\weight(E,u)\dual^{(m-1)}(u)$ we have 
					\begin{align}\label{eq:primal-infeasible-1}
						\begin{split}
							{\rm the\ coefficient\ of\ }M {\rm \ in\ }\dual^{(m)}(v) &= {\rm the\ coefficient\ of\ }M {\rm \ in\ }\sum_{u \in T(p^{(m)}(v))}\weight(p^{(m)}(v),u)\dual^{(m-1)}(u)\\
							&\le {\rm the\ coefficient\ of\ }M {\rm \ in\ }\sum_{u \in T(E)}\weight(E,u)\dual^{(m-1)}(u)\\
							&= {\rm the\ coefficient\ of\ }M {\rm \ in\ }\sum_{u \in T(E)\setminus V^{(m-1)}_{\nontriv}}\weight(E,u)\dual^{(m-1)}(u),
						\end{split}
					\end{align}
					where we note that the coefficient of $\dual^{(m-1)}(u)$ is zero for $u \in V^{(m-1)}_{\nontriv}$ by \cref{cl:nontriv=constant}.
					On the other hand, since the coefficient of $M$ in $\dual^{(m)}(v)$ is greater than the coefficient of $M$ in $\ell(E) + \sum_{u \in T(E)}\weight(E,u)\dual^{(m)}(u)$, we have 
					\begin{align}\label{eq:primal-infeasible-2}
						\begin{split}
							{\rm the\ coefficient\ of\ }M {\rm \ in\ }\dual^{(m)}(v) &> {\rm the\ coefficient\ of\ }M {\rm \ in\ }\sum_{u \in T(E)}\weight(E,u)\dual^{(m)}(u)\\
							&= {\rm the\ coefficient\ of\ }M {\rm \ in\ }\sum_{u \in T(E)\setminus V^{(m)}_{\nontriv}}\weight(E,u)\dual^{(m)}(u)\\
							&= {\rm the\ coefficient\ of\ }M {\rm \ in\ }\sum_{u \in T(E)\setminus V^{(m-1)}_{\nontriv}}\weight(E,u)\dual^{(m)}(u),
						\end{split}
					\end{align}
					where we have $V^{(m)}_{\nontriv} = V^{(m-1)}_{\nontriv}$ from $|V^{(m)}_{\nontriv}| \le |V \setminus \{v\}| = m-1$ and \cref{cl:nontrivial-level}.
					From \cref{eq:primal-infeasible-1,eq:primal-infeasible-2}, we obtain 
					\begin{align*}
						\begin{split}
							&{\rm the\ coefficient\ of\ }M {\rm \ in\ }\sum_{u \in T(E)\setminus V^{(m-1)}_{\nontriv}}\weight(E,u)\dual^{(m)}(u) \\
							&< {\rm the\ coefficient\ of\ }M {\rm \ in\ }\sum_{u \in T(E)\setminus V^{(m-1)}_{\nontriv}}\weight(E,u)\dual^{(m-1)}(u).
						\end{split}
					\end{align*}
					Hence, $T(E)\setminus V^{(m-1)}_{\nontriv} \neq \emptyset$ and we can choose $u_{m} \in T(E)$ such that $\nontriv^{(m)}(u_{m})=\nontriv^{(m-1)}(u_{m})=\false$, $\change^{(m)}(u_{m})=\true$, and the coefficient of $M$ in $\dual^{(m)}(u_{m})$ is less than the coefficient of $M$ in $\dual^{(m-1)}(u_{m})$.
					Let $E_k:=p^{(k)}(u_{k})$ for $k=1,...,m$.
					Now, we inductively show that for $k=m,\dots,2$ 
					we can choose $u_{k-1} \in T(E_k)$ such that $\nontriv^{(k-1)}(u_{k-1})=\nontriv^{(k-2)}(u_{k-1})=\false$, $\change^{(k-1)}(u_{k-1})=\true$, and the coefficient of $M$ in $\dual^{(k-1)}(u_{k-1})$ is less than the coefficient of $M$ in $\dual^{(k-2)}(u_{k-1})$ until $u_k = u_{\ell}$ for some $\ell > k$.
					
					For $k=m$, from $|V^{(m-1)}_{\nontriv}| \le |V \setminus \{v,u_m\}| = m-2$ we have $V^{(m-1)}_{\nontriv} = V^{(m-2)}_{\nontriv}$ (\cref{cl:nontrivial-level}).
					Hence, together with $\dual^{(m)}(u_m) = \ell(E_m)+\sum_{u \in T(E_m)}\weight(E,u)\dual^{(m-1)}(u)$, we have 
					\begin{align}\label{eq:primal-infeasible-3}
						\begin{split}
							{\rm the\ coefficient\ of\ }M {\rm \ in\ }\dual^{(m)}(u_m) &= {\rm the\ coefficient\ of\ }M {\rm \ in\ }\sum_{u \in T(E_m)}\weight(E,u)\dual^{(m-1)}(u)\\
							&= {\rm the\ coefficient\ of\ }M {\rm \ in\ }\sum_{u \in T(E_m)\setminus V^{(m-1)}_{\nontriv}}\weight(E,u)\dual^{(m-1)}(u)\\
							&= {\rm the\ coefficient\ of\ }M {\rm \ in\ }\sum_{u \in T(E_m)\setminus V^{(m-2)}_{\nontriv}}\weight(E,u)\dual^{(m-1)}(u).
						\end{split}
					\end{align}
					On the other hand, since the coefficient of $M$ in $\dual^{(m)}(u_m)$ is less than the coefficient of $M$ in $\dual^{(m-1)}(u_m)$, we have \begin{align}\label{eq:primal-infeasible-4}
						\begin{split}
							{\rm the\ coefficient\ of\ }M {\rm \ in\ }\dual^{(m)}(u_m) &< {\rm the\ coefficient\ of\ }M {\rm \ in\ }\dual^{(m-1)}(u_m)\\
							&= {\rm the\ coefficient\ of\ }M {\rm \ in\ }\sum_{u \in T(E_{m-1})}\weight(E,u)\dual^{(m-2)}(u)\\
							&\le {\rm the\ coefficient\ of\ }M {\rm \ in\ }\sum_{u \in T(E_m)}\weight(E,u)\dual^{(m-2)}(u)\\
							&= {\rm the\ coefficient\ of\ }M {\rm \ in\ }\sum_{u \in T(E_m)\setminus V^{(m-2)}_{\nontriv}}\weight(E,u)\dual^{(m-2)}(u).
						\end{split}
					\end{align}
					From \cref{eq:primal-infeasible-3,eq:primal-infeasible-4}, we obtain 
					\begin{align*}
						\begin{split}
							&{\rm the\ coefficient\ of\ }M {\rm \ in\ }\sum_{u \in T(E_m)\setminus V^{(m-2)}_{\nontriv}}\weight(E,u)\dual^{(m-1)}(u) \\
							&< {\rm the\ coefficient\ of\ }M {\rm \ in\ }\sum_{u \in T(E_m)\setminus V^{(m-2)}_{\nontriv}}\weight(E,u)\dual^{(m-2)}(u).
						\end{split}
					\end{align*}
					Hence, $T(E_m)\setminus V^{(m-2)}_{\nontriv} \neq \emptyset$ and we can choose $u_{m-1} \in T(E_m)$ such that $\nontriv^{(m-1)}(u_{m-1})=\nontriv^{(m-2)}(u_{m-1})=\false$, $\change^{(m-1)}(u_{m-1})=\true$, and the coefficient of $M$ in $\dual^{(m-1)}(u_{m-1})$ is less than the coefficient of $M$ in $\dual^{(m-2)}(u_{m-1})$.
					
					For $k<m$, assume that $u_{m+1}, u_m, \dots, u_k$ are distinct.
					Since $\nontriv^{(k-1)}(u_{\ell})=\false$ for each $k \le \ell \le m$, we have $|V^{(k-1)}_{\nontriv}| \le k-2$ and thus $V^{(k-1)}_{\nontriv} = V^{(k-2)}_{\nontriv}$ (\cref{cl:nontrivial-level}).
					Then we can similarly show that there exists $u_{k-1} \in E_k$ such that $\nontriv^{(k-1)}(u_{k-1})=\nontriv^{(k-2)}(u_{k-1})=\false$, $\change^{(k-1)}(u_{k-1})=\true$, and the coefficient of $M$ in $\dual^{(k-1)}(u_{k-1})$ is less than the coefficient of $M$ in $\dual^{(k-2)}(u_{k-1})$.
					
					Since there exists $m$ vertices in the graph, for some $s < t \in \{1, \dots, m+1\}$ we have 
					$u_s = u_t$, where we define $u_{m+1}:=v$ and $E_{m+1}:=E$.
					Choose such $s,t$ where $t$ is maximum and the difference $|t-s|$ is minimum.
					Then, $u_t, E_t, u_{t-1}, E_{t-1}, \dots, E_{s+1},u_s$ is a cycle in $\mathcal{H}$.
					Moreover, for each $k=2,\dots, m+1$ $\dual^{(k)}(u_k) = \ell(E_k) + \sum_{u \in T(E_k)}\weight(E_k,u)\dual^{(k-1)}(u)$ implies that the coefficient of $M$ in $\dual^{(k)}(u_k)$ is at least the coefficient of $M$ in $\weight(E_k,u_{k-1})\dual^{(k-1)}(u_{k-1})$, 
					where we define $\dual^{(m+1)}(u_{m+1}) := \ell(E_{m+1}) + \sum_{u \in T(E_{m+1})}\weight(E_{m+1},u)\dual^{(m)}(u)$.
					Hence,  we have 
					\begin{align*}
						{\rm the\ coefficient\ of\ }M {\rm \ in\ }\dual^{(t)}(u_{t}) &\ge {\rm the\ coefficient\ of\ }M {\rm \ in\ }\weight(E_{t},u_{t-1})\dual^{(t-1)}(u_{t-1})\\
						&\ge {\rm the\ coefficient\ of\ }M {\rm \ in\ }\weight(E_{t},u_{t-1})\weight(E_{t-1},u_{t-2})\dual^{(t-2)}(u_{t-2})\\
						&\ge \dots \\
						&\ge {\rm the\ coefficient\ of\ }M {\rm \ in\ }\prod_{k=s+1}^t \weight(E^{(k)},u_{k-1}) \dual^{(s)}(u_{s})\\
						&= {\rm the\ coefficient\ of\ }M {\rm \ in\ }\prod_{k=s+1}^t \weight(E^{(k)},u_{k-1}) \dual^{(s)}(u_{t}).
					\end{align*}
					Since ${\rm the\ coefficient\ of\ }M {\rm \ in\ }\dual^{(t)}(u_{t}) < {\rm the\ coefficient\ of\ }M {\rm \ in\ }\dual^{(s)}(u_{t})$ (as $\nontriv^{(t)}(u_{t}) = \false$, $\change^{(t)}(u_{t})=\true$, and the coefficient of $M$ in $\dual^{(t)}(u_t)$ is less than the coefficient of $M$ in $\dual^{(t-1)}(u_{t})$ and $\dual^{(k)}(u_{t})$ is monotone-decreasing in $k$), we obtain that $\prod_{k=s+1}^t \weight(E^{(k)},u_{k-1}) \le \frac{\dual^{(t)}(u_{t})}{\dual^{(s)}(u_{t})} < 1$.
					This contradicts that the system is gainfree.
					Hence, the coefficient of $M$ in $\dual^{(m)}(v)$ is at most the coefficient of $M$ in $\ell(E) + \sum_{u \in T(E)}\weight(E,u)\dual^{(m)}(u)$.

					\paragraph{Case 3: $h(E) = \emptyset$.}
					This case can be shown in a similar way as in Case 1.
					This completes the proof of $(\bs{\dray}^*)^TA \le \bs{0}$.
					
					We then show that $(\bs{\dray}^*)^T\bs{b} > 0$.
					Note that we have $\bs{\dray}^* \ge \bs{0}$ from the observation in Case 1 above.
					Moreover, from \cref{cl:nontriv=constant} the coefficient of $M$ in $\dual^{(k)}(v)$ is positive for any $k = 0,\dots, m$ and $v \in V$ with $\nontriv^{(k)}(v) = \false$, 
					implying that $\dray^*(v) > 0$ for every $v \in V$ with $\nontriv^{(m)}(v) = \false$.
					Also, $\bs{b} \ge \bs{0}$ by definition.
					Since $b(v) > 0$ for some $v \in V$ with $\nontriv^{(m)}(v) = \false$ and $\dray^*(v) > 0$ for such $v$, 
					we have $(\bs{\dray}^*)^T\bs{b} = \sum_{v \in V} \dray^*(v) b(v) > 0$.
					This completes the proof.
				\end{proof}
				
				Now, we are ready to show \cref{prop:primal-feasibility-correct}.
				
				\begin{proof}[Proof of \cref{prop:primal-feasibility-correct}]
					Note that subroutines \textsc{PrimalFeasibility}, \textsc{PrimalSolution}, and \textsc{FarkasCertificateOfPrimalInfeasibility}, together with \textsc{DualFeasibility}, constitute a certifying algorithm for the feasibility problem of the primal LP problem~\eqref{eq:LP} (\cref{alg:cert-gainfree-Leontief-primal}).
					The correctness of this algorithm follows from \cref{lem:primal-feasible,lem:primal-infeasible}. 
					
					Now, we analyze the running time of the above algorithm.
					The most time-consuming part of is \textsc{DualFeasibility}, which runs in ${\rm O}(m^3n)$ time as shown in the proof of \cref{prop:dual-feasibility-correct}.
					This completes the proof.
				\end{proof}
				
				\subsection{Proof of the main theorem (Theorem~\ref{thm:main-theorem})}
				\label{subsec:proof-of-main-theorem}
				
				Combining the results in \cref{subsec:dual-feasibility,subsec:primal-feasibility}, we obtain our main theorem, which we recall:
				
				\newtheorem*{MainThm}{Theorem~\ref{thm:main-theorem}}
				\begin{MainThm}[Main]
					The LP problems with gainfree Leontief substitution systems \eqref{eq:LP} admit a combinatorial 
					${\rm O}(m^3n)$-time 
					certifying algorithm. 
				\end{MainThm}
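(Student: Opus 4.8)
The plan is to assemble the two certifying subalgorithms of \cref{subsec:dual-feasibility,subsec:primal-feasibility} into a single procedure and to argue that the certificate it emits is exactly the one prescribed after \cref{thm:LP-solution-pattern} for whichever of the four cases occurs. First I would run \textsc{DualFeasibility} once—this is the shared and most expensive step—recording its output $(\bs{y}^{(m)},\bs{r}^{(m)},\change^{(k)},p^{(k)},\nontriv^{(m)},\bs{q},\val)$. From $\val$ I branch on dual feasibility, calling \textsc{DualSolution} to obtain a dual feasible solution $\bs{\dual}^*$ when $\val=\true$ and \textsc{FarkasCertificateOfDualInfeasibility} to obtain a Farkas' certificate $\bs{\pray}^*$ when $\val=\false$; correctness of this branch is exactly \cref{prop:dual-feasibility-correct}. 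Using $\nontriv^{(m)}$ and $\bs{b}$, I then run \textsc{PrimalFeasibility}, calling \textsc{PrimalSolution} to obtain a primal feasible solution $\bs{\primal}^*$ or \textsc{FarkasCertificateOfPrimalInfeasibility} to obtain a Farkas' certificate $\bs{\dray}^*$; correctness here is \cref{prop:primal-feasibility-correct}.

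The key structural observation is that the two binary outcomes (primal feasible or infeasible, dual feasible or infeasible) already pin down which of the four mutually exclusive cases of \cref{thm:LP-solution-pattern} holds, and the emitted pair of certificates matches the prescription there. Concretely, (primal feasible, dual infeasible) is case (iii) and yields $\bs{\primal}^*$ together with the dual Farkas' certificate $\bs{\pray}^*$; (primal infeasible, dual feasible) is case (ii) and yields $\bs{\dray}^*$ together with $\bs{\dual}^*$; and (primal infeasible, dual infeasible) is case (iv) and yields $\bs{\dray}^*$ together with $\bs{\pray}^*$. In all three of these cases the certificates produced are precisely those required, so nothing beyond the two propositions is needed.

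The one case demanding an additional argument is case (i), where both problems are feasible and the certificate must consist of a primal and a dual feasible solution with \emph{equal} objective values; verifying this equality is the main obstacle. I would establish it by complementary slackness. The support of $\bs{\primal}^*$ returned by \textsc{PrimalSolution} consists only of hyperarcs $E=p^{(q(v))}(v)$ for nontrivial vertices $v$, and each such $E$ was selected in an $\argmin$ during \textsc{DualFeasibility}, so its dual constraint is tight at $\bs{\dual}^{(m)}$; moreover, by \cref{cl:nontriv=constant} the values $\dual^{(m)}(v)$ for these nontrivial $v$ contain no $M$, hence are unaffected by the substitution of $\lambda$ for $M$ performed in \textsc{DualSolution}, so the relevant constraints remain tight at $\bs{\dual}^*$. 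Together with primal and dual feasibility this certifies optimality of both solutions, whence $\bs{c}^T\bs{\primal}^* = (\bs{\dual}^*)^T\bs{b}$ by strong duality. Equivalently, I can invoke Theorem 3.6 of \cite{JMR92}, which already shows that \textsc{PrimalRetrieval}/\textsc{PrimalSolution} and \textsc{ValueIteration}/\textsc{DualSolution} produce optimal primal and dual solutions whenever an optimum exists.

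Finally, the complexity bound is inherited directly: \textsc{DualFeasibility} dominates at ${\rm O}(m^3n)$ as shown in the proof of \cref{prop:dual-feasibility-correct} (and the possible re-invocation with $\bs{c}=\bs{0}$ inside \textsc{PrimalSolution} stays within this budget), while \textsc{DualSolution}, \textsc{PrimalFeasibility}, and the two certificate routines all run faster. Hence the combined algorithm uses only additions, subtractions, multiplications, and comparisons, runs in ${\rm O}(m^3n)$ time, and is certifying in every case, which proves \cref{thm:main-theorem}.
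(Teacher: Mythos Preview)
Your proposal is correct and follows essentially the same approach as the paper: assemble \cref{alg:cert-gainfree-Leontief-dual} and \cref{alg:cert-gainfree-Leontief-primal}, invoke \cref{prop:dual-feasibility-correct,prop:primal-feasibility-correct} for correctness in each of the four cases of \cref{thm:LP-solution-pattern}, and inherit the ${\rm O}(m^3n)$ bound from \textsc{DualFeasibility}. The paper's own proof is terser---just two sentences citing the two propositions---and for case~(i) it leans on the earlier remark that the algorithm of~\cite{JMR92} already returns primal and dual optima with equal objective values (Theorem~3.6 there); your explicit complementary-slackness sketch and alternative appeal to~\cite{JMR92} are more careful than what the paper writes out, but amount to the same argument.
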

				
				\begin{proof}
					From \cref{thm:LP-solution-pattern}, 
					\cref{alg:cert-gainfree-Leontief-dual,alg:cert-gainfree-Leontief-primal} constitute a certifying algorithm for solving the LP problem.
					The correctness and the running time of the algorithm follow from \cref{prop:dual-feasibility-correct,prop:primal-feasibility-correct}.
				\end{proof}
				
				\paragraph{Gupta's algorithm}
				An attempt has been made to devise a certifying combinatorial algorithm for the feasibility of unit Horn systems with nonpositive variables~\cite{Gup14}.
				They try to devise an algorithm that outputs a feasible solution if a given system is feasible, and a Farkas' certificate if not.
				Their algorithm is based on the Bellman-Ford algorithm and a directed graph representation of a unit Horn system. 
				However, their statements of the algorithm are not precise and include some mistakes even in certain proofs of correctness of the algorithm.
				Regardless of how we fix their algorithm, 
				we can create instances in which their algorithm fails.
				For example, their algorithm in its current form fails for the instance in~\cref{ex:main-algo1}.
				Moreover, we remark that they mentioned that it is open to extend their claimed results on unit Horn systems with nonpositive variables to (i) unit Horn systems where variables are allowed to have positive values and (ii) unit-positive Horn systems. In this paper, we resolve both of these issues.
					
				\section{Discussions}
				\label{sec:discussions}
				In this section, we first consider integer versions of the primal and dual LP problems and provide analysis of computational complexity and algorithms in \cref{subsec:integer-version}.
				Then we address the question whether we can obtain a combinatorial certifying algorithm for LP problems with gainfree Leontief substitution systems by incorporating the idea of two-phase simplex method and using an existing non-certifying combinatorial algorithm for the problem in \cref{subsec:two-phase-does-not-work}.

				\subsection{Integer versions of the primal and the dual of LP problems with Gainfree Leontief substitution systems}
				\label{subsec:integer-version}
				Here, 
				we consider solving the following integer programming problems:
				
				\begin{align}
					\label{eq:IP}
					\begin{array}{ll}
						\rm{minimize} & \bs{c}^T\bs{\primal}\\
						\rm{subject\ to} & A\bs{\primal} = \bs{b}\\
						& \bs{\primal} \in \mathbb{Z}^n_+,
					\end{array}
				\end{align}
				and
				\begin{align}
					\label{eq:dual-IP}
					\begin{array}{ll}
						\rm{maximize} & \bs{\dual}^T\bs{b}\\
						\rm{subject\ to} & \bs{\dual}^TA \leq \bs{c}^T\\
						& \bs{\dual} \in \mathbb{Z}^m,
					\end{array}
				\end{align}
				where $A$ is gainfree Leontief and $b \ge \bs{0}$, and $\mathbb{Z}$ and $\mathbb{Z}_{+}$ denote the sets of integers and nonnegative integers, respectively.
				
				To consider integrality of solutions and certificates, we assume that the data given is all rational rather than real in this subsection.
				Note that our certifying algorithm in the previous section remains polynomial time in this setting, since the bit length of the data appearing in the algorithm can be polynomially bounded.
				Moreover, we can make Farkas' certificates of infeasibility of the primal and dual LP problems integer by multiplying an integer to rational Farkas certificates.
				
				We first consider the integer version of the dual of LP problems with gainfree Leontief substitution systems~\eqref{eq:dual-IP}, which is the generalization of the integer version of the LP problems with unit-positive Horn systems consider in the literature.
				Then we turn into the integer version of the primal LP problems
				with gainfree Leontief substitution systems~\eqref{eq:IP}.
				
				\subsubsection*{Integer version of the dual LP problem}
				Here, we consider the integer version of the dual LP problem with gainfree Leontief substitution systems~\eqref{eq:dual-IP}.
				Observe that the subroutine \textsc{DualSolution} outputs an integer vector when $A$ is an integer matrix and $\bs{c}$ is an integer vector.
				In this case, the constraint of the dual LP problem is exactly the unit-positive Horn system and thus, it has an integer optimal solution when it has an optimal solution~\cite{JMR92,ChS13}.
				Hence, our algorithm is also a combinatorial certifying algorithm for the integer feasibility of the unit-positive Horn systems.
				Recall that this resolves the open problems raised in\cite{Gup14}.
				When $\bs{c}$ might not be an integer vector, we can solve the problem by replacing $\bs{c}$ with $\lfloor\bs{c}\rfloor$.
				
				When $A$ might not be an integer matrix, the situation changes.
				In fact, we point out that it is NP-complete to determines the integer feasibility of the dual LP problem with gainfree Leontief substitution systems.
				This is because an NP-complete problem is actually reduced to the integer feasibility of the dual LP problem with gainfree Leontief substitution systems in the NP-completeness proof of the integer feasibility of the Horn systems in~\cite{Lag85}.
				\begin{theorem}[Follows from \cite{Lag85}]
					The integer feasibility of the dual LP problem~\eqref{eq:dual-IP} with gainfree Leontief substitution systems is NP-complete.
				\end{theorem}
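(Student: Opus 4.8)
The plan is first to observe that the decision version of~\eqref{eq:dual-IP} lies in NP. A certificate is simply an integer vector $\bs{\dual}$ satisfying $\bs{\dual}^TA \le \bs{c}^T$, which is verified in polynomial time by $m\times n$ multiplications and comparisons. What must be justified is that a feasible integer system admits a \emph{small} feasible point: by the standard bounds on the encoding length of integer solutions to rational linear systems, if $\{\bs{\dual}\in\mathbb{Z}^m \mid \bs{\dual}^TA \le \bs{c}^T\}$ is nonempty then it contains a vector whose bit-length is polynomial in the bit-length of $(A,\bs{c})$. Hence a polynomially bounded certificate always exists when the answer is ``yes,'' giving membership in NP.

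\textbf{Reduction and the role of gainfreeness.} For NP-hardness, the key structural remark is that, since $A$ is Leontief, its transpose $A^T$ is Horn, and the constraint $\bs{\dual}^TA \le \bs{c}^T$ is exactly the Horn system $A^T\bs{\dual}\le \bs{c}$. Lagarias~\cite{Lag85} proves that the integer feasibility of Horn systems is NP-complete by reducing a known NP-complete problem to it; transposing, his construction produces instances of precisely the form~\eqref{eq:dual-IP} with $A$ Leontief and rational (in general non-integral) entries. The plan is therefore to invoke this reduction verbatim and to add the single verification that is needed to land inside our restricted class: that the Leontief matrices produced are in fact \emph{gainfree}. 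Note this is exactly the content that makes the theorem nontrivial, since the NP-hardness must be shown to survive the restriction to gainfree matrices (for which, by contrast, the integer-feasibility problem would be polynomial in the unit-positive case).

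\textbf{Verifying gainfreeness, the main obstacle.} Recall that gainfreeness requires $\prod_{i=1}^{k}\weight(E_i,v_i)\ge 1$ for every directed cycle $v_1E_1v_2\cdots E_kv_{k+1}$ (with $v_1=v_{k+1}$) in the associated hypergraph $\mathcal{H}=(V,\mathcal{E})$, where $\weight(E_j,v_i)=-A_{ij}$. The main work is thus to trace the cycle structure of the hypergraph underlying Lagarias' construction and confirm this inequality. I expect two mutually exclusive situations, and the verification splits accordingly: either the associated hypergraph is acyclic, in which case gainfreeness holds \emph{vacuously} (there are no directed cycles to test), or the directed cycles that do occur carry tail-weights whose product is at least one, which one checks directly from the explicit numerical entries dictated by the reduction. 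The delicate point is that the hard instances genuinely require fractional negative entries (so the crude sufficient condition ``all $|A_{ij}|\ge 1$'' need not apply), and so one must follow the construction carefully enough to pin down every cycle and the precise values of $\weight(E_i,v_i)$ along it. Once this bookkeeping confirms that each cycle product is $\ge 1$, the produced instances lie in the gainfree Leontief class, the reduction of~\cite{Lag85} applies unchanged, and NP-hardness together with the membership argument above yields NP-completeness.
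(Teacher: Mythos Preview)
Your proposal is correct and follows essentially the same approach as the paper: both invoke Lagarias' reduction for the NP-hardness of integer feasibility of Horn systems and observe that the instances it produces already lie in the gainfree Leontief class. The paper's own treatment is in fact even terser than yours---it simply asserts, in the sentence preceding the theorem, that ``an NP-complete problem is actually reduced to the integer feasibility of the dual LP problem with gainfree Leontief substitution systems in the NP-completeness proof \ldots\ in~\cite{Lag85},'' and labels the theorem ``Follows from~\cite{Lag85}'' without spelling out NP membership or carrying out any explicit gainfreeness check; your outline of both points is thus more detailed than what the paper provides.
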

				
				Moreover, we may consider the integer version of \textsc{DualFeasibility} for the feasibility problem of the dual LP problem, 
				in which values $\bs{\dual}^{(k)}$ are updated by taking floor function, i.e., 
				the statement in line 6 is replaced by ``$\dual^{(k)}(v)\ot \lfloor \ell(E) + \sum_{u \in T(E)}\weight(E,u)\dual^{(k-1)}(u) \rfloor$.''
				However, this algorithm requires exponential time in the worst case; see \cref{ex:exponential-integer-feasibility-gainfree} below.
				
				\begin{example}
					\label{ex:exponential-integer-feasibility-gainfree}
					Let $a$ be an positive integer.
					Consider the integer feasibility of the following dual LP problem with a gainfree Leontief substitution system: 
					\begin{align}
						\label{eq:exponential-integer-feasibility-gainfree}
						\left\{
						\begin{array}{rll}
							-\frac{a+1}{a}y_1 + y_2 &\le& 1\\
							y_1 - \frac{a}{a+1}y_2 &\le& - \frac{a}{a+1}\\
							y_1 &\le& 0\\
							y_2 &\le& 0\\
							y_1,y_2 &\in& \mathbb{Z}.
						\end{array}
						\right.
					\end{align}
					It can be seen that in the for-loop from line 2 to 19 of the integer version of \textsc{DualFeasibility} mentioned above, 
					the dual variables are updated as 
					$\dual^{(k)}(v_1) = -\lfloor k/2 \rfloor$ and 
					$\dual^{(k)}(v_2) = -\lfloor (k-1)/2 \rfloor$
					for $k=1,...,2a+1$.
					Hence, the procedure takes ${\rm \Omega}(a)$ time to obtain a feasible solution ($\bs{\dual}=(-a,-a)$) of \eqref{eq:exponential-integer-feasibility-gainfree}.
					This is exponential in the input size $\log (a)$ of the number $a$.
				\end{example}
				
				Conversely, by setting $M$ in the integer version of \textsc{DualFeasibility} sufficiently large exponential number $\eta:=2^{13m(\size(A)+\size(c)}$ and also iterate $2m\eta$ times the for-loop from line 2 to line 19, we obtain an exponential time algorithm for the feasibility of \eqref{eq:dual-IP}.
				Indeed, if \eqref{eq:dual-IP} is feasible, the algorithm finds a feasible solution since the size of such solution can be bounded by $\eta$ (see, e.g., Corollary 5.8 in~\cite{KoF18}).
				If the algorithm finds a feasible solution, then we solve the original \eqref{eq:dual-LP}.
				If \eqref{eq:dual-LP} is bounded, so is \eqref{eq:dual-IP} and thus, we obtain an optimal solution of \eqref{eq:dual-IP} without a certificate.
				If the input data are rational and \eqref{eq:dual-LP} is unbounded, it is known (e.g., Proposition 5.2 in \cite{KoF18}) that \eqref{eq:dual-IP} is also unbounded, and we can obtain a Farkas certificate of the primal infeasibility as a certificate of the unboundedness.
				If the integer version of \textsc{DualFeasibility} does not find a feasible solution, then \eqref{eq:dual-IP} is infeasible; however, we do not know if we can obtain a certificate of infeasibility, since the primal LP problem is not a dual problem of \eqref{eq:dual-IP}.
				
				We remark that \eqref{eq:dual-IP} can be solved as follows.
				First, execute the exponential time algorithm of integer version of \textsc{DualFeasibility}.
				If it outputs a feasible solution $\bs{\dual}^*$, then set $M$ to $\eta+\alpha$ and also iterate $2m\alpha$ times the for-loop from line 2 to line 19, where $\alpha:=m\Xi(A)$ and $\Xi(A)$ is the maximum absolute value of the subdeterminants of $A$.
				From Theorem5.7 in~\cite{KoF18}, if $\bs{\dual}^*$ is not an optimal solution, then there exists a feasible integer solution $\bs{\duall}^*$ with $\bs{b}^T\bs{\duall}^*>\bs{b}^T\bs{\dual}^*$.
				Hence, if we obtain a feasible solution that has lager objective value than 
				$\bs{\dual}^*$, then we can conclude that the problem is unbounded; otherwise, $\bs{\dual}^*$ is an optimal solution of \eqref{eq:dual-IP}.
				
				An open problem is to settle whether there exists a pseudo-polynomial time algorithm to solve the integer version of the dual LP problems with gainfree Leontief substitution systems.
				
				\subsubsection*{Integer version of the primal LP problem}
				Here, we consider the integer version of the primal LP problem with gainfree Leontief substitution systems~\eqref{eq:IP}.
				We first consider the case where $A$ is an integer matrix, i.e., $A$ is a unit-positive integer matrix.
				In this case, if $\bs{b}$ is not an integer vector, then \eqref{eq:IP} is infeasible, since $A\bs{\primal}$ is an integer vector for any $\bs{\primal} \in \mathbb{Z}_{+}^n$.
				Assume that $\bs{b}$ is an integer vector.
				Then, from Theorem 4.1 in~\cite{JMR92}, the set of feasible solutions of \eqref{eq:IP} is an integer polyhedra.
				This implies that \eqref{eq:IP} can be solved by an algorithm for \eqref{eq:LP}.
				Hence, our algorithm for \eqref{eq:LP} is a combinatorial certifying algorithm for \eqref{eq:IP}.
				
				When $A$ might not be an integer matrix, we show that it is NP-complete to determines the integer feasibility of the primal LP problem with gainfree Leontief substitution systems.
				
				\begin{theorem}
					The integer feasibility of the primal LP problem~\eqref{eq:IP} with gainfree Leontief substitution systems is NP-complete.
				\end{theorem}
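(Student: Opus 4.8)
The plan is to establish membership in NP and then NP-hardness by reduction from a number-theoretic problem. For membership, I would observe that if $A\bs{\primal}=\bs{b}$, $\bs{\primal}\in\mathbb{Z}^n_+$ is feasible with rational data, then it admits a nonnegative integer solution whose bit-size is polynomially bounded in $\size(A)+\size(\bs{b})$ (see, e.g., Corollary~5.8 in~\cite{KoF18}); such a solution is a polynomial-size certificate that can be checked by substitution, so the problem lies in NP.

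For hardness, I would reduce from the integer representability problem (the \emph{coin problem}): given positive integers $a_1,\dots,a_n$ and $t$ encoded in binary, decide whether there exist $f_1,\dots,f_n\in\mathbb{Z}_+$ with $\sum_{i=1}^n a_i f_i=t$; this is well known to be NP-complete. The reduction builds an acyclic instance. Introduce vertices $c_1,\dots,c_n$ and a target vertex $T$; for each $i$ take a source hyperarc $A_i$ with $h(A_i)=c_i$ and $T(A_i)=\emptyset$ (variable $f_i$), and a transport hyperarc $E_i$ with $h(E_i)=T$, $T(E_i)=\{c_i\}$, and $\weight(E_i,c_i)=1/a_i$ (variable $e_i$). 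Set $b(c_i)=0$ for all $i$ and $b(T)=t$, so the constraints are $f_i-(1/a_i)e_i=0$ at each $c_i$ and $\sum_i e_i=t$ at $T$.

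I would then verify the three defining properties. Each column has exactly one positive entry equal to $1$ (the head, at $c_i$ for $A_i$ and at $T$ for $E_i$), so $A$ is Leontief with unit positive entries, while the only negative entries are the $-1/a_i$, making $A$ rational but non-integral, as required by the hypothesis. Since $T$ is never a tail and each $c_i$ is the head only of the tail-free arc $A_i$, the associated hypergraph contains no directed cycle, so the instance is vacuously gainfree; moreover $\bs{b}\ge\bs{0}$. For correctness, integrality at $c_i$ forces $e_i$ to be a multiple of $a_i$, i.e.\ $e_i=a_i f_i$ with $f_i\in\mathbb{Z}_+$, and the equation at $T$ becomes $\sum_i a_i f_i=t$; hence the constructed system has a nonnegative integer solution if and only if $t$ is representable. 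As the numbers $a_i$, $t$ and the weights $1/a_i$ all have polynomial bit-size, the reduction is polynomial.

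The main conceptual point, which also explains why I reduce from an \emph{unbounded} number problem rather than from \textsc{Subset Sum} directly, is that a Leontief equality system with $\bs{b}\ge\bs{0}$ and nonnegative integer variables cannot simultaneously impose a finite upper bound on a variable and let its value be consumed elsewhere: drawing a variable out appears as a negative coefficient, which only relaxes the equation in which it occurs, so the sole integrality structure such systems offer is \emph{divisibility} arising from fractional weights (already visible in \cref{ex:exponential-integer-feasibility-gainfree}), and divisibility yields unbounded multiples rather than $0/1$ choices. This is exactly why the acyclic gadget captures unbounded representability so cleanly. If instead one wants a self-contained hardness proof from \textsc{Subset Sum}, the expected obstacle shifts to the standard positional (base-$\beta$) encoding of a \textsc{Subset Sum} instance into the numbers $a_i$, together with the carry- and multiplicity-control argument needed to force each $f_i\in\{0,1\}$; I would fold that encoding directly into the weights of the same gadget.
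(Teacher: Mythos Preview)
Your proof is correct and uses the same reduction source as the paper (the unbounded subset sum problem, which you call the coin/representability problem), together with the same key observation that the resulting hypergraph is acyclic and hence vacuously gainfree. The paper encodes the single equation $\sum_i a_i x_i=b$ directly as a one-row system with positive entries $a_i$, whereas your two-layer gadget with unit heads and fractional tail weights $1/a_i$ achieves the same effect while respecting the paper's standing convention that positive entries equal one and making explicit that non-integrality of $A$ is the source of hardness; you also supply the NP-membership argument, which the paper leaves implicit.
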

				\begin{proof}
					This follows from the fact that the unbounded subset sum problem, which is NP-complete (see, e.g., \cite{HR96}), can be formulated as the primal LP problem~\eqref{eq:IP} with gainfree Leontief substitution systems.
					The unbounded subset sum problem is, given positive integers $a_1,\dots, a_n$ and $b$, to determine if there exists nonnegative integers $x_1,\dots, x_n$ satisfying $a_1x_1+\dots +a_nx_n =b$.
					Hence, the problem has the form of \eqref{eq:IP}.
					Moreover, the corresponding hypergraph contains no directed cycle, and thus the system is gainfree.
				\end{proof}

				Since the dual LP problem~\eqref{eq:dual-LP} is not a dual problem of the integer version of the primal LP problem with gainfree Leontief substitution systems~\eqref{eq:IP}, it seems difficult to apply our certifying algorithm to \eqref{eq:IP}.
				
				\subsection{Two-phase method}
				\label{subsec:two-phase-does-not-work}
				
				As mentioned in \cref{sec:introduction}, 
				in the two-phase simplex method one transforms the feasibility of an LP problem into an LP problem which always has an optimal solution.
				Hence, it seems that we can use a combinatorial algorithm for the LP problem only certifying for the case where there exists an optimal solution.
				We show that this idea actually works for the primal LP problem~\eqref{eq:LP}, while not for the dual LP problem~\eqref{eq:dual-LP} in what follows.
				
				\subsubsection*{Primal LP problems}
				For the sake of clarity, we rewrite the feasibility of the primal LP problem~\eqref{eq:LP} as the feasibility of the following system: 
				\begin{align}
					\label{eq:linear-system}
					\left\{
					\begin{aligned}
						A\bs{\primal} = \bs{b},\\
						\bs{\primal} \ge \bs{0}.    
					\end{aligned}
					\right.
				\end{align}
				
				The auxiliary LP problem that determines the feasibility of \eqref{eq:linear-system} is as follows: 
				
				\begin{eqnarray}\label{eq:two-phase-LP}
					\begin{array}{cl}
						\rm{minimize}     &\displaystyle \sum_{i=1}^{m}s_i+\sum_{i=1}^{m}t_i\\
						\rm{subject\ to}     &A\bs{\primal}+\bs{s}-\bs{t} = \bs{b}\\
						&\bs{x},\bs{s},\bs{t} \ge \bs{0}.
					\end{array}
				\end{eqnarray}
				The dual LP problem of \eqref{eq:two-phase-LP} is 
				\begin{eqnarray}\label{eq:dual-of-two-phase-LP}
					\begin{array}{cl}
						\rm{maximize}     &\bs{b}^T\bs{\dual}\\
						\rm{subject\ to}     &A^T\bs{\dual} \le \bs{0}\\
						&\ \ \bs{\dual} \le \bs{1}\\
						&-\bs{\dual} \le \bs{1}.
					\end{array}
				\end{eqnarray}
				
				Note that \eqref{eq:linear-system} is feasible if and only if the optimal value of \eqref{eq:two-phase-LP} is zero, and that \eqref{eq:two-phase-LP} has an optimal solution since it is bounded below.
				Since the constraint matrix in \eqref{eq:two-phase-LP} is again gainfree Leontief, the combinatorial algorithm in~\cite{JMR92} solves \eqref{eq:two-phase-LP} and outputs optimal solutions $(\bs{\primal}^*,\bs{s}^*,\bs{t}^*)$ and $\bs{\dual}^*$ of \eqref{eq:two-phase-LP} and its dual \eqref{eq:dual-of-two-phase-LP}, respectively.
				If \eqref{eq:linear-system} is feasible, then $\bs{s}^*=\bs{t}^*=\bs{0}$ and thus, $\bs{\primal}^*$ is a feasible solution of \eqref{eq:linear-system}. 
				If \eqref{eq:linear-system} is infeasible, then $\bs{b}^T\bs{\dual}^*>0$ and $A^T\bs{\dual}\le \bs{0}$, since the optimal value is greater than zero.
				Hence, $\bs{\dual}^*$ is a Farkas' certificate of the infeasibility of \eqref{eq:linear-system}.
				Therefore, we obtain a combinatorial certifying algorithm for the feasibility of the primal LP problem~\eqref{eq:LP}.
				In our combinatorial certifying algorithm in the previous section, the symbol $M$ enables us to directly compute a Farkas' certificate of the infeasibility of \eqref{eq:linear-system}.
				
				\subsubsection*{Dual LP problems}
								For the sake of clarity, we rewrite the feasibility of the dual of the LP problem with a gainfree Leontief substitution system as the feasibility of the following system: 
				\begin{align}
					\label{eq:dual-linear-system-transposed}
					\begin{array}{l}
						A^T\bs{\dual} \le \bs{c}.\\
										\end{array}
									\end{align}
				
				We can determine the feasibility of the system \eqref{eq:dual-linear-system-transposed} by solving an LP problem with auxiliary variables as follows.
				
				\begin{eqnarray}\label{eq:two-phase-dual-LP}
					\begin{array}{cl}
						\rm{maxmize}     &\displaystyle \sum_{i=1}^{m}t_i\\
						\rm{subject\ to}     &A^T\bs{\dual}+\bs{t} \le \bs{c}\\
						&\bs{t} \le \bs{0}\\
						&\bs{\dual} \in \mathbb{R}^m.
					\end{array}
				\end{eqnarray}
				
				Note that \eqref{eq:dual-linear-system-transposed} is feasible if and only if the optimal value of \eqref{eq:two-phase-dual-LP} is zero, and that \eqref{eq:two-phase-LP} has an optimal solution since it is bounded above.
				However, the constraint of the LP problem~\eqref{eq:two-phase-dual-LP} is no more Leontief, and thus combinatorial algorithms for the gainfree Leontief substitution systems cannot be applied to the LP problem~\eqref{eq:two-phase-dual-LP}.
				
				\section{Conclusion}
				\label{sec:conclusion}
				
				We proposed a certifying algorithm for the LP problems with gainfree Leontief substitution systems.
				Our algorithm is combinatorial and runs in strongly polynomial time.
				Since the dual LP problems with gainfree Leontief substitution systems contains the feasibility of unit-positive Horn systems, 
				we resolved the open questions raised in \cite{Gup14}.
				
				An interesting future direction would be to make other non-certifying algorithms certifying.
				A candidate would be to extend our result on unit Horn systems to unit \emph{q-Horn} systems, introduced in~\cite{KiM16}.
				Unit q-Horn systems include not only unit Horn systems but also unit-two-variable-per-inequality (UTVPI) systems, and 
				the feasibility problem of unit q-Horn systems is solvable in polynomial time~\cite{KiM16}.
				Furthermore, a certifying algorithm for the feasibility problem of UTVPI systems is known~\cite{LaM05}. 
				Therefore, giving a certifying algorithm to the feasibility of unit q-Horn systems would be an interesting future work.

				\bibliography{Horn}

\begin{thebibliography}{10}

\bibitem{AdC91}
Ilan Adler and Steven Cosares.
\newblock A strongly polynomial algorithm for a special class of linear
  programs.
\newblock {\em Operations Research}, 39:955--960, 1991.

\bibitem{Bel58}
Richard Bellman.
\newblock On a routing problem.
\newblock {\em Quarterly of applied mathematics}, 16(1):87--90, 1958.

\bibitem{BGM22}
Bart Bogaerts, Stephan Gocht, Ciaran McCreesh, and Jakob Nordstr{\"o}m.
\newblock Certified symmetry and dominance breaking for combinatorial
  optimisation.
\newblock In {\em Proceedings of the 36th AAAI Conference on Artificial
  Intelligence (AAAI'22)}, 2022.

\bibitem{CGS97}
Riccardo Cambini, Giorgio Gallo, and Maria~Grazia Scutell{\`a}.
\newblock Flows on hypergraphs.
\newblock {\em Mathematical Programming}, 78(2):195--217, 1997.

\bibitem{ChS13}
R.~Chandrasekaran and K.~Subramani.
\newblock A combinatorial algorithm for horn programs.
\newblock {\em Discrete Optimization}, 10:85--101, 2013.

\bibitem{CGS17}
Kevin~KH Cheung, Ambros Gleixner, and Daniel~E Steffy.
\newblock Verifying integer programming results.
\newblock In {\em International Conference on Integer Programming and
  Combinatorial Optimization}, pages 148--160. Springer, 2017.

\bibitem{CGS16}
Maria Chudnovsky, Jan Goedgebeur, Oliver Schaudt, and Mingxian Zhong.
\newblock Obstructions for three-coloring graphs with one forbidden induced
  subgraph.
\newblock In {\em Proceedings of the twenty-seventh annual ACM-SIAM symposium
  on Discrete algorithms}, pages 1774--1783. SIAM, 2016.

\bibitem{CDH13}
Derek~G Corneil, Barnaby Dalton, and Michel Habib.
\newblock Ldfs-based certifying algorithm for the minimum path cover problem on
  cocomparability graphs.
\newblock {\em SIAM Journal on Computing}, 42(3):792--807, 2013.

\bibitem{CoA72}
Richard~W. Cottle and Arthur~F. {Veinott, Jr.}
\newblock Polyhedral sets having a least element.
\newblock {\em Mathematical Programming}, 3:238--249, 1972.

\bibitem{Dan55}
George~B. Dantzig.
\newblock Optimal solution of a dynamic leontief model with substitution.
\newblock {\em Econometrica}, 23(3):295--302, 1955.

\bibitem{DFK03}
Marcel Dhiflaoui, Stefan Funke, Carsten Kwappik, Kurt Mehlhorn, Michael Seel,
  Elmar Sch{\"o}mer, Ralph Schulte, and Dennis Weber.
\newblock Certifying and repairing solutions to large lps how good are
  lp-solvers?
\newblock In {\em Proceedings of the fourteenth annual ACM-SIAM symposium on
  Discrete algorithms}, pages 255--256, 2003.

\bibitem{Tar86}
\'{E}va Tardos.
\newblock A strongly polynomial algorithm to solve combinatorial linear
  programs.
\newblock {\em Operations Research}, 34:250--256, 1986.

\bibitem{For56}
Lester~R Ford~Jr.
\newblock Network flow theory.
\newblock Technical report, Rand Corp Santa Monica Ca, 1956.

\bibitem{GeT15}
Loukas Georgiadis and Robert~E Tarjan.
\newblock Dominator tree certification and divergent spanning trees.
\newblock {\em ACM Transactions on Algorithms (TALG)}, 12(1):1--42, 2015.

\bibitem{Glo64}
Fred Glover.
\newblock A bound escalation method for the solution of integer linear
  programs.
\newblock {\em Cahiers du Centre d'Etudes de Recherche Operationelle},
  6(3):131--168, 1964.

\bibitem{Gol95}
Andrew~V Goldberg.
\newblock Scaling algorithms for the shortest paths problem.
\newblock {\em SIAM Journal on Computing}, 24(3):494--504, 1995.

\bibitem{Gup14}
Pratik~Bijaiprakash Gupta.
\newblock A certifying algorithm for {H}orn constraint systems.
\newblock Master's thesis, The University of Texas at Dallas, 2014.

\bibitem{HR96}
Paul Hansen and Jennifer Ryan.
\newblock Testing integer knapsacks for feasibility.
\newblock {\em European Journal of Operational Research}, 88:578--582, 1996.

\bibitem{JMR92}
Robert~G Jeroslow, Kipp Martin, Ronald~L Rardin, and Jinchang Wang.
\newblock Gainfree leontief substitution flow problems.
\newblock {\em Mathematical Programming}, 57(1):375--414, 1992.

\bibitem{KaN09}
Haim Kaplan and Yahav Nussbaum.
\newblock Certifying algorithms for recognizing proper circular-arc graphs and
  unit circular-arc graphs.
\newblock {\em Discrete Applied Mathematics}, 157(15):3216--3230, 2009.

\bibitem{KiM16}
Kei Kimura and Kazuhisa Makino.
\newblock Trichotomy for integer linear systems based on their sign patterns.
\newblock {\em Discrete Applied Mathematics}, 200:67--78, 2016.

\bibitem{KWS19}
Hans Kleine~B{\"u}ning, Piotr Wojciechowski, and K~Subramani.
\newblock New results on cutting plane proofs for {H}orn constraint systems.
\newblock In {\em 39th IARCS Annual Conference on Foundations of Software
  Technology and Theoretical Computer Science (FSTTCS 2019)}. Schloss
  Dagstuhl-Leibniz-Zentrum fuer Informatik, 2019.

\bibitem{KoF18}
Bernhard Korte and Jens Vygen.
\newblock {\em Combinatorial Optimization: Theory and Algorithms}.
\newblock Springer, sixth edition, 2018.

\bibitem{KMM06}
Dieter Kratsch, Ross~M McConnell, Kurt Mehlhorn, and Jeremy~P Spinrad.
\newblock Certifying algorithms for recognizing interval graphs and permutation
  graphs.
\newblock {\em SIAM Journal on Computing}, 36(2):326--353, 2006.

\bibitem{Lag85}
Jeffrey~C Lagarias.
\newblock The computational complexity of simultaneous diophantine
  approximation problems.
\newblock {\em SIAM Journal on Computing}, 14(1):196--209, 1985.

\bibitem{LaM05}
Shuvendu~K Lahiri and Madanlal Musuvathi.
\newblock An efficient decision procedure for utvpi constraints.
\newblock In {\em International Workshop on Frontiers of Combining Systems},
  pages 168--183. Springer, 2005.

\bibitem{MMN11}
Ross~M McConnell, Kurt Mehlhorn, Stefan N{\"a}her, and Pascal Schweitzer.
\newblock Certifying algorithms.
\newblock {\em Computer Science Review}, 5(2):119--161, 2011.

\bibitem{Meg83}
Nimrod Megiddo.
\newblock Towards a genuinely polynomial algorithm for linear programming.
\newblock {\em SIAM Journal on Computing}, 12:347--353, 1983.

\bibitem{MNN99}
Kurt Mehlhorn, Stefan Naher, and Stefan N{\"a}her.
\newblock {\em LEDA: A platform for combinatorial and geometric computing}.
\newblock Cambridge university press, 1999.

\bibitem{MNS17}
Kurt Mehlhorn, Adrian Neumann, and Jens~M Schmidt.
\newblock Certifying 3-edge-connectivity.
\newblock {\em Algorithmica}, 77(2):309--335, 2017.

\bibitem{Min06}
Antoine Min{\'e}.
\newblock The octagon abstract domain.
\newblock {\em Higher-order and symbolic computation}, 19(1):31--100, 2006.

\bibitem{Moo59}
Edward~F Moore.
\newblock The shortest path through a maze.
\newblock In {\em Proc. Int. Symp. Switching Theory, 1959}, pages 285--292,
  1959.

\bibitem{OlV20}
Neil Olver and L\'{a}szl\'{o}~A. V\'{e}gh.
\newblock A simpler and faster strongly polynomial algorithm for generalized
  flow maximization.
\newblock {\em Journal of the ACM}, 67:1--26, 2020.

\bibitem{Sch13}
Jens~M Schmidt.
\newblock Contractions, removals, and certifying 3-connectivity in linear time.
\newblock {\em SIAM Journal on Computing}, 42(2):494--535, 2013.

\bibitem{Sch98}
Alexander Schrijver.
\newblock {\em Theory of linear and integer programming}.
\newblock John Wiley \& Sons, 1998.

\bibitem{SuW17}
K~Subramani and Piotr Wojciechowski.
\newblock A combinatorial certifying algorithm for linear feasibility in utvpi
  constraints.
\newblock {\em Algorithmica}, 78(1):166--208, 2017.

\bibitem{SuW11}
K~Subramani and James Worthington.
\newblock A new algorithm for linear and integer feasibility in horn
  constraints.
\newblock In {\em International Conference on AI and OR Techniques in
  Constriant Programming for Combinatorial Optimization Problems}, pages
  215--229. Springer, 2011.

\bibitem{UvG88}
J.D. Ullman and A.~Van Gelder.
\newblock Efficient test for top-down termination of logical rules.
\newblock {\em Journal of the Association for Computing Machinery},
  35:345--373, 1988.

\bibitem{MaD02}
Hans Van~Maaren and Chuangyin Dang.
\newblock Simplicial pivoting algorithms for a tractable class of integer
  programs.
\newblock {\em Journal of Combinatorial Optimization}, 6(2):133--142, 2002.

\bibitem{Van20}
Robert~J. Vanderbei.
\newblock {\em Linear Programming: Foundations and Extensions}.
\newblock Springer, fifth edition, 2020.

\bibitem{Leo51}
Leontief Wassily~W.
\newblock {\em The structure of American economy, 1919-1939}.
\newblock Oxford University Press, second edition, 1951.

\end{thebibliography}
				\bibliographystyle{plain}

			\end{document}